\newcommand{\eMC}{\textsc{Edge Multicut}\xspace} 
\newcommand{\wMCshort}{\textsc{wMC}\xspace}
\newcommand{\wmctreeslong}{\textsc{Unconstrained Weighted Multicut on Tree}\xspace}
\newcommand{\wmctrees}{\textsc{uwMC-Tree}\xspace}
\newcommand{\wmc}{\textsc{uwMC}\xspace}
\newcommand{\bimctreeslong}{\textsc{Weighted Multicut on Tree}\xspace}
\newcommand{\bimctrees}{\textsc{wMC-Tree}\xspace}
\newcommand{\wdpc}{\textsc{wDPC}\xspace}
\newcommand{\wdpclong}{\textsc{Weighted Digraph Pair Cut}\xspace}
\newtheorem{branching rule}{Branching Rule}[section]
\newcommand{\nph}{\textsf{\textup{NP}}-hard\xspace}
\newcommand{\timewmctrees}{$2^{\OO(\ell^2 \log \ell)} \cdot n^{\OO(1)}$}
\newcommand{\timebimctrees}{$2^{\OO(\kk^4)} \cdot n^{\OO(1)}$}
\newcommand{\timewmctreesdl}{$3^d \cdot 2^{d q} \cdot 2^{\OO(q^2 \log q)} \cdot n^{\OO(1)}$}
\newcommand{\timespecialdpc}{$2^{\OO(\ell^2 \log \ell)} \cdot n^{\OO(1)}$}
\newcommand{\timedpc}{$2^{\OO(\kk^4)} \cdot n^{\OO(1)}$}
\newcommand{\timedpcarcless}{$2^{\OO(\ell^2 \log \ell)} \cdot n^{\OO(1)}$}
\newcommand{\Adpc}{\ensuremath{\mathcal{A}_{\textsf{dpc}}}\xspace}
\newcommand{\Apath}{\ensuremath{\mathcal{A}_{\textsf{path}}}\xspace}
\newcommand{\Astar}{\ensuremath{\mathcal{A}_{\textsf{star}}}\xspace}
\newcommand{\Aunmc}{\ensuremath{\mathcal{A}_{\textsf{un-mc}}}\xspace}
\newcommand{\Tab}{\ensuremath{\textsf{Tab}}\xspace}
\newcommand{\Aheavyv}{\ensuremath{\mathcal{A}^{\textsf{heavy}}_v}\xspace}
\newcommand{\Pairs}{\mathcal P}
\DeclareMathOperator{\light}{\mathsf{light}} 
\newcommand{\wt}{\texttt{wt}}
\newcommand{\kk}{k}
\newcommand{\rr}{\mathbf{r}}
\newcommand{\ww}{\mathbf{w}}
\newcommand{\blue}[1]{{\color{blue}{#1}}}
\newcommand{\pt}[1]{\textcolor{magenta}{#1 - PT}}
\newcommand{\calI}{{\mathcal I}}
\newcommand{\OO}{{\mathcal O}}
\newcommand{\calP}{\mathcal{P}}
\newcommand{\yes}{\textsc{Yes}\xspace}
\newcommand{\no}{\textsc{No}\xspace}
\newtheorem{theorem}{Theorem}[section]
\newtheorem{lemma}[theorem]{Lemma}
\newtheorem{corollary}[theorem]{Corollary}
\newtheorem{proposition}[theorem]{Proposition}
\newtheorem{reduction rule}[theorem]{Reduction Rule}
\newtheorem{marking-scheme}[theorem]{Marking Scheme}
\newtheorem{definition}[theorem]{Definition}
\newtheorem{remark}[theorem]{Remark}
\newcommand{\defproblem}[3]{
  \vspace{1mm}
\noindent\fbox{
  \begin{minipage}{0.96\textwidth}
  \begin{tabular*}{\textwidth}{@{\extracolsep{\fill}}lr} #1 \\ \end{tabular*}
  {\bf{Input:}} #2  \\
  {\bf{Question:}} #3
  \end{minipage}
  }
  \vspace{1mm}
}
\tikzset{%
sibling distance=3em,
level distance=2em,
node distance=1.1em and 0.4em,
edge from parent/.style = {
  draw=black,
  thick,
  solid,
},
every edge/.style = {
  thick,
},
stan/.style = {circle,
  draw=black, align=center, right,
  fill=white,
  align=left,
  thick,
  solid,
  edge from parent/.style={draw=green},
  scale=0.6,
  },
changed/.style = {
  fill=blue!50!white,
},
stay/.style = {
stays
},
del/.style = {
  red,
  fill=white,
  dotted,
},
contract/.style = {
  draw=orange,
  thick,
  arrows = {Stealth[reversed,scale=0.7]-Stealth[reversed,scale=0.7]},
},
dangling/.style={
},
undel/.style = {
  fill=green!70!black,
}
}%
\begin{document}

\title{Parameterized Complexity of Weighted Multicut in Trees}
%
%
\author{
  Esther Galby\textsuperscript{1}\\
  \texttt{esther.galby@cispa.de}
\and
  D\'aniel Marx\textsuperscript{1}\\
  \texttt{marx@cispa.de}
\and
  Philipp Schepper\textsuperscript{1}\\
  \texttt{philipp.schepper@cispa.de}
\and
  Roohani Sharma\textsuperscript{2}\\
  \texttt{rsharma@mpi-inf.mpg.de}
\and
  Prafullkumar Tale\textsuperscript{1}\\
  \texttt{prafullkumar.tale@cispa.de}
}
\date{
  \normalsize
  \textsuperscript{1}
  CISPA Helmholtz Center for Information Security, Germany
  \\
  \textsuperscript{2}
  Max Planck Institute for Informatics, SIC, Germany
}
\maketitle              
\begin{abstract}
  The \eMC problem is a classical cut problem
  where given an undirected graph $G$, a set of pairs of vertices $\mathcal{P}$,
  and a budget $\kk$,
  the goal is to determine if there is a set $S$ of at most $\kk$ edges
  such that for each $(s,t) \in \mathcal{P}$, $G-S$ has no path from $s$ to $t$.
  \eMC has been relatively recently shown to be fixed-parameter tractable (\FPT),
  parameterized by $\kk$, by Marx and Razgon\ [SICOMP 2014],
  and independently by Bousquet et al.\ [SICOMP 2018].
  In the weighted version of the problem, called \textsc{Weighted Edge Multicut}
  one is additionally given a weight function $\wt : E(G) \to \mathbb{N}$ and a weight bound $\ww$,
  and the goal is to determine if there is a solution of size at most $\kk$ and weight at most $\ww$.
  Both the \FPT\ algorithms for \eMC by Marx et al.\ and Bousquet et al.\ fail to generalize to the weighted setting.
  In fact, the weighted problem is non-trivial even on trees
  and determining whether {\sc Weighted Edge Multicut} on trees is \FPT\ %
  was explicitly posed as an open problem by Bousquet et al.\ [STACS 2009].
  In this article, we answer this question positively
  by designing an algorithm
  which uses a very recent result by Kim et al.\ [STOC 2022] about directed flow augmentation as subroutine.

  We also study a variant of this problem
  where there is no bound on the size of the solution,
  but the parameter is a structural property of the input, for example,
  the number of leaves of the tree.
  We strengthen our results by stating them for the more general vertex deletion version.


\end{abstract}

\section{Introduction}
\eMC is a generalization of the classical {\sc $(s,t)$-Cut} problem
where given a graph $G$,
a set of terminal pairs $\mathcal{P}=\{(s_1,t_1), \ldots, (s_p,t_p)\}$,
and an integer $\kk$,
the goal is to determine if there exists a set of at most $\kk$ edges
whose deletion disconnects $s_i$ from $t_i$, for each $i \in [p]$.
Such a set is called a {\em $\mathcal{P}$-multicut} in $G$.
The case $p=1$ corresponds to the classical \textsc{$(s,t)$-cut} problem.
\eMC is polynomial time solvable for $p \leq 2$~\cite{yannakakis1983cutting}
and is \nph\ even for $p = 3$~\cite{dahlhaus1994complexity}.
From the parameterized complexity point of view,
it was a long-standing open question to determine
if the problem is fixed-parameter tractable (\FPT) parameterized by the solution size.
This question was resolved independently by Marx and Razgon~\cite{marx2014fixed} and Bousquet et al.~\cite{bousquet2018multicut},
proving that the problem is \FPT.
Both algorithms extensively use the notion of important separators,
a technique introduced earlier by Marx~\cite{marx2006parameterized}.
Bousquet et al.~\cite{bousquet2018multicut} additionally use several problem-specific observations
and arguments about the structure of multicut instances,
while Marx and Razgon~\cite{marx2014fixed} formulated the technique of random sampling of important separators,
which found further applications for many other problems
\cite{%
chitnis2017list,chitnis2015directed,chitnis2013fixed,%
kratsch2015fixed,%
lokshtanov2013clustering,lokshtanov2012parameterized%
}.


\paragraph*{Weighted Multicut.}
One drawback of the algorithms using important separators is
that they are essentially based on a replacement argument:
if a subset $X$ of the solution satisfies some property,
then this technique allows us to find a set $X'$
such that $X$ can be replaced with $X'$,
thereby making progress towards fully identifying a solution.
This local replacement argument inherently fails
if the overall solution is also required to satisfy additional properties,
such as minimizing the overall weight,
since replacing $X$ with $X'$ may violate these additional constraints.
Thus, the ideas from the algorithms of Marx and Razgon~\cite{marx2014fixed} and Bousquet et al.~\cite{bousquet2018multicut}
fail to generalize to the edge deletion version of {\sc Weighted Multicut} (\wMCshort)
where we are,  additionally, given a weight function $\wt: E(G) \to \mathbb{N}$
and an integer $\ww$,
and the goal is to determine if there exists a $\mathcal{P}$-multicut in $G$
of size at most $\kk$ and weight at most $\ww$.

\paragraph*{(Weighted) Multicut on Trees.}
\eMC
remains \nph\ on trees~\cite{DBLP:journals/algorithmica/GargVY97}
but can be solved in $\OO(2^{\kk} \cdot n)$-time,  where $n$ is the number of vertices in the input tree,
using an easy branching algorithm~\cite{guo2005fixed}:
for the ``deepest'' $(s_i,t_i)$-path
branch on the deletion of the two edges on this path
which are incident to the lowest common ancestor of $s_i$ and $t_i$.
A series of work shows improvement over this simple running time~\cite{chen2012multicut,kanj2014algorithms}, 
and also the problem 
admits a
polynomial kernel~\cite{BousquetDTY09,chen2012multicut}.
Since the algorithmic approaches for \eMC on trees
are based on greedily finding partial solutions,
they do not generalize to the weighted setting.
 In fact, the question whether \wMCshort on trees is \FPT\ (parameterized by the solution size),
 was explicitly posed as an open problem by Bousquet et al.~\cite{BousquetDTY09}.
 In this article, we answer this question in the positive.


\paragraph*{Flow Augmentation.}
As mentioned earlier, most of the available techniques used to design \FPT\ algorithms,
especially for cut problems, do not work in the weighted setting.
Kim et al. \cite{kim2021directed} recently developed the technique of flow augmentation in directed graphs.
This technique offers a new perspective to design \FPT\ algorithms for cut problems
and positively settles the parameterized complexity of some long standing open problems,
such as {\sc Weighted $(s,t)$-Cut}, {\sc Weighted Directed Feedback Vertex Set} and {\sc Weighted Digraph Pair Cut}.

Our main goal is to use
this technique for the underlying core difficulty
in \wMCshort on trees.
More precisely, we do not use the directed flow augmentation technique as such
but we crucially use the \FPT\ algorithm for \wdpclong\ (\wdpc)
which is one important example of the use of this technique.
The \wdpc\ problem is defined as follows \cite{kim2021directed,KratschW20}:
given a \emph{directed} graph $G$,
a source vertex $\rr \in V(G)$,
terminal pairs $\mathcal{P}=\{(s_1,t_1), \ldots, (s_p,t_p)\}$,
a weight function $\wt : E(G) \to \mathbb{N}$,
a positive integer $\kk$,
the goal is to determine if there exists a set $S$ of at most $\kk$ arcs of $G$
such that $\wt(S)$ is minimum%
\footnote{Though the formal description of the problem in~\cite{kim2021directed} asks
for a solution $S$ with $\wt(S) \leq \ww$, the authors remark that the algorithm in fact finds a minimum weight solution.},
and for each $i \in [p]$,
if $G-S$ has a path from $\rr$ to $s_i$,
then $G-S$ has no path from $\rr$ to $t_i$. Such a set is called a {\em $\mathcal{P}$-dpc} with respect to $\rr$ in $G$.
Kim et al.~\cite[Section~6.1ff]{kim2021directed} showed that \wdpc\ %
can be solved in randomized \timedpc-time.
The randomized running time of this algorithm
is an artifact of the use of the directed flow augmentation procedure
which is randomized.
Apart from this step, all the other steps of the algorithm are deterministic.
Our basic observation is that the algorithm for \wdpc\ can be used to solve a
non-trivial base case of \wMCshort\ in trees:
if there is a vertex $\rr \in V(T)$ such that
all the terminal pair paths of $\mathcal{P}$ pass through $\rr$,
then $S$ is a $\mathcal{P}$-multicut of $T$ if and only if
for all $(s,t)\in \Pairs$,
$S$ intersects the $(\rr,s)$-path or the $(\rr,t)$-path.
This is equivalent to saying that
$S$ is a $\mathcal{P}$-dpc for $T$
(in \wdpc we interpret each edge of $T$ to be directed away from~$\rr$).%
\footnote{
When dealing with undirected graphs,
the flow augmentation restricted to undirected graphs
given by Kim et al.~\cite{kim2021solving} may suffice
to solve \wdpc\ on undirected graphs.
As this problem
is not mentioned explicitly in~\cite{kim2021solving},
we stick to the directed setting.
}



\paragraph*{Edge Deletion vs.~Vertex Deletion.}
In the weighted setting,
the edge deletion version of \wMCshort (on trees)
reduces to its vertex deletion version (on trees),
by subdividing each edge and assigning the weight of the original edge to the newly added vertex corresponding to the edge,
and by setting the weights of the original vertices to $\infty$ (or larger than the weight budget parameter).
Note that such a reduction does not work in the unweighted setting
as the vertex deletion version of {\sc Multicut} in trees
is polynomial time solvable~\cite{DBLP:books/sp/CyganFKLMPPS15}.

\paragraph*{Main Result.}
From now on we only study the vertex deletion version of \wMCshort on trees
which, as mentioned above, is more general than the edge deletion version.
It is formally defined below.


\defproblem{\bimctreeslong\ (\bimctrees)}
{A tree $T$, a collection of terminal pairs $\mathcal{P} \subseteq V(T) \times V(T)$,
a vertex weight function $\wt : V(T) \to \mathbb{N}$,
and positive integers $\ww$ and $\kk$.}
{Does there exist $S \subseteq V(T)$
such that $|S| \leq \kk$, $\wt(S) \leq \ww$,
and $S$ intersects the unique $(s,t)$-path in $T$, for each $(s, t) \in \calP$?}

We set $\wt(S) = \sum_{v \in S} \wt(v)$ for the ease of notation.
We use the \FPT\ algorithm for \wdpc\ (restricted to trees)~\cite[Section~6.1ff]{kim2021directed}
as a subroutine to prove our main result,
namely that \bimctrees is \FPT.

\begin{theorem}
  \label{thm:bi-mc-trees-k}
  \bimctrees can be solved in randomized \timebimctrees\ time.
\end{theorem}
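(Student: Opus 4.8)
The plan is to reduce \bimctrees, via bounded branching, to polynomially-sized instances of a tractable base case, and to solve that base case with the \FPT\ algorithm for \wdpc.

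\medskip
\noindent\emph{The base case.} Consider first the special case highlighted in the introduction: there is a vertex $r\in V(T)$ lying on the unique $(s,t)$-path for every $(s,t)\in\mathcal{P}$. Root $T$ at $r$ and build a digraph $D$ by the standard vertex-splitting: replace each $u\in V(T)$ by an arc $e_u=(u^{in},u^{out})$ of weight $\wt(u)$, and replace each tree-edge $\{p,u\}$ (with $p$ the parent of $u$) by an arc $(p^{out},u^{in})$ of weight $\ww+1$. Then $D$ is an out-tree rooted at $r^{in}$, so for every arc set $Z$ avoiding the weight-$(\ww+1)$ arcs, $r^{in}$ reaches $u^{out}$ in $D-Z$ if and only if $Z$ contains no arc $e_w$ with $w$ on the $r$–$u$ path of $T$. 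Since the $s$–$t$ path in $T$ equals the union of the $r$–$s$ path and the $r$–$t$ path, a set $S\subseteq V(T)$ meets the $s$–$t$ path if and only if it meets one of these two, which is exactly the \wdpc\ constraint for the pair $(s^{out},t^{out})$ with respect to the source $r^{in}$ (a weight-$(\ww+1)$ arc can never lie in a solution of weight at most $\ww$). Hence this special case is equivalent to a \wdpc\ instance — indeed a \wdpctrees\ instance — with the same bounds $\kk$ and $\ww$, which the algorithm of Kim et al.\ solves in randomized \timedpc\ time.

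\medskip
\noindent\emph{Reducing the general case.} Root $T$ at an arbitrary vertex and run a branching recursion. At each step, among the pairs not yet resolved pick one whose lowest common ancestor $v:=\lca(s,t)$ is as deep as possible, and let $T_v$ be the subtree of $T$ rooted at $v$. By the choice of $v$, no unresolved pair has both endpoints in a single child-subtree of $v$; consequently every unresolved pair with both endpoints in $T_v$ has its path through $v$ and can be separated only by deleting a vertex of $T_v$, so any solution deletes at least one vertex of $T_v$. We branch on the number $b\ge 1$ of solution vertices that lie in $T_v$. For a fixed $b$, separating all these ``local'' pairs inside $T_v$ — while handling the ``boundary'' pairs that have exactly one endpoint in $T_v$, each of which may be cut inside $T_v$ or strictly above $v$ — is again a \wdpctrees-type task on $T_v$ rooted at $v$, which we solve using the base-case routine with parameter $b$. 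We then contract $T_v$ to a single undeletable vertex $v$, decrease the budget by $b$ and the weight bound by the weight spent inside $T_v$, mark the newly resolved pairs, and recurse; when no unresolved pair survives, the instance is satisfied, and we accept iff some branch does.

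\medskip
\noindent\emph{Running time.} Every peeling step forces at least one solution vertex into $T_v$, so the recursion has depth at most $\kk$; along any root-to-leaf path the chosen values $b$ are positive integers summing to at most $\kk$, so the recursion tree has $2^{\OO(\kk)}$ leaves. Each node performs $\OO(1)$ calls to the base-case routine with parameter at most $\kk$ (each in randomized \timedpc\ time) plus polynomial-time tree surgery, giving total randomized running time $2^{\OO(\kk)}$ times \timedpc, i.e.\ \timebimctrees.

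\medskip
\noindent\emph{The main obstacle.} The delicate point is the interaction between the sub-solution chosen inside $T_v$ and the rest of the solution: a boundary pair may legitimately be cut either inside $T_v$ or above $v$, so committing to a cost-$b$ sub-solution of $T_v$ before processing the remainder must neither lose optimal global solutions nor double-count budget or weight. Making this precise — identifying exactly what information about the $T_v$-sub-solution to pass to the parent call (which boundary endpoints it disconnects from $v$, and at what weight for each admissible size), proving that some branch assembles an optimal global solution, and checking that contracting $T_v$ preserves the deepest-lca structure needed for later steps — is the technical heart of the argument; the base-case reduction and Kim et al.'s theorem then supply the algorithmic engine.
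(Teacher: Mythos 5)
Your base case is sound and matches the paper's: when every terminal path passes through a root $\rr$, the instance is exactly a \wdpc instance on an out-tree (the paper pushes $\wt(u)$ onto the unique in-arc of $u$ rather than splitting vertices, but the two encodings are equivalent). The gap is in the reduction to this base case, and it is precisely the issue you flag in your last paragraph but do not resolve: the \emph{interface} between the sub-solution inside $T_v$ and the rest of the solution. A boundary pair $(s,t)$ with $s\in V(T_v)$ and $t\notin V(T_v)$ may be cut either on the $s$--$v$ portion or above $v$, and two sub-solutions of $T_v$ of the same size $b$ and the same weight can cut \emph{different subsets} of boundary pairs, leading to different residual requirements above $v$. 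The information that must be passed upward is therefore a subset of the boundary pairs, of which there can be $2^{\Omega(n)}$ many --- the number of boundary pairs of $T_v$ is not bounded by any function of $\kk$. Consequently ``solve $T_v$ with budget $b$, contract it to an undeletable vertex, mark the resolved pairs, recurse'' is not a well-defined subproblem, and your claimed $\OO(1)$ calls per recursion node (hence the $2^{\OO(\kk)}$ overhead) does not follow. This is not a technicality one can wave away: a star rooted at $v$ with many leaf children $c_i$, each carrying a boundary pair $(c_i,t_i)$, already exhibits a genuine trade-off between weight spent inside and outside $T_v$ that depends on which $c_i$ are deleted.

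The paper circumvents exactly this obstacle by never collapsing a whole deepest-lca subtree. It first computes a minimum-size \emph{unweighted} $\calP$-multicut, takes its \texttt{lca}-closure $X$ (so $|X|\le 2\kk$, which serves as the branching measure), and branches on whether the solution meets the path $P_{y,x}$ between a deepest $x\in X$ and its closest ancestor $y\in X$. In the affirmative branch it guesses only the \emph{size} $i$ of the solution inside $T^{\dag}_{z,x}$, where $z$ is the topmost solution vertex on $P_{y,x}$; crucially, the region whose internal pairs are forgotten is $T^{\dag}_{z,x}$, which attaches to the rest of $T$ only through $z\in S$, so \emph{every} boundary pair of the forgotten region is automatically cut by $z$ and no interface information is needed. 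Since $z$ is not known in advance, the path $P_{y,x}$ is kept intact in the residual instance and every candidate $v\in V(P_{y,x})$ gets its weight increased by $\Adpc(T^{\dag}_{v,x},\ldots,i)$ (all pairs inside $T^{\dag}_{v,x}$ pass through $x$ by the lca-closure property, so this is a valid \wdpc call), deferring the choice of $z$ to the residual instance. To repair your argument you would need an analogous device --- e.g.\ restricting the collapsed region to hang below a guessed solution vertex --- rather than contracting $T_v$ wholesale.
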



\paragraph*{Structural Parameterizations.}
In scenarios where the size of the solution is large, it might be desired to drop the constraint on the size of the solution altogether, and seek to
parameterize the problem with some structural parameter of the input. In this setting,
we first consider the problem parameterized by the number of leaves of the tree and then extend this result to a more general parameter that takes into account the number of requests (terminal pair paths) passing through a vertex.
Technically,
we solve a different problem in this setting,
where we only have a uni-objective function
seeking to minimize the weight of the solution
(in contrast to the bi-objective function in the case of \bimctrees).
This problem is formally defined below.


\defproblem{\wmctreeslong\ (\wmctrees)}
{A tree $T$,
a collection of terminal pairs $\mathcal{P} \subseteq V(T) \times V(T)$,
a vertex weight function $\wt : V(T) \to \mathbb{N}$
and a positive integer $\ww$.}
{Does there exist $S \subseteq V(T)$
such that $\wt(S) \leq \ww$ and
$S$ intersects the unique $(s,t)$-path in $T$, for each $(s, t) \in \calP$?}


\wmc\ is another generalization of the vertex deletion variant of {\sc Multicut}.
The former problem has been studied on trees in the parameterized complexity setting
with respect to certain structural parameters.
In particular, Guo et al.~\cite[Theorem~9]{DBLP:journals/jda/GuoN06}
showed that \wmctrees\ is \FPT\ when the parameter is
the maximum number of $(s,t)$-paths
that pass through any vertex of the input.
We call this parameter the {\em request degree} $d$ of an instance.
Guo et al.~\cite{DBLP:journals/jda/GuoN06} gave an algorithm for
\wmctrees\ that runs in time $\OO(3^d \cdot n)$.

We first study \wmctrees\ when the parameter is the number of leaves of the tree.
The problem is polynomial time solvable on paths (\cref{lem:wt-mc-path})
but becomes \nph\ on (general) trees.
Thus, the number of leaves appears to be a natural parameter
which could explain the contrast between the above two results.
Formally, we prove the following theorem.

\begin{theorem}
  \label{thm:wt-mc-trees-l}
  \wmctrees can be solved in \timewmctrees\ time, where $\ell$ is the number of leaves in the input tree.
\end{theorem}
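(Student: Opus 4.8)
The plan is to branch over the behaviour of a hypothetical solution $S$ on the \emph{skeleton} of $T$ so that every branch reduces to subproblems of two tractable kinds: instances supported on a single path, handled in polynomial time by \cref{lem:wt-mc-path}, and ``spider'' instances in which all terminal-pair paths pass through one fixed vertex $\rr$. As observed in the introduction, a spider instance is precisely an instance of \wdpc on a tree: if we orient every edge of $T$ away from $\rr$, a set is a multicut for such a group of pairs if and only if it is a $\calP$-dpc with respect to $\rr$. The engine of the algorithm will therefore be a solver for \wdpc on trees parameterized by the number of leaves rather than by the solution size, which I come back to at the end.

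\emph{Fixing the skeleton.} Root $T$ at a leaf. Since $T$ has $\ell$ leaves it has fewer than $\ell$ vertices of degree at least $3$ (the \emph{branching vertices}), and suppressing all degree-$2$ vertices yields the topological tree $T^{*}$ with $\OO(\ell)$ vertices and $\OO(\ell)$ edges, each edge $e$ of $T^{*}$ corresponding to a path $P_e$ in $T$. First I would guess $S\cap V(T^{*})$ --- the $2^{\OO(\ell)}$ possibilities for which leaves and branching vertices $S$ uses --- delete these vertices, and discard every pair whose path met a deleted vertex; in the remaining instance $S$ avoids all branching vertices, so $S$ lies entirely inside the interiors of the $\OO(\ell)$ topological paths.

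\emph{Guessing the path pattern.} Fix a topological path $P_e$ with endpoints $a,b$ in $T^{*}$. A pair interacting with $P_e$ either traverses it entirely, in which case \emph{any} vertex of $S$ on $P_e$ cuts it, or has an endpoint strictly inside $P_e$; for a pair that leaves $P_e$ towards $a$, the only relevant feature of $S\cap P_e$ is whether $S$ meets the sub-path from that endpoint to $a$, so the bottleneck among such pairs is the one whose interior endpoint lies closest to $a$ (and symmetrically for $b$). I would therefore guess, for each of the $\OO(\ell)$ topological paths, a bounded number of ``cut markers'' drawn from a set of $\OO(\ell)$ relevant positions (endpoints of pairs, branching vertices, and minimum-weight vertices of the relevant sub-paths); this costs $\ell^{\OO(\ell)}$ in total. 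Each guess resolves the traversing pairs and the purely vertical pairs; a pair with both endpoints inside one $P_e$ becomes a pure path instance on $P_e$ with the inherited weights and is handled by \cref{lem:wt-mc-path}; and every surviving pair reduces to the requirement that $S$ meet one of the two root-to-endpoint paths of its lowest common ancestor, which is a branching vertex. Grouping the surviving pairs by that ancestor yields $\OO(\ell)$ instances of \wdpc on trees. The point I would have to verify carefully is that, once the path pattern is fixed, these $\OO(\ell)$ instances no longer share any ``free'' vertex, so their optimal weights simply add --- equivalently, one folds the entire guessing phase into a dynamic program over $T^{*}$ that calls the \wdpctrees solver at each branching vertex.

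\emph{Solving \wdpc on trees parameterized by the number of leaves.} This is where I expect the real difficulty, since there is no bound on $|S|$ and the $2^{\OO(\kk^4)}$-time algorithm of Kim et al.~\cite{kim2021directed} cannot be used as a black box. I would establish, as a separate lemma, that \wdpc on a tree with $\ell$ leaves can be solved in \timespecialdpc\ time, by running the directed flow-augmentation procedure ``arm by arm'': on each of the $\OO(\ell)$ arms one guesses how an optimal deletion set together with a suitable augmenting flow crosses that arm, and then one glues the $\OO(\ell)$ partial certificates together (pairs living inside a single arm being absorbed by the same path-covering argument as above, and the quadratic exponent coming from the coordination across the $\OO(\ell)$ arms). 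Plugging this into the branching above, each of the $\ell^{\OO(\ell)}$ branches is processed in \timespecialdpc\ time, and hence \wmctrees is solved within the claimed bound \timewmctrees.
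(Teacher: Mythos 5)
There is a genuine gap, and it sits exactly where you locate ``the real difficulty'': the unproven lemma that \wdpc on a tree with $\ell$ leaves can be solved in \timespecialdpc\ time by running directed flow augmentation ``arm by arm''. Nothing in \cite{kim2021directed} supports this. Flow augmentation is a randomized procedure whose cost is exponential in the \emph{solution size} $\kk$ (hence the $2^{\OO(\kk^4)}$ bound for general \wdpc), and in the unconstrained setting of \wmctrees there is no bound on $|S|$ in terms of $\ell$; this is precisely the obstacle the paper points out in its conclusion when discussing why flow augmentation seems unusable for structural parameters. Moreover, the theorem you are proving yields a \emph{deterministic} algorithm, which a flow-augmentation-based subroutine would not give. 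The paper never needs a general ``\wdpc on trees parameterized by leaves'' solver: by always branching on the path between a \emph{furthest} branching vertex $x$ and its closest branching ancestor $y$, it arranges that every subinstance it must solve exactly lives on a subdivided star with all terminal paths through the center. Such instances coincide with the \emph{arcless} instances of~\cite[Lemma~6.12]{kim2021directed}, which are solved deterministically in \timespecialdpc\ time with no flow augmentation at all (\cref{lem:wt-mc-sub-star} and \cref{cor:min-wt-mc-sub-star}). Your ``$\OO(\ell)$ instances of \wdpc grouped by lowest common ancestor'' are on general subtrees, not subdivided stars, so the arcless shortcut does not apply to them and the hard lemma is genuinely needed in your route.

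Two further steps would also need repair even granting that lemma. First, the ``cut markers'' on a topological path are drawn from the set of interior endpoints of terminal pairs on that path, which is $\Theta(n)$ in general, not $\OO(\ell)$; guessing thresholds per path therefore costs $n^{\OO(\ell)}$ (XP, not the claimed $\ell^{\OO(\ell)}$) unless the guessing is replaced by the dynamic program you only gesture at. Second, the additivity you flag yourself is a real problem: the \wdpc instances attached to adjacent branching vertices share the interior of the topological path between them, so a single deleted vertex can serve both, and their optima do not simply add. The paper's sequential branching avoids both issues: it processes one branching vertex at a time, absorbs the optimal subtree cost into the weights of the vertices of $P_{y,x}$ via $\Astar$, and strictly decreases the measure $|V_{\geq 3}(T)|+|V_{=1}(T)| \leq 2\ell$ in each of the two branches, giving a $2^{2\ell+1}$-node branching tree with a \timespecialdpc-time subdivided-star oracle at each node.
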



At the core of the algorithm for Theorem~\ref{thm:wt-mc-trees-l},
we again solve instances of \wdpc\ on trees,
but, in  this case, these instances
have a special structure:
they are subdivided stars
(i.e.\ trees with at most one vertex of degree at least $3$).
We show that these instances do not require the use of the flow augmentation technique.
In fact, these instances correspond to
the arcless instances of \wdpc\ in~\cite[Section~$6.2.2$]{kim2021directed} defined roughly as follows:
the input graph comprises of two designated vertices $s,t$
with internally vertex-disjoint paths from $s$ to $t$,
and the solution picks exactly one arc
from each of these internally vertex-disjoint paths.
Since the arcless instances can be solved faster than the general instances of \wdpc\
and do not require the usage of the flow augmentation technique~\cite[Lemma~6.12]{kim2021directed},
the algorithm for \wmctrees\ is deterministic and has a better running time.

As a final result,
we use the algorithm of Theorem~\ref{thm:wt-mc-trees-l}
as a subroutine to give an \FPT\ algorithm for \wmctrees\ %
that generalizes the result of Guo et al.~\cite[Theorem~9]{DBLP:journals/jda/GuoN06}
and Theorem~\ref{thm:wt-mc-trees-l}.
To do so, we define a new parameter that comprises both the request degree
and the number of leaves of the input instance.
An instance $(T,\mathcal{P},\wt,\ww)$ is {\em $(d,q)$-light}
if the following hold.
Let $Y$ be the set of vertices through which at most $d$ terminal pair paths of $\mathcal{P}$ pass.
Such vertices are called \emph{$d$-light vertices}.
Then for each connected component $C$ of $T-Y$,
the number of leaves of $T[N[C]]$ must be at most $q$
(see \cref{fig:d-ell-light-example} for an illustration of the definition).
We observe in \cref{sec:appendix-extras}
that it is crucial to consider the \emph{neighborhood} of the component, as
the problem is otherwise already \nph for $d=3$ and $q=2$.
We design a
dynamic programming
algorithm that stores partial solutions for every $d$-light vertex
using the algorithm of Theorem~\ref{thm:wt-mc-trees-l} as a subroutine to solve the problem on $(d,q)$-light instances.



\begin{theorem}
  \label{thm:wt-mc-trees-d-l}
  \wmctrees can be solved in \timewmctreesdl\ time on $(d,q)$-light instances.
\end{theorem}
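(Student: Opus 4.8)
The plan is a bottom-up dynamic program over $T$ that treats $d$-light vertices and heavy components of $T-Y$ by two different mechanisms. First a structural remark: for a heavy component $C$ of $T-Y$, since $T$ is a tree and $C$ is connected, every vertex of $N(C)$ has exactly one neighbour in $N[C]$; hence $N(C)\subseteq Y$ is an independent set of $T$ and each of its vertices is a leaf of $T[N[C]]$, so $|N(C)|\le q$. If $Y=\emptyset$ then $T$ is the unique component of $T-Y$ and $T=T[N[V(T)]]$ has at most $q$ leaves, so \cref{thm:wt-mc-trees-l} applies directly; otherwise compute $Y$ in polynomial time (for each vertex, count the requests whose path passes through it) and root $T$ at a vertex of $Y$. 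Contract each heavy component $C$ to a single node $c_C$; the result $T'$ is a tree on $Y\cup\{c_C\}$, rooted at the chosen root, in which no two component-nodes are adjacent, the neighbours of $c_C$ are exactly the at most $q$ portals $N(C)$, and every request whose path is not contained in a single heavy component touches at least one vertex of $Y$.

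\textbf{The dynamic program.} For every $d$-light vertex $v$ I compute a table $A_v[\cdot]$ indexed by \emph{interface states}: such a state records, for each of the at most $d$ requests whose path passes through $v$, a label describing how that request is treated inside the subtree $T_v$ of $T$ rooted at $v$ — roughly ``already hit strictly below $v$'', ``hit at $v$'', or ``to be hit strictly above $v$'', with the convention that a request whose lowest common ancestor is $v$ must fall in one of the first two categories; there are at most $3^d$ such states. The entry $A_v[\sigma]$ is the minimum weight of a set contained in $T_v$ that is consistent with $\sigma$ and hits every request whose whole path lies in $T_v$ (or $\infty$ if none exists). The answer is $\min_\sigma A_{\mathrm{root}}[\sigma]\le\ww$.

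\textbf{Processing nodes.} A $d$-light node $v$ of $T'$ is processed by combining the contributions of its children — a light child via its table $A_{v_i}$, a component child via the folded table produced below — while branching over how the at most $d$ requests through $v$ are distributed among the children's subtrees, the vertex $v$ itself, and ``above''; in the spirit of the request-degree algorithm of Guo and Niedermeier~\cite{DBLP:journals/jda/GuoN06} this is the cheap, light part and costs $2^{\OO(d)}\cdot n^{\OO(1)}$ per node. The crux is a component-node $c_C$ with parent a $d$-light vertex $p$ and children $d$-light vertices $v_1,\dots,v_m$, where $\{p,v_1,\dots,v_m\}=N(C)$ and $m+1\le q$: the part of $T$ attached here is exactly $T_C:=T[N[C]]$, a tree with at most $q$ leaves. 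I enumerate, for each portal $u\in N(C)$ and each of the at most $d$ requests through $u$, a binary flag saying whether its obligation is discharged inside $T_C$ or outside — at most $2^{dq}$ guesses, kept only when consistent with the already-computed tables $A_{v_1},\dots,A_{v_m}$. For each consistent guess the residual task on $T_C$ (hit every request whose truncated path lies in $T_C$ and is flagged ``inside'', together with every request having both endpoints inside $C$, paths truncated at the portals) is an instance of \wmctrees\ on $T_C$; solving it with \cref{thm:wt-mc-trees-l}, binary-searching its weight bound to extract a minimum weight, costs $2^{\OO(q^2\log q)}\cdot n^{\OO(1)}$. Taking, for each interface state of $p$, the minimum over compatible guesses of ``children contributions plus residual weight'' gives the table that $c_C$ passes up to $p$, at cost $2^{dq}\cdot 2^{\OO(q^2\log q)}\cdot n^{\OO(1)}$ per component. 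Multiplying the per-node costs by the $\OO(n)$ nodes of $T'$ yields the claimed running time.

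\textbf{Main obstacle.} The delicate point is the processing of a component-node: defining the local \wmctrees\ instance on $T_C$ and the consistency conditions linking its boundary flags, the children tables $A_{v_i}$, and the interface state reported to $p$, so that (i) every request is accounted for exactly once across the three regimes — both endpoints inside $C$, passing through $C$, one endpoint inside — with path-truncation at portals done correctly; (ii) a global solution restricts to a family of consistent per-node partial solutions, and conversely any consistent family glues to a global solution of the same total weight; and (iii) the ``requests through a portal $u$'' used in the $2^{dq}$-enumeration are precisely those that could legitimately be discharged on either side of $u$, so the enumeration neither misses a case nor overcounts. Once this bookkeeping is pinned down, correctness follows by a routine induction on $T'$ and the running time is as stated.
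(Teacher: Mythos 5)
Your proposal is correct and follows essentially the same route as the paper: the paper's algorithm is exactly a bottom-up DP with a table $\Tab[v,O]$ at each $d$-light vertex (indexed by which of the at most $d$ requests through $v$ must be cut inside $T_v$, giving the $3^d$ factor in the update), a subroutine for each heavy component that enumerates $2^{dq}$ "distributions" of the requests through the component's at most $q$ light portals, a projection $\tau_u$ that truncates terminal pairs at the portals, and a call to the algorithm of Theorem~\ref{thm:wt-mc-trees-l} on the residual component instance with at most $q$ leaves. The "bookkeeping" you defer (consistency of flags, truncation, and the two-directional gluing argument) is precisely what the paper's two correctness lemmas carry out, and it goes through as you anticipate.
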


Observe that an instance with
a tree on $\ell$ leaves
is a $(0,\ell)$-light instance,
and an instance with
the
request degree at most $d$
is a $(d,0)$-light instance.
Thus, \cref{thm:wt-mc-trees-d-l} implies
\cref{thm:wt-mc-trees-l} and~Theorem~9 in~\cite{DBLP:journals/jda/GuoN06},
up to the polynomial factors in the running time.




\paragraph*{\textbf{Our Methods.}}
Our algorithms for Theorems~\ref{thm:bi-mc-trees-k} and~\ref{thm:wt-mc-trees-l},
are crucially based on the observation mentioned earlier:
if every terminal path goes through a root $\rr$,
then the problem reduces to \wdpc.
In the vertex deletion version,
the \emph{vertex} $\mathcal{P}$-multicut in a tree
can be found using the algorithm for \wdpc,
by assigning the weight of a vertex to the unique edge
connecting it to its parent in $T$.
The general idea for both our algorithms is to design a branching algorithm
that effectively solves instances of the above-mentioned type
to reduce the measure in each branch.
Let $T$ be a rooted tree.
The goal is to
identify two vertices $x,y \in V(T)$ where $x$ is a descendant of $y$,
and branch on the possibility of a hypothetical solution
intersecting the $(y,x)$-path. 
If the solution does not intersect the $(y,x)$-path,
then contracting the edges of the $(y,x)$-path and making the resulting vertex undeletable, is a safe operation.
If the solution intersects the $(y,x)$-path,
then for each vertex $v$ on the $(y,x)$-path,
we increase the weight of $v$ by adding to it
the minimum weight of a solution in $T_v -\{v\}$ (where $T_v$ is the subtree of $T$ rooted at $V$),
and then forget about the terminal pair paths in $T_v - \{v\}$.
To update the weight of $v$,
one therefore needs to find a minimum weight solution in $T_v -\{v\}$.
For this reason, we choose the vertices $x,y$ so that
the instance restricted to $T_v - \{v\}$ can be solved using the algorithm for \wdpc.

If $x,y$ are vertices of degree at least $3$ (branching vertices) in $T$,
then contracting the $(y,x)$-path decreases the number of branching vertices in the resulting instance.
This choice of $x,y$ allows to design a branching algorithm
where the measure is the number of branching vertices,
and thus the number of leaves (Theorem~\ref{thm:wt-mc-trees-l}).
If $x,y$ are vertices of a minimum-size (unweighted) $\mathcal{P}$-multicut (which can be found in polynomial time),
then contracting the $(y,x)$-path 
decreases the size of a $\mathcal{P}$-multicut in the resulting instance.
This choice of $x,y$ allows the design of a branching algorithm
parameterized by the solution size (Theorem~\ref{thm:bi-mc-trees-k}).
Additionally,
if we choose $x$ to be the furthest branching vertex in $T$ (resp.~furthest vertex of $X$) from the root
and $y$ to be its unique closest ancestor that is a branching vertex (resp.~in $X$),
then for each vertex $v$ on the $(y,x)$-path,
the instance restricted on $T_v - \{v\}$ can indeed be solved using the algorithm for \wdpc.

\paragraph*{\textbf{Organization.}}
In Section~\ref{sec:preliminariesfull} we define some basic notation.
In Section~\ref{sec:bi-mc-k} we prove Theorem~\ref{thm:bi-mc-trees-k},
in Section~\ref{sec:param-l} we prove Theorem~\ref{thm:wt-mc-trees-l}, and
in Section~\ref{sec:combines-dl} we prove Theorem~\ref{thm:wt-mc-trees-d-l}.
We finally conclude in Section~\ref{sec:conclusion}.


\section{Notation and Preliminaries}
\label{sec:preliminariesfull}

For a positive integer $n$, we denote the set $\{1, 2, \dots, n\}$ by $[n]$.
We use $\mathbb{N}$ to denote the set of all non-negative integers.
Given a function $f : X \to Z$ and $Y \subseteq X$,
$f|_{Y}$ denotes the function $f$ restricted to $Y$.

\paragraph*{Graph Theory.}
For a (di-)graph $G$, $V(G)$ and $E(G)$
denote the set of vertices and edges (arcs) of $G$, respectively.
For an undirected graph $G$ and any $v \in V(G)$,
$N_G(v) = \{u: (u,v) \in E(G)\}$ denotes the (open) neighbourhood of $v$,
and $N_G[v]=N_G (v) \cup \{v\}$ denotes the closed neighbourhood of $v$.
The degree of $v$ is $|N_G(v)|$.
For a digraph $G$ and $v \in V(G)$,
the {\em in-degree} of $v$ is the number of vertices $u$ such that $(u,v) \in E(G)$.
When the graph $G$ is clear from the context we omit subscript $G$.
For any $u,v \in V(G)$,
a $(u,v)$-path in $G$ denotes a path from $u$ to $v$ in $G$.
For any $S \subseteq V(G)$, $G[S]$ denotes the graph $G$ induced on $S$.
We say $G' \subseteq G$ if $G'$ is a subgraph of $G$.
For any further notation from basic graph theory, we refer the reader to~\cite{diestel2005graph}.


\begin{figure}[t]
	\begin{center}
	\begin{tikzpicture}[scale=0.5]
		\draw[thick]
			(0, 0) node[anchor=south] {$\rr$} --
			(-2, -1) node[anchor=east] {$u$} --
			(-3, -2) --
			(-3, -3) --
			(-4, -4);

		\draw[thick]
			(0, 0) --
			(2, -1) node[anchor=west] {$v$} --
			(3, -2) --
			(3, -3) node[anchor=west] {$x$}--
			(4, -4);

		\draw[thick]
			(2, -1) -- (1, -2) -- (1, -3);

		\draw[draw=black, thick, dotted] (-1.5, -0.5) rectangle (-4.5,-4.75);
		\draw[draw=black, thick, dotted] (-1.75, -1.5) rectangle (-4.25,-4.5);

		\draw (-4,-1) node {$T_u$};
		\draw (-3.75,-2) node {$T^{\dag}_u$};

		\draw (4,-1) node {$T_{v,x}$};
		\draw (3.75,-2) node {$T^{\dag}_{v,x}$};

		\draw[draw=black, thick, dotted] (1.5, -0.5) rectangle (4.65,-4.75);
		\draw[draw=black, thick, dotted] (1.75, -1.5) rectangle (4.4,-4.5);

		\draw[thick]
			(3, -3) -- (2, -4);

		\draw[thick]
			(-3, -3) -- (-2, -4);

		\filldraw
			[black] (0, 0) circle [radius=3pt]
			[black] (-2, -1) circle [radius=3pt]
			[black] (-3, -2) circle [radius=3pt]
			[black] (-3, -3) circle [radius=3pt]
			[black] (-4, -4) circle [radius=3pt]
			[black] (2, -4) circle [radius=3pt]
			[black] (-2, -4) circle [radius=3pt]
			[black] (1, -2) circle [radius=3pt]
			[black] (1, -3) circle [radius=3pt];

		\filldraw
			[black] (2, -1) circle [radius=3pt]
			[black] (3, -2) circle [radius=3pt]
			[black] (3, -3) circle [radius=3pt]
			[black] (4, -4) circle [radius=3pt];

	\end{tikzpicture}
	\end{center}
	\caption{
	Tree rooted at $\rr$ and an illustration of $T_u, T^{\dag}_u, T_{v,x}, T^{\dag}_{v,x}$.
	\label{fig:rooted-and-dagged-tree}}
\end{figure}

\paragraph*{Contraction.}
For any edge $e =(u,v)$ of $G$,
$G/e$ represents the graph $G$ obtained after contracting the edge $e$,
where the contraction of $e$ is defined as follows:
delete $u,v$ from $G$ and add a new vertex say $x_{uv}$
that is adjacent to the all the vertices
in $N[u]\cup N[v]\setminus\{u,v\}$,
that is all vertices
that were either adjacent to $u$ or $v$ or both.
We sometime also say that the edge $e$ is contracted onto the vertex $x_{uv}$.
For any $F \subseteq E(G)$,
$G/F$ denotes the graph obtained by contracting the edges of $F$
(one after the other in no specified order).
Formally speaking $G/F$ corresponds to a unique map
$\psi : V(G) \to V(G\setminus F)$,
such that
for each $u \in V(G / F), G[\psi^{-1}(u)]$ is connected.
Let $\mathcal{P} \subseteq V(G) \times V(G)$ be a set of pairs of vertices,
then $\mathcal{P}/F$ is obtained from $\mathcal{P}$
by replacing each $(u,v) \in \mathcal{P}$ by $(\psi(u),\psi(v))$.
Given $\mathcal{P} \subseteq V(G) \times V(G)$
and an induced subgraph $G'$ of $G$,
$\mathcal{P}|_{G'} \subseteq \mathcal{P}$ such that if $(u,v) \in \mathcal{P}$
then $(u,v) \in \mathcal{P}|_{G'}$ if and only if $u,v \in V(G')$.

\paragraph*{Terminal Pairs and Terminal Pair Paths.}
Recall that the instances of all our problems contain a set
$\mathcal{P} \subseteq V(T) \times V(T)$ where $T$ is a tree.
We interchangeably refer to a pair $ (s,t) \in \mathcal{P}$
as the terminal pair $(s,t)$ and as the unique $(s,t)$-path between in $T$.

\paragraph*{Trees.}
A {\em tree} $T$ is a connected acyclic graph.
A {\em subdivided star} is a tree with at most one vertex of degree at least $3$
(in other words, it is a tree obtained by repeatedly sub-dividing the edges of a star graph).
For any $x,y \in V(T)$, $P_{x,y}$ denotes the unique $(x,y)$-path in $T$.
Note that the vertices of such a path $P_{x,y}$ are ordered
starting from $x$ thus they have a natural order defined on them.
Let $T$ be a tree rooted at a vertex $\rr$.
A vertex $u \in V(T)$ is called a {\em furthest} vertex
from $v$
if $\textsf{dist}_T(u,v) = \max_{x \in V(T)}\{\textsf{dist}_T(x,v)\}$.
Here, $\textsf{dist}_T(u,v)$ denotes the length of the shortest $(u,v)$-path in $T$.
Similarly, $u \in V(T) \setminus \{\rr\}$ is closest to $v$,
if $0< \textsf{dist}_T(u,v) = \min_{x \in V(T)}\{\textsf{dist}_T(x,v)\}$.
We say that a vertex is {\em furthest} (resp.~{\em closest}) if it is furthest (resp.~closest) from $\rr$.
The sets $V_{\geq 3}(T)$ and $V_{=1}(T)$
denote the set of vertices of degree at least $3$,
and of degree equal to $1$, respectively.
The set $V_{\geq 3}(T)$ is also called the set of {\em branching vertices} of $T$
and the set $V_{=1}(T)$ is called the set of {\em leaves} of $T$.
Note that $|V_{\geq 3}(T)| \le |V_{=1}(T)|-1$.

The following notation (see Figure~\ref{fig:rooted-and-dagged-tree})
comes handy while describing the algorithms on a tree $T$ with root $\rr$.
Given $u,v \in V(T)$, $u$ is a {\em descendant} of $v$ in $T$,
if $v$ lies on the unique $(\rr,u)$ path in $T$
and $u$ is called an {\em ancestor} of $v$
if $u$ lies on the unique $(\rr,v)$-path in $T$ ($u$ could be equal to $v$).
We denote by $T_u$ the subtree of $T$ rooted at $u$
and $T^{\dag}_u = T_u \setminus \{u\}$.
For any descendant $x$ of $u$,
the tree denoted by $T_{u,x}$ is defined as follows.
Let $ \{v_1, \ldots,v_p\}$ be the children of $u$ in $T$
and say $x \in V(T_{v_i})$.
Then $T_{u,x} = T_u \setminus (\cup_{j \in [p] \setminus \{i\}} T_{v_j})$.
Observe that $T_{u,x}$ is connected.
Also define $T^{\dag}_{u,x} = T_{u,x} \setminus \{u\}$.

Let $X \subseteq V(T)$,
then the \texttt{lca-closure} of $X$ is the set $X'$ obtained from $X$
by repeatedly adding, for each pair of vertices $u,v \in X'$,
the least common ancestor $w$ to $X'$,
that is the vertex $w \in V(T)$ furthest from $\rr$
such that $u,v \in V(T_w)$.
Note that $|X'| \leq 2 |X|$ (because $|V_{\geq 3}(T)| < |V_{=1}|(T)$).
We say that a set $X$ is {\em closed under taking \texttt{lca}}
if for every pair of vertices of $X$, their least common ancestor is in $X$.
Whenever we talk about a rooted (undirected) tree $T$ in a directed setting,
we refer to the tree $T$
where each vertex except the root has in-degree exactly one.

\paragraph*{Parameterized Complexity.}
The input of a parameterized problem comprises of an instance $I$,
which is an input of the classical instance of the problem,
and an integer $k$, which is called the parameter.
A problem $\Pi$ is said to be \emph{fixed-parameter tractable} or \FPT\ %
if given an instance $(I,k)$ of $\Pi$,
we can decide whether $(I,k)$ is a \yes\ instance of $\Pi$
in time $f(k)\cdot |I|^{\OO(1)}$.
Here, $f(\cdot)$ is some computable function whose value depends only on $k$.
We say that two instances, $(I, k)$ and $(I', k')$,
of a parameterized problem $\Pi$ are \emph{equivalent}
if $(I, k) \in \Pi$ if and only if $(I', k') \in \Pi$.
For more details on parameterized algorithms, and in particular parameterized branching algorithms,
we refer the reader to the book by Cygan et al.~\cite{DBLP:books/sp/CyganFKLMPPS15}.

\section{\bimctrees\ Parameterized by the Solution Size}\label{sec:bi-mc-k}
In this section, we prove Theorem~\ref{thm:bi-mc-trees-k} by designing a branching algorithm.
In order to reduce the measure of a given instance,
 our branching algorithm requires a solution for the instances
where every terminal pair path passes through a single vertex.
Let $\mathcal{I}=(T,\mathcal{P},\rr,\wt,\kk)$ be an instance such that
all the terminal pair paths of $\mathcal{P}$ pass through $\rr$,
and $\wt : V(T) \to \mathbb{N}$ is a vertex weight function.
Let $\overrightarrow{T}$ be the directed tree obtained by
orienting the edges of $T$ so that all the vertices, except for $\rr$,
have in-degree exactly one, while $\rr$ has in-degree zero.
In other words, the oriented tree  $\overrightarrow{T}$ is an out-tree rooted at $\rr$.
We define an edge weight function $\wt' : E(\overrightarrow{T}) \to \mathbb{N}$
such that for every arc $e = (u,v) \in E(\overrightarrow{T})$, $\wt'(e) = \wt(v)$.
Then it can be easily seen that
$Z \subseteq E(\overrightarrow{T})$ is a $\mathcal{P}$-dpc in $T$ with
$\wt'(Z) = \ww$ if and only if $S=\{v : (u,v) \in Z\} \subseteq V(T) \setminus \{\rr\}$
(that is, $S$ is obtained from $Z$ by picking the heads of all the arcs in $Z$)
is a $\mathcal{P}$-multicut in $T$ with $\wt(S) = \ww$
(see Figure~\ref{fig:example-wmc-to-dpc}).
Let $\Adpc$ be the algorithm that takes as input an instance $\mathcal{I}$ as above,
constructs the edge-weight function $\wt'$ and uses the \wdpc\ algorithm
of Kim et al.~\cite[Section~6.1ff]{kim2021directed}
to solve the instance $(\overrightarrow{T}, \mathcal{P},\rr,\wt',\kk)$.   This runs in randomized $2^{\OO(\kk^{4})}\cdot n^{\OO(1)}$-time
\footnote{The dependency in $\kk$
is not explicit in \cite{kim2021directed} but can be easily deduced.}.
Therefore,  $\Adpc$ outputs the minimum \emph{weight}
of a solution of $\mathcal{I}$ if it exists,
and $\infty$ otherwise.
In particular, if $\mathcal{P} = \emptyset$ then $\Adpc$ outputs $0$.

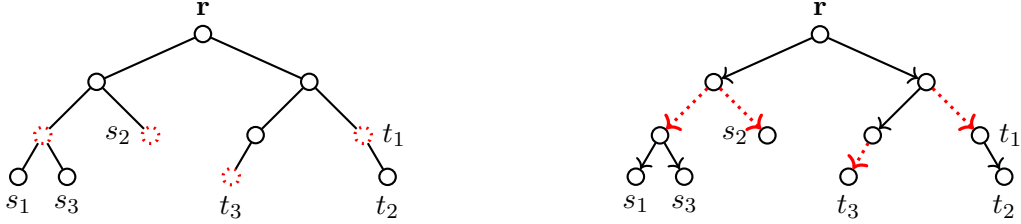
\begin{figure}[t]
\tikzset{node distance=1em and 0.4em}

\begin{subfigure}{.49\textwidth}
\centering
\begin{tikzpicture}
  \node[stan, label=$\rr$] at (0,0) (n1) {};
  \node[left = 3em of n1] (h1) {};
  \node[right= 3em of n1] (h2) {};

  \node[stan, below = of h1] (n2) {};
  \node[stan, below = of h2] (n3) {};

  \node[stan, del, very thick, below left = 2em of n2] (n4) {};
  \node[stan, del, very thick, label=left:$s_2$, below right= 2em of n2](n5){};
  \node[stan, below left = 2em of n3] (n6) {};
  \node[stan, del, very thick, label=right:$t_1$, below right= 2em of n3](n7){};

  \node[stan, label=below:$s_1$, below left = of n4] (n8) {};
  \node[stan, label=below:$s_3$, below right= of n4] (n9) {};
  \node[stan, del, very thick, label=below:$t_3$, below left = of n6] (n12) {};
  \node[stan, label=below:$t_2$, below right= of n7] (n15) {};

  \draw[-,thick] (n1) -- (n2);
  \draw[-,thick] (n1) -- (n3);
  \draw[-,thick] (n2) -- (n4);
  \draw[-,thick] (n2) -- (n5);
  \draw[-,thick] (n3) -- (n6);
  \draw[-,thick] (n3) -- (n7);
  \draw[-,thick] (n4) -- (n8);
  \draw[-,thick] (n4) -- (n9);
  \draw[-,thick] (n6) -- (n12);
  \draw[-,thick] (n7) -- (n15);
\end{tikzpicture}
\caption{
The instance for \bimctrees with a marked solution.
}
\end{subfigure}%
\hfill
\begin{subfigure}{.49\textwidth}
\centering
\begin{tikzpicture}
  \node[stan, label=$\rr$] at (0,0) (n1) {};
  \node[left = 3em of n1] (h1) {};
  \node[right= 3em of n1] (h2) {};

  \node[stan, below = of h1] (n2) {};
  \node[stan, below = of h2] (n3) {};

  \node[stan, below left = 2em of n2] (n4) {};
  \node[stan, label=left:$s_2$, below right= 2em of n2] (n5) {};
  \node[stan, below left = 2em of n3] (n6) {};
  \node[stan, label=right:$t_1$, below right= 2em of n3] (n7) {};

  \node[stan, label=below:$s_1$, below left = of n4] (n8) {};
  \node[stan, label=below:$s_3$, below right= of n4] (n9) {};
  \node[stan, label=below:$t_3$, below left = of n6] (n12) {};
  \node[stan, label=below:$t_2$, below right= of n7] (n15) {};

  \draw[->,thick] (n1) -- (n2);
  \draw[->,thick] (n1) -- (n3);
  \draw[->,very thick,del] (n2) -- (n4);
  \draw[->,very thick,del] (n2) -- (n5);
  \draw[->,thick] (n3) -- (n6);
  \draw[->,very thick,del] (n3) -- (n7);
  \draw[->,thick] (n4) -- (n8);
  \draw[->,thick] (n4) -- (n9);
  \draw[->,very thick,del] (n6) -- (n12);
  \draw[->,thick] (n7) -- (n15);
\end{tikzpicture}
\caption{
The instance for \wdpc where the corresponding solution is marked.
}
\end{subfigure}

\caption{
One-to-one correspondence between the solutions of \bimctrees\ and the solutions of  \wdpc when all the paths of $\mathcal{P}$ pass through $\rr$.
}
\label{fig:example-wmc-to-dpc}
\end{figure}



\paragraph*{Branching Algorithm.}
Let $(T,\mathcal{P},\wt,\ww,\kk)$ be an instance of \bimctrees.
Fix an arbitrary vertex $\rr \in V(T)$ to be the root of $T$.
We begin by finding a set $X \subseteq V(T)$ which is a $\mathcal{P}$-multicut in $T$ and is closed under taking \texttt{lca} (least common ancestor).
To find $X$, we first compute a \emph{unweighted} $\calP$-multicut
$X_{\text{opt}} \subseteq V(T)$ in $T$ of minimum size.
The set $X_{\text{opt}}$ can be found in polynomial time (folklore)
by the following greedy algorithm. 
Initialize $X_{\text{opt}} = \emptyset, T'=T$,  and $\mathcal{P}'=\mathcal{P}$.
Let $v \in V(T')$
be a furthest vertex from $\rr$ such that
there exists $(s,t) \in \mathcal{P}'$ with $s,t \in V(T_v)$.
By the choice of $v$,
the $(s,t)$-path
(and every terminal pair path in $\calP|_{T_v}$)
passes through $v$.
It is easy to see that
there is a minimum-size $\mathcal{P}'$-multicut containing $v$.
Set $X_{\text{opt}}=X_{\text{opt}} \cup \{v\}$,
$\mathcal{P}'=\mathcal{P}' \setminus \calP|_{T_v}$,
$T'= T' \setminus T_v$,
and repeat the procedure until $\mathcal{P}'=\emptyset$.
At the end of the procedure,
$X_{\text{opt}}$ is a minimum-size $\mathcal{P}$-multicut in $T$.
If $|X_{\text{opt}}| > \kk$, report \no.
Otherwise, let $X$ be the \texttt{lca-closure} of
$X_{\text{opt}}$ in $T$.
Hence, $|X| \leq 2 \kk$.

A notable property of a $\calP$-multicut $X$ closed under taking \texttt{lca} is that for any $x \in X$,
if $y \in X$ is the unique closest ancestor of $x$ in $T$,
then for each $v \in V(P_{y,x})\setminus\{y\}$,
all the terminal pair paths of $\mathcal{P}|_{T_v}$
either pass through $x$, or are contained in $T_x$.
Indeed, if $T_v \setminus T_x$ contains a path of $\mathcal{P}$,
then any $\mathcal{P}$-multicut intersects $V(T_v \setminus T_x)$.
Then there exists a vertex $y' \in X$ such that $y'\neq x$ lies on $P_{v,x} \subseteq P_{y,x}^\dag$, contradicting the choice of $y$.

We design a branching algorithm
whose input is $\mathcal{I} = (T,\calP,\wt,\ww,\kk,X)$
where $X$ is $\calP$-multicut $X \subseteq V(T)$ closed under taking \texttt{lca},
and where the measure of an instance $\mathcal{I}$ is defined as
$\mu(\mathcal{I})= |X|$. Note that, as mentioned above, $\mu(\mathcal{I}) \leq 2\kk$.
The base case of the branching algorithm occurs
in the following scenarios.
\begin{enumerate}
  \item If $\mu(\mathcal{I})=0$, then $\emptyset$ is a solution of $\mathcal{I}$.
  Return \yes\
  iff
  $\kk \geq 0$ and
  $\ww \geq 0$.
  \item If $\mu(\mathcal{I})=1$,  let $X=\{x\}$.
  In this case, since all the paths of $\mathcal{P}$
  pass through $x$,
  return \yes\ if and only if
  the $\Adpc(T,\mathcal{P},x,\wt,\kk) \leq \ww$.
  \item If $k < 0$, or $k \leq 0$ and $\mathcal{P} \neq \emptyset$, then return \no.
\end{enumerate}


If $\mu(\mathcal{I}) \geq 2$ (that is, $|X| \geq 2$),
then let $x \in X$ be a furthest vertex from $\rr$
and let $y \in X$ be its unique closest ancestor.
We branch in the following two cases.\\

\noindent
\textbf{Case 1.} \emph{There exists a solution of $\mathcal{I}$ that does not intersect $V(P_{y,x})$.}
In this case, we return the instance $\mathcal{I}_1=(T_1,\allowbreak \mathcal{P}_1, \allowbreak \wt_1,\ww ,\kk,X_1)$
where $T_1 = T / E(P_{y,x})$, $\mathcal{P}_1=\mathcal{P}/E(P_{y,x})$.
Let the vertex onto which the edges of $P_{y,x}$
are contracted be $y^{\circ}$.
%
Then $\wt_1(y^{\circ})=\ww+1$ and,
for each  $v \in V(T_1) \setminus \{y^{\circ}\}$, $\wt_1(v) =\wt(v)$. 
Observe that $(X \setminus \{x,y\}) \cup \{y^{\circ}\}$ is a $\mathcal{P}_1$-multicut in $T_1$
and is closed under taking \texttt{lca} and thus, we may set $X_1 = (X \setminus \{x,y\}) \cup \{y^{\circ}\}$.
Clearly, $\mu(\mathcal{I}_1) < \mu(\mathcal{I})$ and $\mathcal{I}_1$ can be constructed in polynomial time.\\


\noindent
\textbf{Case 2.} \emph{There exists a solution of $\mathcal{I}$ that intersects $V(P_{y,x})$.} 
In this case, the idea is the following:
for each vertex $v$ on the $P_{y,x}$ path,
we increase the weight of $v$
by the weight of the solution in the tree $T^{\dag}_{v,x}$
(the tree strictly below $v$).
To do so,
the size of a solution in the tree $T^\dag_{v,x}$ is first guessed.
Once the weights are updated, 
we can forget the terminal pairs
contained in the tree $T^{\dag}_{y,x}$
and just remember that the solution picks a vertex from $P_{y,x}$.
This is formalized below.

Let $S$ be a solution which intersects $V(P_{y,x})$ and let $z \in V(P_{y,x})$ be the vertex in $S$ closest to $y$.
Then we further branch into $\kk+1$ branches
where each branch corresponds to the guess on $|S \cap T^{\dag}_{z,x}|$.
More precisely, for every $i \in \{0\} \cup [k]$, we create the instance
$\mathcal{I}_{2,i}=(T_2, \mathcal{P}_2,\wt_{2,i}, \ww, \kk-i,X_2)$
where $T_2 = T \setminus T^\dag_x$,
$\mathcal{P}_2=\mathcal{P}|_{T_2} \setminus (V(T^{\dag}_{y,x}) \times  V(T^{\dag}_{y,x})) \cup \{(y,x)\}$ and
$\wt_{2,i}$ is defined below (see Figure~\ref{fig:wt-update-fpt-k}).
\[
  \wt_{2,i}(v) =
  \begin{cases}
      \wt(v) + \Adpc(T^{\dag}_{v,x}, \mathcal{P}|_{T^{\dag}_{v,x}}, x, \wt|_{V(T^{\dag}_{v,x})},i) & \text{ if } v \in V(P_{y,x})\\
  \wt(v) & \text{ otherwise.}
  \end{cases}
\]
%
Observe that the set $X \setminus \{x\}$ is a $\mathcal{P}_2$-multicut in $T_2$
with $y \in X \setminus \{x\}$.
The only paths that might not be cut are the ones in $\mathcal{P}|_{T^{\dag}_{y,x}}$
as they pass through $x$,
but they are not contained in $\mathcal{P}_2$ by definition.
Also $X \setminus \{x\}$ is closed under taking \texttt{lca} in $T_2$, thus
we may set $X_2 = X \setminus \{x\}$.
Clearly, $\mu(\mathcal{I}_{2,i}) < \mu(\mathcal{I})$ for each $i \in \{0\} \cup [k]$.

\begin{figure}[t]

\begin{subfigure}[t]{.49\textwidth}
  \centering
  \begin{tikzpicture}
    \node[] at (0,0) (root) {};
    \node[stan,label=left:$t_3$, below left = of root] (t3) {};
    \node[stan,contract,label=left:$\bm y$,below right = of t3] (y) {};
      \node[stan,contract,label=left:$ $,below left = of y] (y_kid) {};
        \node[stan,contract,label=left:$s_4$,below left= of y_kid](s4){};
          \node[stan,contract,label=left:$\bm x$,below left = of s4] (x) {};
            \node[stan,label=left:$ $,below left =  of x] (x_kid) {};
              \node[stan,label=below:$s_1$,below left = of x_kid] (s1) {};
              \node[stan,label=below:$s_2$,below right= of x_kid] (s2){};
            \node[stan,label=below:$t_1$,below right = of x] (t1) {};
        \node[stan,label=left:$ $,below right = of y_kid] (y_kid_kid) {};
          \node[stan,label=below:$t_2$,below left = of y_kid_kid] (t2) {};
          \node[stan,label=below:$s_3$,below right = of y_kid_kid] (s3) {};
      \node[stan,label=right:$t_4$,below right = of y] (t4){};


    \draw[-,thick, dashed] (root) -- (t3);
    \draw[-,thick, dashed] (t3) -- (y);
      \draw[-,thick, dashed] (y) -- (t4);
      \draw[-,thick,contract] (y) -- (y_kid);
        \draw[-,thick,contract] (y_kid) -- (s4);
          \draw[-,thick,contract] (s4) -- (x);
            \draw[-,thick] (x) -- (x_kid);
              \draw[-,thick] (x_kid) -- (s1);
              \draw[-,thick] (x_kid) -- (s2);
            \draw[-,thick] (x) -- (t1);
        \draw[-,thick] (y_kid) -- (y_kid_kid);
          \draw[-,thick] (y_kid_kid) -- (t2);
          \draw[-,thick] (y_kid_kid) -- (s3);
  \end{tikzpicture}
  \caption{
  Case~1.
  The marked edges are contracted onto the undeletable $y^\circ$.
  }
  \label{fig:wt-update-fpt-k:case1}
\end{subfigure}%
\hfill
\begin{subfigure}[t]{.49\textwidth}
  \centering
  \begin{tikzpicture}
    \node[] at (0,0) (root) {};
    \node[stan,label=left:$t_3$, below left = of root] (t3) {};
    \node[stan,changed,label=left:$\bm y$,below right = of t3] (y) {};
      \node[stan,changed,label=left:$ $,below left = of y] (y_kid) {};
        \node[stan,changed,label=left:$s_4$,below left= of y_kid](s4){};
          \node[stan,changed,label=left:$\bm x$,below left = of s4] (x) {};
            \node[stan,del,label=left:$ $,below left =  of x] (x_kid) {};
              \node[stan,del,label=below:$s_1$,below left=of x_kid](s1){};
              \node[stan,del,label=below:$s_2$,below right=of x_kid](s2){};
            \node[stan,del,label=below:$t_1$,below right = of x] (t1) {};
        \node[stan,label=left:$ $,below right = of y_kid] (y_kid_kid) {};
          \node[stan,label=below:$t_2$,below left = of y_kid_kid] (t2) {};
          \node[stan,label=below:$s_3$,below right = of y_kid_kid] (s3) {};
      \node[stan,label=right:$t_4$,below right = of y] (t4){};


    \draw[-,thick, dashed] (root) -- (t3);
    \draw[-,thick, dashed] (t3) -- (y);
      \draw[-,thick, dashed] (y) -- (t4);
      \draw[-,thick] (y) -- (y_kid);
        \draw[-,thick] (y_kid) -- (s4);
          \draw[-,thick] (s4) -- (x);
            \draw[-,thick,del] (x) -- (x_kid);
              \draw[-,thick,del] (x_kid) -- (s1);
              \draw[-,thick,del] (x_kid) -- (s2);
            \draw[-,thick,del] (x) -- (t1);
        \draw[-,thick] (y_kid) -- (y_kid_kid);
          \draw[-,thick] (y_kid_kid) -- (t2);
          \draw[-,thick] (y_kid_kid) -- (s3);
  \end{tikzpicture}
  \caption{
  Case~3.
  $T^\dag_x$ is deleted (dotted part).
  The weight of the filled vertices includes the weight of the solution from below.
  }
  \label{fig:wt-update-fpt-k:case3}
\end{subfigure}

%

\caption{
The branches of the algorithm for Theorem~\ref{thm:bi-mc-trees-k}
with $(s_i,t_i) \in \Pairs$.
\label{fig:wt-update-fpt-k}}
\end{figure}

\begin{lemma}
  \label{lem:correctness:bi-mc-k}
  $\mathcal{I}$ is a \yes-instance if and only if at least one of
  $\mathcal{I}_1, \mathcal{I}_{2,0}, \ldots, \allowbreak \mathcal{I}_{2,\kk}$ is a \yes-instance.
\end{lemma}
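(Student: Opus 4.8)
The plan is to establish the two implications separately; both directions rest on the optimality guarantee of $\Adpc$ and on the structural property of $\mathcal{P}$-multicuts closed under taking \texttt{lca} recorded above. Before either direction I would record why $\Adpc$ is actually applicable inside the weight update. Since $x$ is a furthest vertex of $X$ from $\rr$, no vertex of $X$ lies strictly below $x$, i.e.\ $X \cap V(T^{\dag}_x) = \emptyset$; as $X$ is a $\mathcal{P}$-multicut, no terminal pair path of $\mathcal{P}$ can then be confined to a single child-subtree of $x$. Since $T^{\dag}_{v,x}$ is the subtree rooted at the child of $v$ towards $x$ — which contains $T_x$ and is itself contained in $T_{p}$ for some $p \in V(P_{y,x}) \setminus \{y\}$ — the structural property combined with the previous sentence shows that every terminal pair path lying inside $T^{\dag}_{v,x}$ passes through $x$ (and $T^{\dag}_{x,x}$ carries no pair). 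Hence the instance handed to $\Adpc$ in the definition of $\wt_{2,i}$ encodes a genuine \emph{multicut} instance rooted at $x$, so $\Adpc(T^{\dag}_{v,x}, \mathcal{P}|_{T^{\dag}_{v,x}}, x, \wt|_{V(T^{\dag}_{v,x})}, i)$ returns exactly the minimum weight of a $\mathcal{P}|_{T^{\dag}_{v,x}}$-multicut in $T^{\dag}_{v,x}$ of size at most $i$ (and $0$ when that subtree carries no pair).

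\textbf{($\Rightarrow$)} Given a solution $S$ of $\mathcal{I}$: if $S$ avoids $V(P_{y,x})$, then no pair of $\mathcal{P}$ is contained in $P_{y,x}$ and the image of $S$ under the contraction map is a solution of $\mathcal{I}_1$ — it misses the undeletable vertex $y^{\circ}$, retains the same size and weight, and still hits every contracted terminal pair path. Otherwise let $z$ be the vertex of $S \cap V(P_{y,x})$ closest to $y$, set $i := |S \cap V(T^{\dag}_{z,x})| \le \kk$ and $S_2 := S \setminus V(T^{\dag}_{z,x})$. Then $S_2 \subseteq V(T_2)$ and $|S_2| = |S| - i \le \kk - i$; moreover, because $z$ is the topmost vertex of $S$ on $P_{y,x}$, $S_2$ meets $P_{y,x}$ in exactly $\{z\}$, so $\wt_{2,i}(S_2) = \wt(S_2) + \Adpc(T^{\dag}_{z,x}, \dots, i) \le \wt(S_2) + \wt(S \cap V(T^{\dag}_{z,x})) = \wt(S) \le \ww$, using that $S \cap V(T^{\dag}_{z,x})$ is one $\mathcal{P}|_{T^{\dag}_{z,x}}$-multicut of size $i$. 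To see that $S_2$ is a $\mathcal{P}_2$-multicut: the added pair $(y,x)$ is hit by $z$; for $(s,t) \in \mathcal{P}_2 \setminus \{(y,x)\}$ the $T$-path of $(s,t)$ avoids $T^{\dag}_x$ (its endpoints lie outside it), and since $T^{\dag}_{z,x}$ is attached to the rest of $T$ only through $z$, if the vertex of $S$ cutting $(s,t)$ lay inside $T^{\dag}_{z,x}$ then either $z$ is on that path (so $z \in S_2$ cuts it) or $s,t$ both lie in $V(T^{\dag}_{z,x})$; in the latter case the structural property together with $\{s,t\} \cap V(T^{\dag}_x) = \emptyset$ forces $(s,t) \in V(T^{\dag}_{y,x}) \times V(T^{\dag}_{y,x})$, contradicting $(s,t) \in \mathcal{P}_2$. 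Thus $\mathcal{I}_{2,i}$ is a \yes-instance.

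\textbf{($\Leftarrow$)} If $\mathcal{I}_1$ is a \yes-instance, any solution $S_1$ avoids $y^{\circ}$ (whose weight is $\ww+1$), so it pulls back to a subset of $V(T)$ of the same size and weight, and a vertex of $S_1$ on the image of a $(u,v)$-path pulls back to a vertex on the $(u,v)$-path in $T$; hence $S_1$ is a $\mathcal{P}$-multicut and $\mathcal{I}$ is a \yes-instance. If $\mathcal{I}_{2,i}$ is a \yes-instance with solution $S_2$, then $S_2$ hits $P_{y,x}$; let $v^{\ast}$ be the topmost such vertex and $Z$ a minimum-weight $\mathcal{P}|_{T^{\dag}_{v^{\ast},x}}$-multicut of size at most $i$ in $T^{\dag}_{v^{\ast},x}$, which exists because $\wt_{2,i}(v^{\ast}) \le \wt_{2,i}(S_2) \le \ww < \infty$ forces $\Adpc(T^{\dag}_{v^{\ast},x}, \dots, i) < \infty$. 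Set $S := S_2 \cup Z$. Then $|S| \le (\kk-i) + i = \kk$, and since the inflation carried by $v^{\ast}$ equals $\wt(Z)$ we get $\wt(S) \le \wt(S_2) + \wt(Z) \le \wt_{2,i}(S_2) \le \ww$. Finally $S$ is a $\mathcal{P}$-multicut: each pair of $\mathcal{P}_2 \setminus \{(y,x)\}$ has the same path in $T$ and $T_2$ and is cut by $S_2$; each remaining pair of $\mathcal{P}$ — those contained in $T^{\dag}_{y,x}$ or with an endpoint strictly below $x$ — either has $v^{\ast}$ on its path (and is cut by $v^{\ast} \in S_2$) or has both endpoints inside $T^{\dag}_{v^{\ast},x}$ (and is cut by $Z$), the dichotomy again following from the structural property and $X \cap V(T^{\dag}_x) = \emptyset$.

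\textbf{Where the difficulty lies.} The weight- and size-accounting is routine once the weight update $\wt_{2,i}$ is read correctly. The delicate part is the pairwise verification that $S_2$ cuts every pair of $\mathcal{P}_2$ and that $S$ cuts every pair of $\mathcal{P}$, in particular the pairs ``forgotten'' below $x$ and those straddling the boundary of $T^{\dag}_{y,x}$. The crux is the structural consequence, derived at the outset, that such a pair either runs through the whole of $P_{y,x}$ (so a single chosen vertex of that path cuts it) or is confined to one of the subtrees on which $\Adpc$ was invoked — which ultimately hinges on $x$ being a furthest vertex of $X$ together with the \texttt{lca}-closedness of $X$.
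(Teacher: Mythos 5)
Your proof is correct and follows the paper's argument essentially step for step: the same structural consequences of $x$ being a furthest vertex of the \texttt{lca}-closed multicut $X$ (every pair inside $T^{\dag}_{v,x}$ passes through $x$, so \Adpc is applicable), the same constructions $S_2 = S \setminus V(T^{\dag}_{z,x})$ in the forward direction and $S = S_2 \cup Z$ in the backward direction, and the same weight accounting through the inflated entry of the topmost solution vertex on $P_{y,x}$. The only (harmless) deviation is in the backward direction, where you obtain $\wt(S_2)+\wt(Z) \le \wt_{2,i}(S_2)$ by simply discarding the non-negative inflation terms of any further vertices of $S_2$ on $P_{y,x}$, whereas the paper first uses minimality of the solution to show $|S_{2,i} \cap V(P_{y,x})|=1$ and then argues with an exact identity.
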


\begin{proof}
  ($\Rightarrow$) Assume that $\mathcal{I}$ is a \yes-instance and let $S$ be a minimal solution of $\mathcal{I}$.
  Suppose first that $S \cap V(P_{y,x}) = \emptyset$ and
  consider a path $P_{s,t}$ of $\mathcal{P}_1$.
  Then $(s,t) \neq (y^\circ,y^\circ)$ for otherwise,
  $S$ would not intersect the path in $\mathcal{P}$ corresponding to the pair $(s,t) \in \mathcal{P}_1$.
  If $y^\circ \notin \{s,t\}$ then $(s,t) \in \mathcal{P}$ and so, $S$ intersects the path $P_{s,t}$.
  Otherwise, assume, without loss of generality, that $s =y^\circ$ and
  let $(z,t) \in \mathcal{P}$ where $z \in V(P_{y,x})$, be the terminal pair in $\mathcal{P}$ corresponding to $(s,t)$.
  Then, since $P_{z,t}$ is intersected by $S \setminus V(P_{y,x})$, $P_{s,t}$ is also intersected by $S \setminus \{y^\circ\}$.
  Thus, $S$ is a solution for $\mathcal{I}_1$.

  Suppose next that $S \cap V(P_{y,x}) \neq \emptyset$
  and let $z \in V(P_{y,x})$ be the vertex in $S$ closest to $y$.
  Observe that since $X$ is a $\mathcal{P}$-multicut in $T$ and $x \in X$ is a furthest vertex in $T$ from $\rr$,
  every path of $\mathcal{P}$ contained in $T_x$ passes through $x$.
  Similarly, if $z \neq x$ then, from the choice of $x$ and $y$,
  each terminal pair path contained in $T^{\dag}_{z,x}$ passes through $x$:
  indeed, if there exists a terminal pair path contained in $T^{\dag}_{z,x} \setminus T_x$,
  then it is not intersected by $X$,
  a contradiction to the fact that $X$ is a $\calP$-multicut.
  Let $S^* = S \cap T^{\dag}_{z,x}$ and let $i = |S^*|$.
  Note that if $z = x$ then $\calP_{T^{\dag}_{x,x}} = \emptyset$ by the above, and thus,
  $S^* = \emptyset$ by minimality of $S$.
  Since $S^*$ is a $\mathcal{P}|_{T^{\dag}_{z,x}}$-multicut, it follows that
  $\wt(S^*) \geq \Adpc({T^{\dag}_{z,x}}, \mathcal{P}|_{T^{\dag}_{z,x}},x,\wt|_{{T^{\dag}_{z,x}}},i)$.
  Now let $S' = S \setminus S^*$.
  Note that $z \in S'$; in fact, $S' \cap V(P_{y,x}) = \{z\}$ by the choice of $z$.
  We claim that $S'$ is a solution for $\mathcal{I}_{2,i}$.
  Clearly, $|S'| = |S| - |S^*| \leq \kk - i$.
  Furthermore, $\wt_{2,i}(S')=\wt(S) - \wt(S^*) - \wt(z) +\wt_{2,i}(z)$ and
  since $z \in V(P_{y,x})$,
  $\wt_{2,i}(z) \leq \wt(z) + \wt(S^*)$.
  Thus, $\wt_{2,i}(S') \leq \wt(S) \leq \ww$.
  We now show that $S'$ is a $\mathcal{P}_2$-multicut.
  Consider a path $P_{s,t}$ of $\mathcal{P}_2$.
  Since by construction, $\mathcal{P}_2 \cap (V(T_{y,x}^\dag) \times V(T_{y,x}^\dag)) = \emptyset$,
  at most one of $s$ and $t$ belongs to $V(T_{y,x}^\dag)$.
  Suppose first that $\{s,t\} \cap V(T_{y,x}^\dag) \neq \emptyset$, say $s \in V(T_{y,x}^\dag)$ without loss of generality.
  If $s \in V(P_{y,z}) \setminus \{y,z\}$ then,
  by the choice of $z$ and because $S$ is a $\calP$-multicut,
  $P_{s,t}$ is intersected by $S \setminus V(T_{y,x}^\dag) \subseteq S'$.
  Otherwise, $P_{s,t}$ passes through $z$ and is therefore intersected by $S'$.
  Since it is clear that $P_{s,t}$ is intersected by $S'$ if $\{s,t\} \cap V(T_{y,x}^\dag) = \emptyset$,
  we conclude that $S'$ is indeed a $\mathcal{P}_2$-multicut.

  ($\Leftarrow$) Suppose first that $\mathcal{I}_1$ is a \yes-instance and
  let $S_1$ be a solution of $\mathcal{I}_1$.
  Since $\wt_1(y^{\circ}) =\ww +1$,  $y^{\circ} \not \in S$.
  This implies, in particular, that $(y^\circ,y^\circ) \notin \calP_1$
  and thus, no path of $\calP$ is contained in $P_{y,x}$.
  Therefore, $S_1$ is a solution for $\mathcal{I}$.
  Suppose next that there exists $i \in \{0,\ldots,\kk\}$ such that $\mathcal{I}_{2,i}$ is a \yes-instance
  and let $S_{2,i}$ be a minimal solution of $\mathcal{I}_{2,i}$.
  We first claim that $|S_{2,i} \cap V(P_{y,x})| =1$.
  Indeed, observe that $S_{2,i} \cap V(P_{y,x}) \neq \emptyset$ since $(y,x) \in \mathcal{P}_2$.
  For the sake of contradiction,
  suppose that there exist $z,z' \in S_{2,i} \cap V(P_{y,x})$
  such that $z' \neq z$,
  say $z'$ is a descendant of $z$.
  Since, by construction, no path of $\mathcal{P}_2$ is contained in $T^\dag_{y,x}$,
  each path of $\mathcal{P}_2$ that passes through $z'$, also passes through $z$.
  Thus, $S_{2,i} \setminus \{z'\}$ is a $\mathcal{P}_2$-multicut, contradicting the minimality of $S_{2,i}$.
  Let $S_{2,i} \cap V(P_{y,x}) = \{z\}$.
  As argued above,
  if $z \neq x$, then,
  from the choice of $x$ and $y$,
  every path of $\calP$ contained in $T^\dag_{z,x}$
  passes through $x$.
  Similarly, every path of $\calP$ contained in $T_x$ passes through $x$.
  Let $S^*$ be a $\calP|_{T^{\dag}_{z,x}}$-multicut of size at most $i$ such that $\wt(S^*)$ is minimum.
  Then $\wt(S^*) = \Adpc(T^{\dag}_{z,x},\mathcal{P}_{T^{\dag}_{z,x}},x,\wt|_{T^{\dag}_{z,x}},i)$.
  Let $S = S_{2,i} \cup S^*$.
  We claim that $S$ is a solution for $\mathcal{I}$.
  Indeed, first note that $|S| = |S_{2,i}| + |S^*| \leq \kk - i + i = \kk$.
  Furthermore, since $S_{2,i} \cap V(P_{y,x}) = \{z\}$,
  $\wt(S_{2,i}) = \wt_{2,i}(S_{2,i}) - \wt_{2,i}(z) + \wt(z)$
  and  $\wt_{2,i}(z) = \wt(z) + \wt(S^*)$.
  Thus, $\wt(S) = \wt(S_{2,i}) + \wt(S^*) \leq \wt_{2,i}(S_{2,i}) \leq \ww$.
  We now show that $S$ is a $\calP$-multicut.
  Since $S_{2,i} \subseteq S$ and
  $S_{2,i}$ is a $\mathcal{P}_2$-multicut,
  any path of $\calP$ fully contained in $V(T) \setminus V(T_{y,x}^\dag)$ is intersected by $S$.
  Consider now a path $P_{s,t}$ of $\calP$ that intersects $V(T_{y,x})$
  If $P_{s,t}$ is fully contained in $T^\dag_{z,x}$,
  then it is intersected by $S^*$.
  Similarly, if $P_{s,t}$ passes through $z$,
  then it is intersected by $S$ since $z \in S$.
  If $P_{s,t}$ passes through $y$ without containing $z$,
  then $P_{s,t} \in \mathcal{P}_2$ and so,
  by the choice of $z$,
  $P_{s,t}$ is intersected by $S_{2,i} \setminus \{z\} \subseteq S$.
  Observe finally that $P_{s,t}$ is not fully contained in $V(P^\dag_{y,z}) \setminus \{z\}$
  for otherwise, $P_{s,t}$ is not intersected by $X$,
  a contradiction to the fact that $X$ is a $\calP$-multicut.
  Therefore, $S$ is a solution for $\mathcal{I}$.
\end{proof}


\begin{proof}
  [Proof of Theorem~\ref{thm:bi-mc-trees-k}]
  Let $\mathcal{I}=(T,\mathcal{P},\wt,\ww,\kk)$ be an instance of \bimctrees.
  Lemma~\ref{lem:correctness:bi-mc-k} shows that the above algorithm correctly
  solves the problem.
  The described algorithm does a $(\kk+2)$-way branching,
  where the measure of the input instance is bounded by $2 \kk$ and
  drops
  by at least $1$ in every branch.
  Since the branching stops when the measure is at most $1$,
  the total number of branching nodes of the algorithm is at most $(\kk+2)^{{2 \kk} +1}$.
  Since $\mathcal{I}_1$ can be constructed in polynomial time and each instance $\mathcal{I}_{2,i}$
  can be constructed by making $\OO(n)$ calls to $\Adpc$, 
  the final running time is \timedpc.
\end{proof}

\section{\wmctrees\ Parameterized by the Number of Leaves}\label{sec:param-l}
In this section, we prove Theorem~\ref{thm:wt-mc-trees-l}.
We first show that the problem on sub-divided stars can be solved
without using the flow augmentation from~\cite{kim2021directed} (Lemma~\ref{lem:wt-mc-sub-star}).
Towards this,
we first design a simple polynomial-time algorithm for the problem on paths (Lemma~\ref{lem:wt-mc-path})
and use it to eliminate the terminal pair paths that do not pass through the high degree vertex of the sub-divided star.
We then observe that the problem on sub-divided stars, when each terminal pair path pass through the high degree vertex, corresponds to the arcless instances of~\cite[Section~$6.2.2$]{kim2021directed}, which can be solved faster~\cite[Lemma~$6.12$]{kim2021directed} (Proposition~\ref{prop:verwdpc-arcless}).
We then use the algorithm of Lemma~\ref{lem:wt-mc-sub-star} as a subroutine to design
a branching algorithm that proves Theorem~\ref{thm:wt-mc-trees-l}.

\begin{lemma}\label{lem:wt-mc-path}
  Let $T$ be a disjoint union of paths,
  $\mathcal{P} \subseteq V(T) \times V(T)$ and
  $\wt: V(T) \to \mathbb{N}$.
  There is an algorithm $\Apath$ that outputs the weight of a $\mathcal{P}$-multicut $S \subseteq V(T)$,
  in $T$ such that $\wt(S)$ is minimum,
  in polynomial time.
\end{lemma}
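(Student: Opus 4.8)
The plan is to reduce the problem to a collection of independent \textsc{Minimum Weighted Interval Point Cover} instances, one per path, and solve each by a standard left-to-right greedy/DP argument. Since $T$ is a disjoint union of paths and every terminal pair $(s,t)\in\mathcal{P}$ with $s,t$ in the same path component contributes one subpath (pairs whose endpoints lie in different components have no $(s,t)$-path and are irrelevant / make the instance a trivial \textsc{No} for the decision version, but here we only output a weight so they are simply ignored), it suffices to handle a single path $P = v_1 v_2 \cdots v_m$. Fix a linear order on the vertices $v_1,\dots,v_m$ along $P$. Each terminal pair path then becomes an \emph{interval} $[a_i,b_i] = \{v_{a_i},\dots,v_{b_i}\}$, and a set $S$ is a $\mathcal{P}$-multicut exactly when $S$ contains at least one vertex of every such interval. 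We therefore want a minimum-weight \emph{hitting set} for a family of intervals on a line, with the twist that weights are arbitrary (not unit), so the classical ``hit the leftmost right endpoint'' greedy does not directly apply.

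First I would set up a dynamic program over the positions $v_1,\dots,v_m$. For each $j\in\{0,1,\dots,m\}$ define $D[j]$ to be the minimum weight of a set $S \subseteq \{v_1,\dots,v_j\}$ that hits every interval $[a_i,b_i]$ with $b_i \le j$ (so every interval that ``ends'' no later than $v_j$). We have $D[0]=0$. For the transition, I would let $j$ range upward and, when processing $v_j$, consider the option of \emph{placing the last chosen vertex at $v_j$}: this covers all intervals $[a_i,b_i]$ with $a_i \le j \le b_i$, and it must be preceded by a valid partial solution for all intervals ending before the leftmost such $a_i$. Concretely, letting $\ell(j) = \min\{\,a_i : a_i \le j \le b_i\,\}$ (and $\ell(j) = j$ if no interval contains $v_j$), one sets
\[
  D[j] \;=\; \min\!\Big(\,[\text{no interval has } b_i = j]\,\cdot\, D[j-1]\;,\;\; \wt(v_j) + \min_{0 \le j' < \ell(j)} D[j']\,\Big),
\]
where the first branch is available only when no interval ends exactly at position $j$ (otherwise $v_j$ or an earlier vertex of that interval must be taken). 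The answer for the path is $D[m]$, and the total output is the sum of $D[m]$ over all path components. Computing all $\ell(j)$ and all prefix minima of $D$ is clearly polynomial (indeed near-linear), so the whole procedure runs in polynomial time.

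The main thing to get right — and the only place where a careful argument is needed rather than bookkeeping — is the correctness of the transition, i.e.\ that restricting attention to ``the last selected vertex is $v_j$, and it may be preceded by an arbitrary valid solution for intervals ending before $\ell(j)$'' loses no generality. This is an exchange argument: given any optimal $S$, look at its largest-index element $v_j$; every interval containing $v_j$ has $a_i \ge \ell(j)$ by definition of $\ell(j)$, hence is hit by $v_j$, and every interval with $b_i < \ell(j) \le j$ is disjoint from $v_j$ and from everything to its right, so $S \setminus \{v_j\}$ restricted to $\{v_1,\dots,v_{\ell(j)-1}\}$ must already hit all of them — and one checks it need not use any vertex in $\{v_{\ell(j)},\dots,v_{j-1}\}$ since none of those help with an interval ending before $\ell(j)$. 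Conversely any set produced by the recurrence is a $\mathcal{P}$-multicut by construction. I expect this exchange step to be the sole (minor) obstacle; everything else is routine. An alternative, equally clean, is to phrase it as a shortest-path computation in an auxiliary DAG on vertices $\{0,1,\dots,m\}$ with an edge $j' \to j$ of cost $\wt(v_j)$ whenever choosing $v_j$ after a solution valid up to $j'$ is feasible (plus zero-cost ``skip'' edges where allowed), which makes the polynomial-time claim immediate via Bellman–Ford/topological-order relaxation.
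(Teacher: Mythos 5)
Your reduction to independent per-path instances and the general plan (a prefix DP anchored on the last chosen vertex) are fine, but the concrete transition is incorrect: anchoring at $v_j$ and requiring the predecessor only to cover intervals ending \emph{before} $\ell(j)=\min\{a_i : a_i\le j\le b_i\}$ misses an entire class of intervals, namely those with $\ell(j)\le b_i<j$. Such intervals are not hit by $v_j$ (they end strictly before $j$) and are not imposed on the predecessor (they end at or after $\ell(j)$), so the case analysis in your exchange argument (``intervals containing $v_j$'' versus ``intervals with $b_i<\ell(j)$'') is not exhaustive, and the claim that any set produced by the recurrence is a multicut fails. Concretely, take the path $v_1,\dots,v_5$ with terminal pairs $(v_1,v_4)$ and $(v_4,v_5)$ and weights $\wt(v_4)=10$, $\wt(v_5)=1$, $\wt(v_1)=\wt(v_2)=\wt(v_3)=100$: the optimum is $10$ (take $v_4$), but your recurrence gives $D[5]=\wt(v_5)+\min_{j'<\ell(5)}D[j']=1+0=1$, corresponding to the infeasible set $\{v_5\}$, because $\ell(5)=4$ and the pair $(v_1,v_4)$, ending at position $4\ge\ell(5)$, is never enforced. (A smaller symptom: since $D$ is non-decreasing, $\min_{0\le j'<\ell(j)}D[j']$ is always $D[0]=0$, so as literally written the second branch is just $\wt(v_j)$; but even the intended $D[\ell(j)-1]$ yields $1$ in the example above.)

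The repair needs the opposite extremal choice, which is what the paper's proof does: with $\textsf{B}[i]$ the optimum for the prefix $v_1,\dots,v_i$, let $i^*$ be the \emph{maximum} left endpoint over all terminal-pair paths whose right endpoint is at most $i$. Then every such path with right endpoint at least $j\ge i^*$ has left endpoint at most $i^*$ and hence contains $v_j$, which makes $\textsf{B}[i]=\min_{i^*\le j\le i}\{\wt(v_j)+\textsf{B}[j-1]\}$ both sound and complete; the last solution vertex inside the window $\{v_{i^*},\dots,v_i\}$ plays the role you wanted $v_j$ to play, and minimality guarantees exactly one solution vertex lies in that window. Equivalently, your closing shortest-path/DAG reformulation is correct only if an edge $j'\to j$ is present precisely when no terminal-pair path is entirely contained strictly between $v_{j'}$ and $v_j$; your $\ell(j)$-based rule does not implement that condition, and this is precisely where the proposal breaks.
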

\begin{proof}
  If $T$ is the union of at least two disjoint paths,
  then it is enough to solve the problem on each path independently and
  output the sum of the weights returned in each instance. Without loss of generality,
  let $T$ be a path $(v_1, \ldots, v_n)$.
  For each $i \in [n]$, let $T_i=T[\{v_1, \ldots,v_i\}]$.
  We use dynamic programming to compute for each $i \in [n]$, $\textsf{B}[i]$ which stores the minimum weight of a $\mathcal{P}|_{T_i}$-multicut in $T_i$.
  This is computed as follows.
  For any $i \in [n]$ such that $\calP|_{T_i} \neq \emptyset$,
  let $i^* \leq i$ be the largest index such that
  there exists $(s,t) \in \mathcal{P}$ where $P_{s,t} \subseteq T[\{v_j~|~j \in\{i^*,\ldots,i\}\}]$. Then $\textsf{B}[i]$
  is computed as follows.

  \[
    \textsf{B}[i] =
    \begin{cases}
      0 & \text{if } \mathcal{P}|_{T_i} = \emptyset\\
      \wt(v_1) & \text{if } i=1\\
      \min_{i^* \leq j \leq i} \{ \wt(v_j) + \textsf{B}[j-1]\} & \text{otherwise.}
    \end{cases}
  \]

  The algorithm then returns $\textsf{B}[n]$.
  Clearly, the algorithm runs in polynomial time.
  If $\mathcal{P}|_{T_i} = \emptyset$,
  then $S = \emptyset$ is a solution.
  Otherwise, if $i=1$, then $(v_1,v_1) \in \mathcal{P}$ and so,
  $S=\{v_1\}$ is a minimum weight solution.
  If $i >1$, then from the choice of $i^*$,
  for any minimal $\mathcal{P}|_{T_i}$-multicut $S$,
  $|S \cap \{v_{i^*}, \ldots, v_i\}| =1$.
  If $S \cap \{v_{i^*}, \ldots, v_i\}=\{v_j\}$,
  then by induction,
  $\wt(S) = \wt(v_j)  + \textsf{B}[j-1]$.
  Again from the choice of $i^*$,
  for any $j \in \{i^*, \ldots, i\}$,
  $v_j$ union any $\mathcal{P}|_{T_{j-1}}$-multicut
  is also a $\mathcal{P}|_{T_i}$-multicut.
\end{proof}

An $\rr$-rooted subdivided star is a subdivided star with root $\rr$,
where $\rr$ is a highest degree vertex of $T$, that is, $\rr$ is the unique degree $3$ vertex,
if it exists, or any arbitrary internal vertex if $T$ is a path.
Consider an instance $(T,\mathcal{P},\rr,\wt)$ such that
$T$ is an $\rr$-rooted subdivided star
and all the paths of $\mathcal{P}$ in $T$ pass through $\rr$.
The goal is to find a $\mathcal{P}$-multicut $S$ such that
$\wt(S)$ is minimum.
We show that such instances corresponds to the {\em arcless} instances
of \wdpc\ as defined in~\cite[Section~$6.2.2$]{kim2021directed}.
An instance $(T,\mathcal{P},\wt)$ is an arcless instance if
(i) given two designated vertices $s,t$, the graph $T$ consists of only internally vertex-disjoint paths from $s$ to $t$, and
(ii) if it is a \yes-instance, then there exists a solution for this instance that intersects every $(s,t)$-path exactly once.

The following result follows from~\cite[Section~$6.2.2$, Lemma~$6.12$]{kim2021directed}.
The root of a subdivided star, that is not a path, is the unique branching vertex. 
\begin{proposition}[\cite{kim2021directed}]
  \label{prop:verwdpc-arcless}
  Given an instance $(T,\mathcal{P} \subseteq V(T) \times V(T),\rr \in V(T),\allowbreak \wt : V(T) \to \mathbb{N})$ such that
  $T$ is a subdivided star with root $\rr$
  and $\ell \geq 3$ leaves.  Suppose all the terminal pair paths
  in $\mathcal{P}$ pass through $\rr$. Then
  one can find the weight of a $\mathcal{P}$-multicut $S \subseteq V(T)$
  such that $\wt(S)$ is minimum,
  in time \timedpcarcless.
\end{proposition}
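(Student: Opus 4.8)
I would prove Proposition~\ref{prop:verwdpc-arcless} by exhibiting a reduction from the stated problem on a subdivided star to an arcless instance of \wdpc\ in the sense of~\cite[Section~$6.2.2$]{kim2021directed}, and then invoke~\cite[Lemma~$6.12$]{kim2021directed} as a black box. The setup is as follows: $T$ is a subdivided star with root (= unique branching vertex) $\rr$ and $\ell\ge 3$ leaves $z_1,\dots,z_\ell$; every terminal pair path in $\mathcal{P}$ passes through $\rr$, so each $(s,t)\in\mathcal{P}$ has $s$ and $t$ on two \emph{different} branches of the star (or one of them equal to $\rr$, a degenerate case I would handle by noting such a pair forces $\rr$ into the solution, or alternatively just preprocess it away by deleting $\rr$ and adding its weight). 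The key structural fact is that a $\mathcal{P}$-multicut $S$ of minimum weight may be assumed to contain \emph{at most one vertex per branch} $B_i$ (the path from $\rr$ to $z_i$): if $S$ contained two vertices on $B_i$, the one closer to $\rr$ already dominates all terminal pair paths the deeper one could help cut, since every such path enters $B_i$ through $\rr$; so the deeper one can be removed, contradicting minimality. This is precisely the ``intersects every $(s,t)$-path exactly once'' condition (ii) in the definition of an arcless instance.

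**Building the arcless instance.** I would create an auxiliary graph with two designated vertices $s^\star$ and $t^\star$, and for each branch $B_i=(\rr=u_{i,0},u_{i,1},\dots,u_{i,m_i}=z_i)$ add an internally-vertex-disjoint $s^\star$--$t^\star$ path $Q_i$ whose internal vertices correspond (in order) to the vertices $u_{i,1},\dots,u_{i,m_i}$ of $B_i$ — here I must decide whether to model the vertex-deletion problem as an arc-deletion problem by putting each vertex's weight on its incoming arc (the same device used in Section~\ref{sec:bi-mc-k}, where the weight of a vertex is assigned to the edge connecting it to its parent), or whether~\cite{kim2021directed} already states arcless \wdpc\ in a vertex-weighted form; in either case the translation is the standard one and I would make this explicit. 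The terminal pairs of $\mathcal{P}$, each of which sits on two distinct branches $B_i,B_j$, translate naturally: cutting the unique $(s,t)$-path in $T$ means deleting a vertex on $B_i$ between $s$ and $\rr$ or a vertex on $B_j$ between $\rr$ and $t$, which corresponds to deleting an arc on $Q_i$ or $Q_j$ in the appropriate segment; I would set up the $\mathcal{P}$-dpc pairs (with respect to the source of the arcless instance) so that these exactly match. Condition (i) of the arcless definition — that the graph is a union of internally vertex-disjoint $s^\star$--$t^\star$ paths — holds by construction since the $\ell$ branches of the star share only $\rr$, which maps to $s^\star$ (and the leaves are identified to $t^\star$).

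**Correctness and the running time.** With the reduction in place, correctness amounts to a bijection (or at least a weight-preserving correspondence) between $\mathcal{P}$-multicuts of $T$ that are ``at most one vertex per branch'' and solutions of the arcless \wdpc\ instance that are ``at most one arc per $s^\star$--$t^\star$ path''; I sketched above why the former are WLOG minimum-weight solutions, and the latter is exactly property (ii), which is part of what~\cite[Lemma~$6.12$]{kim2021directed} guarantees. Then~\cite[Lemma~$6.12$]{kim2021directed} solves the arcless instance in the claimed time \timedpcarcless, and since the auxiliary graph has $\OO(n)$ vertices and $\ell$ paths (with $\ell\le n$), the reduction is polynomial and the overall bound \timedpcarcless\ follows.

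**Main obstacle.** The genuinely delicate point is matching up the ``direction'' conventions: \wdpc\ is stated for a directed graph with a source $\rr$ and pairs $(s_i,t_i)$ with the semantics ``if $\rr$ reaches $s_i$ then $\rr$ must not reach $t_i$,'' whereas here the condition is symmetric (``$S$ hits the $(s,t)$-path''). I would resolve this exactly as the paper does for the base case in Section~\ref{sec:bi-mc-k}: orient every edge of $T$ away from $\rr$, so an $(s,t)$-path through $\rr$ becomes a path $\rr\leadsto s$ together with a path $\rr\leadsto t$, and then ``$S$ cuts the $(s,t)$-path'' is equivalent to ``in $T-S$, $\rr$ fails to reach $s$ or $\rr$ fails to reach $t$,'' which is the contrapositive of the \wdpc\ requirement for the pair $(s,t)$ — I should double-check the quantifier flip here, but this is the same observation already used (and proven correct) earlier in the excerpt, so I would cite that reasoning. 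Once the directed reformulation is pinned down, verifying that the resulting instance falls in the arcless regime (each branch is a single directed $\rr$-to-leaf path; the required solution picks exactly one arc per branch) is routine, and everything else is bookkeeping.
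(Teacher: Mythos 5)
There is a genuine gap in the step where you claim that ``at most one vertex per branch ... is precisely the `intersects every $(s,t)$-path exactly once' condition (ii)''. Your minimality argument only gives \emph{at most} one vertex per branch, and that is strictly weaker than what the arcless definition demands: condition (ii) requires the existence of a solution that hits \emph{every} internally vertex-disjoint $s^\star$--$t^\star$ path \emph{exactly} once. A minimum-weight $\mathcal{P}$-multicut may legitimately avoid some branches altogether (e.g.\ a branch carrying no terminal, or one all of whose terminal pairs are already cut on the opposite branch), and padding such a solution with an extra vertex per untouched branch increases its weight, so the resulting \wdpc\ instance produced by your reduction need not be arcless and \cite[Lemma~6.12]{kim2021directed} does not apply to it. The paper closes exactly this gap with a preliminary $2^{\ell}$-way branching: for each leg $P_i$ it guesses whether the solution meets $V(P_i)\setminus\{\rr\}$, and in the branch where it does not, it contracts all of $P_i$ onto $\rr$. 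Only after this guess does ``at most once'' (your minimality argument, which the paper also uses) combine with ``at least once'' (guaranteed by the guess) to yield ``exactly once''. The $2^{\ell}$ overhead is absorbed into the stated bound \timedpcarcless.

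Two smaller issues. First, your parenthetical ``the leaves are identified to $t^\star$'' contradicts your earlier (correct) description in which each leaf is an \emph{internal} vertex of its path $Q_i$; identifying leaves with $t^\star$ would make it impossible to model deletion of a leaf vertex. The paper instead adds a fresh sink $\bar{\rr}$ and gives each arc from a leaf to $\bar{\rr}$ weight $L=\sum_{v}\wt(v)+1$ so that it is never selected. Second, ``what \cite[Lemma~6.12]{kim2021directed} guarantees'' misreads the role of condition (ii): it is a \emph{precondition} on the input instance, not a property the arcless algorithm establishes. The rest of your construction (orienting edges away from $\rr$, placing each vertex's weight on its in-arc, and the quantifier flip turning the symmetric cut condition into the dpc condition) matches the paper and is fine.
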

\begin{proof}
  We show how to reduce the given instance to an equivalent arcless instance of \wdpc.
  Let $u_1, \ldots, u_{\ell}$ be the leaves of $T$
  and let $P_i$ be the path from $\rr$ to $u_i$.
  For each $i \in [\ell]$, guess whether
  the solution of $\mathcal{I}$ intersects $P_i \setminus \{\rr\}$.
  This takes $2^{\ell}$ branches.
  In the branch corresponding to the guess
  where $S \cap (V(P_i) \setminus \{\rr\}) = \emptyset$,
  contract all the edges of $P_i$ onto the vertex $\rr$.
  Observe that the resulting graph in each of the branches is still a subdivided star
  and has the property that the solution intersects every root-to-leaf path.
  Further note that no minimal solution contains more than one vertex from the root-to-leaf path,
  otherwise the deletion of a vertex that is furthest from the root
  would result in a smaller solution.
  Thus, there exists a minimal solution that intersects every root-to-leaf path exactly once.
  Let $T'$ be constructed from $T$
  by (1) adding a new vertex $\bar{\rr}$
  and making it an out-neighbour of all the leaves of $T$,
  and (2) orienting every edge $xy \in E(T)$ from $x$ to $y$ if $y$ is closer to $\bar{\rr}$ than $x$ (in the undirected setting).
  Then clearly $T'$ is a graph
  consisting of internally vertex-disjoint $(\rr,\bar{\rr})$-paths
  such that the solution intersects each vertex-disjoint path exactly once.
  We define a weight function $\wt' : E(T') \to \mathbb{N}$ as follows.
  For every $i \in [\ell]$,
  let  $P_i=(v_{i,1}, \ldots , v_{i,p_i})$ where $\rr= v_{i,1}$ and $u_i= v_{i,p_i}$.
  Then $\wt'((v_{i,j}, v_{i,j+1})) = \wt(v_{i,j+1})$ for each $j \leq p_i-1$, and $\wt'((v_{i,p_i}, \bar{\rr})) = L$
  where $L = \sum_{v \in V(T)} \wt(v) +1$.
  Then observe that $V'=\{v_{1,i_1}, \ldots, v_{\ell,i_{\ell}}\}$
  is a $\mathcal{P}$-multicut in $T$
  with $\wt(V')=\ww$ if and only if $E'=\{(v_{1,i_1-1},v_{1,i_1}), \ldots, (v_{\ell,i_{\ell}-1},v_{\ell,i_{\ell}})\}$
  is a $\mathcal{P}$-dpc in $T'$ with $\wt'(E') =\ww$.
\end{proof}


\begin{figure}[t]

\begin{minipage}{0.43\textwidth}
\centering

\begin{tikzpicture}
  [
  ]
  \node[stan,label=right:$\rr$] at (0,0) (root) {};
  \node[stan,changed,label=left:$t_2$, below left = of root] (t2) {};
  \node[stan,changed,label=left:$t_1$, below left = of t2] (t1) {};
  \node[stan,changed,label=left:$s_1$, below left = of t1] (s1) {};
  \node[stan,del,label=left:$s_2$, below left = of s1] (s2) {};
  \node[stan,del,label=left:$t_4$, below left = of s2] (t4) {};
  \node[stan,del,label=left:$s_3$, below left = of t4] (s3) {};
  \node[stan,del,label=left:$s_4$, below left = of s3] (s4) {};

  \draw[-,thick] (root) -- (t2);
  \draw[-,thick] (t2) -- (t1);
  \draw[-,thick] (t1) -- (s1);
  \draw[-,thick, del] (s1) -- (s2);
  \draw[-,thick, del] (s2) -- (t4);
  \draw[-,thick, del] (t4) -- (s3);
  \draw[-,thick, del] (s3) -- (s4);

  \node[stan, below right = of root] (leg3) {};
  \node[stan,label=right:$t_3$, below right = of leg3] (t3) {};
  \node[stan,label=right:$t_5$, below right = of t3] (t5) {};
  \node[below right = of t5] (leg3end) {};

  \draw[-,thick] (root) -- (leg3);
  \draw[-,thick] (leg3) -- (t3);
  \draw[-,thick] (t3) -- (t5);
  \draw[-,thick, dashed] (t5) -- (leg3end);

  \node[stan] (leg4) at (leg3end |- leg3) {};
  \node[stan,label=right:$t_6$, below right = of leg4] (t6) {};
  \node[below right = of t6] (leg4end) {};

  \draw[-,thick] (root) -- (leg4);
  \draw[-,thick] (leg4) -- (t6);
  \draw[-,thick, dashed] (t6) -- (leg4end);

  \node[stan,label=left:$s_6$] at (t4 |- t2) (s6) {};
  \node[stan,label=left:$s_5$, below left = of s6] (s5) {};
  \node[below left = of s5] (leg1end) {};

  \draw[-,thick] (root) -- (s6);
  \draw[-,thick] (s6) -- (s5);
  \draw[-,thick, dashed] (s5) -- (leg1end);


  \node[] (z) at (s1.east -| t2) {$z$};
  \draw[->] (z) -- (s1);

\end{tikzpicture}

\end{minipage}%
\hfill
\begin{minipage}{.55\textwidth}

\caption{
Updating the weights of the vertices in Lemma~\ref{lem:wt-mc-sub-star}.
$\calP = \{(s_i, t_i) \mid i \in [6]\}$,
$z$ is closest to $r$
such that the $(s_1,t_1)$-path is contained in the $(z,\rr)$-path.
\\
The dotted parts are deleted
and the weights of the filled vertices
also include the weight of the minimum weight solution below them.
}

\label{fig:path-cleaning}
\end{minipage}
\end{figure}

\begin{lemma}\label{lem:wt-mc-sub-star}
  \wmc\
  can be solved in \timedpcarcless-time
  on a subdivided star with $\ell$ leaves.
\end{lemma}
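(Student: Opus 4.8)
The plan is to dispose of the trivial case and then reduce the general case to Proposition~\ref{prop:verwdpc-arcless}, using $\Apath$ to push the cost of the terminal pairs that avoid the centre of the star into the vertex weights. If $T$ is a path then every terminal pair path is a subpath of $T$ and I simply return $\Apath(T,\calP,\wt)$ (Lemma~\ref{lem:wt-mc-path}). Otherwise $T$ has a unique vertex $\rr$ of degree at least $3$; root $T$ at $\rr$, let $c_1,\dots,c_\ell$ be the children of $\rr$, call $L_i := V(T_{c_i})$ the $i$-th \emph{leg} (so $L_i$ is a path), and partition $\calP$ into the set $\calP_0$ of pairs whose path contains $\rr$ and, for each $i$, the set $\calP_i$ of pairs with both endpoints in $L_i$.

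For each $i$ with $\calP_i\neq\emptyset$ let $z_i$ be the vertex of $L_i$ closest to $\rr$ such that some pair of $\calP_i$ has its whole path inside $P_{\rr,z_i}$ (Figure~\ref{fig:path-cleaning}). Two facts drive the reduction: since the pair realising $z_i$ must be cut, every multicut of $\calP$ has a vertex of $L_i$ at distance at most $\textsf{dist}_T(\rr,z_i)$ from $\rr$; and since $z_i$ is as close to $\rr$ as possible, the deeper endpoint of \emph{every} pair of $\calP_i$ is at distance at least $\textsf{dist}_T(\rr,z_i)$. I then transform the instance as follows: for each $v\in V(P_{\rr,z_i})\setminus\{\rr\}$, use $\Apath$ to compute the minimum weight $g_i(v)$ of a multicut of the pairs of $\calP_i$ whose both endpoints lie strictly below $v$ on $L_i$, and set $\wt'(v):=\wt(v)+g_i(v)$ (all other weights unchanged); then delete the part of $L_i$ strictly below $z_i$, discard $\calP_i$, redirect to $z_i$ the endpoint of any pair of $\calP_0$ that lay strictly below $z_i$ on $L_i$, and add the pair $(c_i,z_i)$, which now spans the whole shortened leg. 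Legs with $\calP_i=\emptyset$ are left untouched. The result is a subdivided star with $\ell\ge 3$ leaves in which every pair either passes through $\rr$ or spans a whole leg, so running the argument of Proposition~\ref{prop:verwdpc-arcless} — with the legs carrying a whole-leg pair taken as used rather than guessed in its first step — returns, in time \timedpcarcless, the minimum weight $W'$ of a multicut of the transformed instance.

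It remains to check that $W'$ equals the optimum $W$ of the input instance; the only fact used is that on each leg just the solution vertex closest to $\rr$ matters for the through-$\rr$ pairs. For $W\le W'$: given an optimal multicut $S$, on each cleaned leg $L_i$ let $v_i^{\ast}$ be the vertex of $S\cap L_i$ closest to $\rr$ (it exists and lies on $P_{\rr,z_i}$ by the above), and output $S$ restricted to the untouched part of $T$ together with all $v_i^{\ast}$; this cuts every pair of the transformed instance ($(c_i,z_i)$ and the redirected $\calP_0$-pairs via $v_i^{\ast}$, the rest as in $S$), and since the remaining vertices of $S$ on $L_i$ are strictly below $v_i^{\ast}$ they realise a multicut of the pairs counted in $g_i(v_i^{\ast})$, so the output has weight at most $\wt(S)$. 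For $W\ge W'$: given an optimal multicut $S'$ of the transformed instance, on each cleaned leg $L_i$ (where $S'$ hits $L_i$ at distance at most $\textsf{dist}_T(\rr,z_i)$ because $(c_i,z_i)$ is cut), let $v$ be the vertex of $S'\cap L_i$ closest to $\rr$ and replace $S'\cap L_i$ by $\{v\}$ together with an $\Apath$-optimal multicut of the pairs of $\calP_i$ strictly below $v$; this does not increase the weight (that extra cost is already inside $\wt'(v)$), still cuts all through-$\rr$ pairs, and cuts every pair of $\calP_i$ — those with an endpoint at distance at most $\textsf{dist}_T(\rr,z_i)$ by $v$ itself (their deeper endpoint being at distance at least $\textsf{dist}_T(\rr,z_i)$), the rest by the $\Apath$-optimal set. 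Since $\Apath$ is polynomial and invoked $O(n)$ times, the running time is dominated by the single call to the procedure of Proposition~\ref{prop:verwdpc-arcless}, and is \timedpcarcless.

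The step I expect to require the most care is exactly this equivalence. Choosing $z_i$ as close to $\rr$ as possible is what guarantees that the forced shallow vertex $v_i^{\ast}$, together with the added whole-leg pair $(c_i,z_i)$ which keeps the shortened leg in play, accounts for precisely the pairs of $\calP_i$ that are not folded into some $g_i$, so that no constraint is silently dropped; and letting $g_i(v)$ collect only the pairs \emph{strictly} below $v$, while keeping only the topmost solution vertex on each cleaned leg rather than all of $S\cap L_i$, is what keeps the weights from being double-counted.
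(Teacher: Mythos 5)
Your overall strategy is the same as the paper's (clean each leg by folding the cost of the pairs living strictly below a cut vertex into the vertex weights via $\Apath$, then finish with the arcless machinery of Proposition~\ref{prop:verwdpc-arcless}), and your equivalence $W=W'$ between the original and the cleaned instance is argued correctly. But there is a genuine gap in the final step: you never account for solutions that contain the centre $\rr$. The paper begins by branching on whether $\rr$ is deleted — if yes, it pays $\wt(\rr)$ and solves the remaining disjoint paths with $\Apath$; if no, it sets $\wt(\rr):=\ww+1$ — and only then cleans the legs and invokes Proposition~\ref{prop:verwdpc-arcless}. You skip this. Since your added pairs $(c_i,z_i)$ do not pass through $\rr$, you cannot use Proposition~\ref{prop:verwdpc-arcless} as a black box and instead re-run its proof with the ``forced used'' modification; that machinery, however, only searches over solutions that pick exactly one \emph{non-root} vertex on each surviving root-to-leaf path (arcs correspond to their heads, and $\rr$ is never a head; the branch contracting all legs onto $\rr$ degenerates). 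Hence your algorithm computes the minimum weight over multicuts of the transformed instance \emph{avoiding} $\rr$, which can be strictly larger than $W'$: take a star with three leaves $a,b,c$, terminal pairs $(a,b),(b,c),(a,c)$, $\wt(\rr)=1$ and $\wt(a)=\wt(b)=\wt(c)=10$ — here there are no leg pairs, your transformed instance equals the input, the optimum is $\{\rr\}$ of weight $1$, but the one-vertex-per-leg search returns $20$.

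The fix is exactly the paper's first move: before (or after) cleaning, branch on $\rr$. In the branch $\rr\in S$ the residual instance is a disjoint union of paths, so $\wt(\rr)+\Apath(\cdot)$ settles it (Lemma~\ref{lem:wt-mc-path}); in the branch $\rr\notin S$ set $\wt(\rr):=\ww+1$, after which restricting the arcless search to $\rr$-avoiding, one-per-leg solutions is harmless. With that addition your argument goes through; also note that once $\wt(\rr)$ is made prohibitive you could equally add $(\rr,z_i)$ instead of $(c_i,z_i)$, which is what the paper does and which keeps all pairs through $\rr$ so that Proposition~\ref{prop:verwdpc-arcless} applies verbatim, sparing you the modified re-proof. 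The remaining pieces of your write-up (choice of $z_i$, the weights $\wt'(v)=\wt(v)+g_i(v)$, keeping only the topmost solution vertex per cleaned leg, redirecting deep endpoints to $z_i$) match the paper's construction and are sound; the running time analysis is unaffected.
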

\begin{proof}
  Let the input instance be $\mathcal{I}=(T, \mathcal{P},\wt,\ww)$.
  If $\ell=2$, then $T$ is a path. In this case, report \yes\
  if and only if $\Apath(T, \mathcal{P},\wt) \leq \ww$.
  Otherwise, let $\rr \in V(T)$ be the root of $T$, that is $\rr$ is the unique vertex
  of degree at least $3$ in $T$.
  In the first step, the algorithm guesses
  whether $\rr$ is in the solution or not.
  If $\rr$ belongs to the solution,
  then delete $\rr$ from $T$ and
  solve the resulting instance using $\Apath$.
  Formally, the algorithm returns \yes\ if and only if
  $\wt(\rr) + \Apath(T,\calP,\wt) \leq \ww$.
  Henceforth, we assume that the solution does not contain $\rr$, or equivalently, we set
  $\wt(\rr) = \ww +1$.
  The remaining algorithm has two phases. In the first phase,
  it eliminates all the paths in $\mathcal{P}$ that do not pass through $\rr$.
  In the second phase, it uses the algorithm of Proposition~\ref{prop:verwdpc-arcless} to solve the problem.

  Suppose that there exists a path in $\mathcal{P}$ that does not pass through $\rr$.
  Let $z \in V(T)$ be a vertex that is closest to $\rr$
  such that there exists a path $P_{s,t}$ in $\mathcal{P}$
  where $P_{s,t} \subseteq P^{\dag}_{\rr,z}$.
  We create a new instance $\mathcal{I}' = (T', \mathcal{P}',\wt',\ww)$ (in polynomial time)
  such that $\mathcal{I}'$ is equivalent to $\mathcal{I}$.  Here, $T'= T \setminus T^{\dag}_z$ and,  $\mathcal{P}'=\mathcal{P} \setminus (V(T^{\dag}_{\rr,z}) \times V(T^{\dag}_{\rr,z})) \cup \{(\rr,z)\}$.
  Observe that the new terminal pair path $P_{\rr,z}$
  in $\mathcal{P}'$ intersects $\rr$ and
  thus, $\mathcal{P}'$ contains strictly fewer paths that do not pass through $\rr$
  (compared to $\mathcal{P}$).
  Since $T$ is a subdivided star,
  for each $v \in V(T) \setminus \{\rr\}$, $T^{\dag}_v$ is a path.
  The new weight function $\wt'$ is defined as follows (see Figure~\ref{fig:path-cleaning}).
  \[
    \wt'(v) =
    \begin{cases}
    \wt(v) + \Apath(T^{\dag}_v, \mathcal{P}|_{T^{\dag}_v}, \wt|_{V(T^{\dag}_v)})
    & \text{if } v \in V(P^{\dag}_{\rr,z})\\
    \wt(v) & \text{otherwise.}\\
    \end{cases}
  \]

  $(\Rightarrow)$ Let $S$ be a $\mathcal{P}$-multicut of $T$ such that
  $\wt(S) \leq \ww$.
  Since $P^{\dag}_{\rr,z}$ contains a path of $\mathcal{P}$,
  $S \cap V(P^{\dag}_{\rr,z}) \neq \emptyset$.
  Let $y \in S \cap V(P^{\dag}_{\rr,z})$ be the vertex that is closest to $\rr$.
  Construct $S' = S \setminus V(T^{\dag}_y)$.
  We claim that $S'$ is a solution for $\mathcal{I}'$.
  Observe that $S' \cap V(T^{\dag}_{\rr,z}) = \{y\}$.
  Observe that $S \cap V(T^{\dag}_y)$ is a $\mathcal{P}|_{T^{\dag}_y}$-multicut in $T^{\dag}_y$.
  Thus, $\wt(S \cap V(T^{\dag}_y)) \geq \Apath(T^{\dag}_y,\mathcal{P}|_{T^{\dag}_y},\wt|_{V(T_y^\dag)})$.
  From the construction of $S'$ and the weight function
  $\wt'$, $\wt'(S') = \wt(S) - \wt(S \cap V(T^{\dag}_y)) - \wt(y) + \wt'(y) \leq \wt(S) \leq \ww$.
  We now show that $S'$ is a $\mathcal{P}'$-multicut.
  Since $y \in S' \cap V(P^{\dag}_{\rr,z})$, $T-S'$ has no $(\rr,z)$-path.
  Consider any path of $\mathcal{P}'$ that intersects a vertex of $T^{\dag}_y$.
  Since the paths of $\mathcal{P}'$ are not contained in $T^{\dag}_{\rr,z}$,
  such a path also pass through $\rr$ and hence $y$. Since $y \in S'$, $S'$ is a $\mathcal{P}'$-multicut.

  $(\Leftarrow)$ Let $S'$ be a minimal $\mathcal{P}'$-multicut in $T$ such that $\wt'(S') \leq \ww$. Then $T-S'$ has no $(\rr,z)$-path.
  Since $\wt'(\rr)=\ww+1$, $S' \cap V(P^{\dag}_{\rr,z}) \neq \emptyset$. Since $S'$ is a minimal solution, $|S' \cap V(P^{\dag}_{\rr,z})|=1$
  for otherwise, deleting the vertex of $S$ on the $(\rr,z)$-path
  that is furthest from $\rr$
  would result in a smaller solution.
  Let $S' \cap V(P^{\dag}_{\rr,z}) =\{y\}$.
  Let $S^*$ be a minimum weight $\mathcal{P}|_{T^{\dag}_y}$-multicut.
  Then $\wt(S^*) = \Apath(T^{\dag}_y,\mathcal{P}|_{T^{\dag}_y},\wt|_{V(T_y^\dag)})$.
  Construct $S = S' \cup S^*$.
  We will now show that $S$ is a solution of $\mathcal{I}$.
  From the construction of $S$ and
  $\wt'$, $\wt(S) = \wt(S')+ \wt(S^*) = \wt'(S') - \wt'(y) + \wt(y) + \wt(S^*) \leq \wt'(S') \leq \ww$.
  Since $S' \subseteq S$, $S$ is a $\mathcal{P}'$-multicut.
  Consider a path of $\mathcal{P}$ that is contained in $T^{\dag}_{\rr,z}$.
  If such a path passes through $y$ or is contained in $T_y$,
  then it is intersected by $S^* \cup \{y\}$ (and hence $S$).
  Otherwise such a path is contained in $P^{\dag}_{\rr,y} \setminus \{y\}$.
  But this contradicts the choice of $z$.
  Therefore, $S$ is indeed a $\mathcal{P}$-multicut.

  We conclude that whenever there exists a path in $\mathcal{P}$
  that does not pass through $\rr$,
  we can apply the above procedure in polynomial time.
  Since every application of the above procedure
  decreases the number of paths of $\mathcal{P}$
  that do not pass through $\rr$ by at least one,
  the above procedure can be exhaustively applied in polynomial time.
  This ends the first phase of the algorithm. 
  At the end of the first phase, all the paths of $\mathcal{P}$ pass through $\rr$.
  Therefore, in this case, we solve the instance $(T,\mathcal{P},\rr,\wt)$
  using the algorithm of Proposition~\ref{prop:verwdpc-arcless}.
  Since the first phase of the algorithm takes polynomial time
  and the second phase takes \timedpcarcless-time,
  the algorithm runs in time \timedpcarcless.
\end{proof}

Observe that we can use Lemma~\ref{lem:wt-mc-sub-star}
to find the minimum weight $\mathcal{P}$-multicut in a subdivided star
by doing a simple binary search starting with $\ww=0, 1, 2, 4, 8, \ldots$ and so on.
This would incur an extra $\OO(\log \ww)$ factor in the running time.
Thus, even if $\ww$ is given as a unary input,
the resulting algorithm is still polynomial in the input size.
Therefore, the following corollary follows from
Lemmas~\ref{lem:wt-mc-path} and~\ref{lem:wt-mc-sub-star}.

\begin{corollary}\label{cor:min-wt-mc-sub-star}
  Let $T$ be a subdivided star with $\ell$ leaves.
  Let $\mathcal{P} \subseteq V(T) \times V(T)$ and
  $\wt : V(T) \to \mathbb{N}$.
  There is an algorithm $\Astar$
  that finds the weight of a $\mathcal{P}$-multicut $S \subseteq V(T)$
  such that $\wt(S)$ is minimum,
  in \timedpcarcless-time.
\end{corollary}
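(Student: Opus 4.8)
The plan is to turn the decision algorithm of Lemma~\ref{lem:wt-mc-sub-star} into an optimization algorithm by a standard search over the weight budget, handling the degenerate case $\ell \le 2$ separately with $\Apath$.

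First I would dispose of the case $\ell \le 2$: then $T$ is a path (possibly a single vertex), so $\Apath$ from Lemma~\ref{lem:wt-mc-path} already returns the minimum weight of a $\mathcal{P}$-multicut in polynomial time, and we are done. For $\ell \ge 3$, set $W := \sum_{v \in V(T)} \wt(v)$ and observe that $S = V(T)$ is a $\mathcal{P}$-multicut (it meets every terminal pair path), so a minimum-weight $\mathcal{P}$-multicut exists and its weight $W^*$ lies in $\{0,1,\dots,W\}$. Writing $\mathcal{D}(\ww)$ for the decision procedure of Lemma~\ref{lem:wt-mc-sub-star} --- which runs in \timedpcarcless\ time and answers \yes\ exactly when there is a $\mathcal{P}$-multicut of weight at most $\ww$ --- I would use that $\mathcal{D}$ is monotone in $\ww$ and that $\mathcal{D}(W)$ answers \yes.

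Next I would locate $W^*$ by exponential search followed by binary search: evaluate $\mathcal{D}(0)$, then $\mathcal{D}(1), \mathcal{D}(2), \mathcal{D}(4),\dots$ until the first index $j$ for which $\mathcal{D}(2^j)$ answers \yes\ (which occurs for some $j \le \lceil \log_2 W \rceil$), and then binary search for $W^*$ inside the interval $(2^{j-1}, 2^j]$. By monotonicity this correctly outputs $W^*$ and uses $\OO(\log W)$ calls to $\mathcal{D}$ in total. Since $W \le n \cdot \max_{v} \wt(v)$, the quantity $\log W$ is bounded by the bit-length of the input (even if the weights are presented in unary), so this $\OO(\log W)$ overhead is absorbed into the $n^{\OO(1)}$ factor and the whole procedure runs in \timedpcarcless\ time; taking $\Astar$ to be this combined algorithm (together with the $\Apath$ shortcut for $\ell \le 2$) completes the proof.

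I do not expect a genuine obstacle here. The only two points that need a moment's care are verifying that a $\mathcal{P}$-multicut always exists, so that the search terminates and the returned value is actually attained by some multicut, and confirming that the extra $\OO(\log W)$ factor stays polynomial in the input length --- which is precisely the reason the (otherwise innocuous) remark about unary-encoded weights is included before the corollary.
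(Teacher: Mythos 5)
Your proposal is correct and matches the paper's own argument: the paper likewise obtains $\Astar$ by running the decision algorithm of Lemma~\ref{lem:wt-mc-sub-star} with budgets $0,1,2,4,8,\dots$ followed by binary search, noting the extra $\OO(\log \ww)$ factor stays polynomial even for unary weights, and combining with Lemma~\ref{lem:wt-mc-path} for the path case.
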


We are now equipped to design the branching algorithm for Theorem~\ref{thm:wt-mc-trees-l}.
Let $\mathcal{I}=(T,\mathcal{P}, \wt,\ww)$ be an instance of \wmctrees.
Root $T$ at an arbitrary vertex $\rr$.
With each instance $\mathcal{I}$,
we associate the measure $\mu(\mathcal{I})= |V_{\geq 3}(T)| + |V_{=1}(T)|$.
Since $|V_{=1}(T)| \leq \ell$ and $|V_{\geq 3}(T)| \le |V_{=1}(T)|-1$,
$\mu(\mathcal{I}) \leq 2 \ell$.
We now design a branching algorithm
such that the measure $\mu$ drops in each branch.
The following cases appear as base cases: (1) If $|V_{\geq 3}(T)| \leq 1$, then return \yes\ if and only if $\Astar(T,\mathcal{P},\wt) \leq \ww$, and (2) If $\ww <0$ or,  $\ww \leq 0$ and $\mathcal{P}\neq \emptyset$, then return \no.


If $|V_{\geq 3}(T)| \geq 2$,
let $x, y \in V_{\geq 3}(T)$ such that $x$ is a furthest in $T$ and, 
$y$ is its unique closest ancestor.
We branch into the following two cases. \\

\noindent
\textbf{Case 1.}
\emph{There exists a solution of $\mathcal{I}$ that does not intersect $V(P_{y,x})$.}
In this branch, we return the instance $\mathcal{I}_1=(T_1,\mathcal{P}_1,\wt_1,\ww)$
where $T_1= T/ E(P_{y,x})$ and $\mathcal{P}_1 =\mathcal{P}/E(P_{y,x})$.
Let the vertex onto which the edges of $P_{y,x}$ are contracted be $y^{\circ}$.
The new weight function $\wt_1$ is defined as follows: $\wt_1(v) =\wt(v)$
for each $v \in V(T_1) \setminus \{y^{\circ}\}$,
and $\wt_1(y^{\circ})=\ww+1$.
Observe that $\mathcal{I}_1$ can be constructed in polynomial time.
Furthermore, since $x,y \in V_{\geq 3}(T)$
and the edges of $P_{y,x}$ are contracted in
$\mathcal{I}_1$, $|V_{\geq 3}(T_1)| = |V_{\geq 3}(T)| -1$
and thus, $\mu(\mathcal{I}_1)=\mu(\mathcal{I})-1$. \\

\noindent
\textbf{Case 2.}
\emph{There exists a solution of $\mathcal{I}$ that intersects $V(P_{y,x})$.}
In this case,
let $z \in V(P_{y,x})$ be the closest vertex to $y$ such that $P_{y,z}$ contains a path of $\mathcal{P}$.
If no such vertex exists then set $z =x$.
Return the instance $\mathcal{I}_2=(T_2, \mathcal{P}_2,\wt_2,\ww)$
where $T_2 = T \setminus T_z^\dag$
and $\mathcal{P}_2=(\mathcal{P} \setminus (V(T_{y,x}) \times V(T_{y,x}))) \cup \{(y,z)\}$.
Observe that, by construction, any solution of $\mathcal{I}$ intersects $V(P_{y,z})$.
The new weight function $\wt_2$ is defined as follows (see Figure~\ref{fig:wt-update-fpt-ell}).
%
%
%
\[
  \wt_2(v) =
  \begin{cases}
    \wt(v) + \Astar(T^{\dag}_{v,x}, \mathcal{P}|_{T^{\dag}_{v,x}},\wt|_{V(T^{\dag}_{v,x})})
      & \text{ if } v \in V(P_{y,z})\\
    \wt(v) & \text{ otherwise.}
  \end{cases}
\]

Observe that, for each $v \in V(P_{y,x}) \setminus \{x\}$,
$T^{\dag}_{v,x}$ has exactly one branching vertex, namely $x$,
since $x$ is a furthest branching vertex in $T$ from $\rr$,
$y$ is the branching vertex that is the closest ancestor of $x$
and $v \in V(P_{y,x})$.  Also, $T^{\dag}_{x,x} = T^{\dag}_x$ is a disjoint union of paths.
Since $x \in V_{\geq 3}(T)$,
from the construction of $T_2$, $|V_{=1}(T_3)|<|V_{=1}(T)|$
and so, $\mu(\mathcal{I}_2) < \mu(\mathcal{I})$.

\begin{figure}[t]
\centering
\begin{subfigure}{0.48\textwidth}
  \centering
  \begin{tikzpicture}

  \node[] at (0,0) (root) {};
  \node[stan, below left = of root] (root1) {};
  \node[stan,contract,label=left:$\bm y$, below left = of root1] (y) {};
  \node[stan,contract, below left = of y] (t2) {};
  \node[stan,contract, below left = of t2] (t1) {};
  \node[stan,contract,label=left:$s_1$, below  = of t1] (s1) {};
  \node[stan,contract,label=left:$s_2$, below  = of s1] (s2) {};
  \node[stan,contract,label=left:$\bm x$, below  = of s2] (x) {};
  \node[stan,label=left:$t_3$, below left = of x] (t3) {};
  \node[stan,label=left:$s_3$, below  = of t3] (s3) {};
  \node[stan,label=left:$s_4$, below  = of s3] (s4) {};
  \node[stan, below right = of x] (x2) {};
  \node[stan,label=right:$t_4$, below  = of x2] (t4) {};
  \node[stan, below right = of y] (y1) {};
  \node[stan, below right = of y1] (y2) {};
  \node[stan, below  = of y2] (y3) {};
  \node[stan, below left = of y3] (y4) {};
  \node[stan, below right = of y3] (y5) {};

  \draw[-,thick, dotted] (root) -- (root1);
  \draw[-,thick] (root1) -- (y);
  \draw[-,thick,contract] (y) -- (t2);
  \draw[-,thick,contract] (t2) -- (t1);
  \draw[-,thick,contract] (t1) -- (s1);
  \draw[-,thick,contract] (s1) -- (s2);
  \draw[-,thick,contract] (s2) -- (x);
  \draw[-,thick] (x) -- (t3);
  \draw[-,thick] (t3) -- (s3);
  \draw[-,thick] (s3) -- (s4);
  \draw[-,thick] (x) -- (x2);
  \draw[-,thick] (x2) -- (t4);
  \draw[-,thick] (y) -- (y1);
  \draw[-,thick] (y1) -- (y2);
  \draw[-,thick] (y2) -- (y3);
  \draw[-,thick] (y3) -- (y4);
  \draw[-,thick] (y3) -- (y5);


  \end{tikzpicture}
  \caption{
  Case~1.
  The marked orange edges are contracted onto the undeletable vertex $y^\circ$.
  \\
  }
\end{subfigure}
\hfill
\begin{subfigure}{0.48\textwidth}
  \centering
  \begin{tikzpicture}

  \node[] at (0,0) (root) {};
  \node[stan, below left = of root] (root1) {};
  \node[stan,changed,label=left:$\bm y$, below left = of root1] (y) {};
  \node[stan,changed,label=left:$t_1$, below left = of y] (t2) {};
  \node[stan,changed,label=left:$t_2$, below left = of t2] (t1) {};
  \node[stan,changed,label=left:$s_1$, below  = of t1] (s1) {};
  \node[stan,del,label=left:$s_2$, below  = of s1] (s2) {};
  \node[stan,del,label=left:$\bm x$, below  = of s2] (x) {};
  \node[stan,del,label=left:$t_3$, below left = of x] (t3) {};
  \node[stan,del,label=left:$s_3$, below  = of t3] (s3) {};
  \node[stan,del,label=left:$s_4$, below  = of s3] (s4) {};
  \node[stan,del, below right = of x] (x2) {};
  \node[stan,del,label=right:$t_4$, below  = of x2] (t4) {};
  \node[stan, below right = of y] (y1) {};
  \node[stan, below right = of y1] (y2) {};
  \node[stan, below  = of y2] (y3) {};
  \node[stan, below left = of y3] (y4) {};
  \node[stan, below right = of y3] (y5) {};

  \draw[-,thick, dotted] (root) -- (root1);
  \draw[-,thick] (root1) -- (y);
  \draw[-,thick] (y) -- (t2);
  \draw[-,thick] (t2) -- (t1);
  \draw[-,thick] (t1) -- (s1);
  \draw[-,thick,del] (s1) -- (s2);
  \draw[-,thick,del] (s2) -- (x);
  \draw[-,thick,del] (x) -- (t3);
  \draw[-,thick,del] (t3) -- (s3);
  \draw[-,thick,del] (s3) -- (s4);
  \draw[-,thick,del] (x) -- (x2);
  \draw[-,thick,del] (x2) -- (t4);
  \draw[-,thick] (y) -- (y1);
  \draw[-,thick] (y1) -- (y2);
  \draw[-,thick] (y2) -- (y3);
  \draw[-,thick] (y3) -- (y4);
  \draw[-,thick] (y3) -- (y5);


  \node[] (z) at (s1.east -| y) {$z$};
  \draw[->] (z) -- (s1);

  \end{tikzpicture}
  \caption{
  Case~3.
  $z$ is closest to $y$ such that the $(y,z)$-path contains $(s_1,t_1)$.
  The weight of the filled vertices
  includes the weight of the minimum weight solution below them.
  $T^\dag_z$ is deleted (dotted part).
  }
\end{subfigure}

\caption{
The branches of the algorithm of Theorem~\ref{thm:wt-mc-trees-l}
with terminal pairs $(s_i,t_i)$. 
$x$ is a furthest branching vertex
and $y$ its unique closest branching ancestor.
}
\label{fig:wt-update-fpt-ell}
\end{figure}
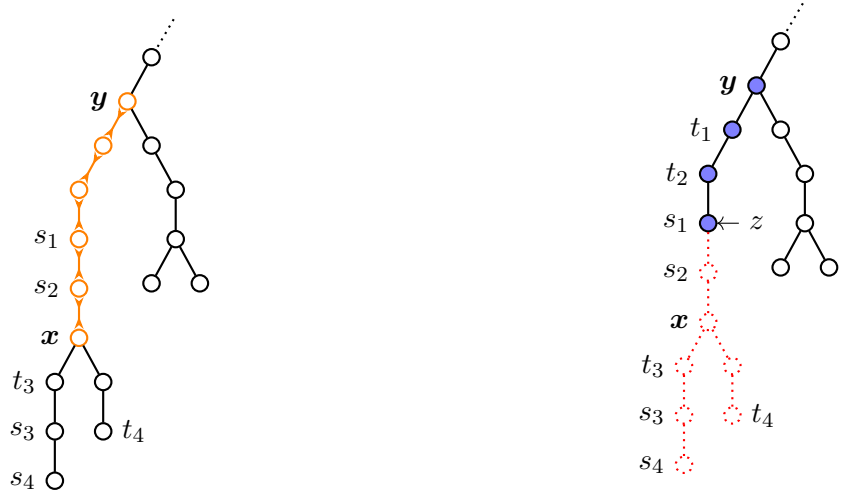

\begin{lemma}\label{lem:correctness-wt-mc-trees-l}
  $\mathcal{I}$ is a \yes-instance if and only if $\mathcal{I}_1$ or $\mathcal{I}_2$ is a \yes-instance.
\end{lemma}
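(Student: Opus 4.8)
The plan is to mirror the structure of the proof of Lemma~\ref{lem:correctness:bi-mc-k}, but working with the measure based on branching/leaf vertices rather than on the \texttt{lca}-closed multicut $X$. The crucial structural fact that replaces the role of $X$ is that $x$ is a \emph{furthest branching vertex} from $\rr$ and $y$ is its closest branching ancestor; hence for every $v \in V(P_{y,x}) \setminus \{x\}$, the subtree $T^{\dag}_{v,x}$ has $x$ as its only branching vertex, so $T^{\dag}_{v,x}$ is a subdivided star (and $T^{\dag}_{x,x} = T^{\dag}_x$ is a disjoint union of paths); therefore every call to $\Astar$ in the definition of $\wt_2$ is legitimate by Corollary~\ref{cor:min-wt-mc-sub-star}. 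A second structural fact I will use repeatedly: any terminal pair path of $\calP$ that is contained in $T_{y,x}$ but is not contained in $T^{\dag}_z$ must pass through some vertex of $P_{y,z}$ (in fact through the vertex $z$ if it lies below $z$), which is exactly why forgetting the pairs inside $T_{y,x}$ and inserting the single pair $(y,z)$ loses no information after the weights along $P_{y,z}$ have been inflated.

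For the forward direction I would take a minimal solution $S$ of $\mathcal{I}$ and split on whether $S \cap V(P_{y,x}) = \emptyset$. If it is empty, then $S$ (viewed in $T_1$) is a solution of $\mathcal{I}_1$: no path of $\calP_1$ can be $(y^{\circ},y^{\circ})$ since that would force $S$ to hit a path lying inside $P_{y,x}$, and for any other path of $\calP_1$ one lifts it back to a path of $\calP$ (replacing an endpoint $y^{\circ}$ by the appropriate vertex of $P_{y,x}$) that $S$ cuts outside $V(P_{y,x})$. If $S \cap V(P_{y,x}) \neq \emptyset$, let $z'$ be the vertex of $S$ on $P_{y,x}$ closest to $y$; observe $z'$ is a descendant of (or equal to) $z$ because $P_{y,z}^{\dag}$ contains a terminal pair path that $S$ must cut, and by the choice of $z$ no pair lies strictly inside $P_{y,z'}^{\dag}$. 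Put $S^{*} = S \cap V(T^{\dag}_{z',x})$; this is a $\calP|_{T^{\dag}_{z',x}}$-multicut, so $\wt(S^{*}) \ge \Astar(T^{\dag}_{z',x}, \calP|_{T^{\dag}_{z',x}}, \wt|_{V(T^{\dag}_{z',x})})$, which is at most $\wt_2(z') - \wt(z')$ since the $\Astar$ value is monotone and $T^{\dag}_{z',x} \subseteq T^{\dag}_{z,x}$ — here I will need the inequality $\wt_2(z') \ge \wt(z') + \wt(S^{*})$, arguing by monotonicity of the multicut optimum under taking subtrees. Then $S' = S \setminus S^{*}$ has $\wt_2(S') \le \wt(S) \le \ww$, satisfies $S' \cap V(P_{y,z}) = \{z'\}$, and is a $\calP_2$-multicut: the only pairs of $\calP_2$ meeting $T_{y,x}$ are those with at most one endpoint in $T^{\dag}_{y,x}$ together with $(y,z)$, and each such path either passes through $z'$ (hence is cut) or lies along $P_{y,z'}^{\dag}$ with its far endpoint above $y$, in which case it is cut by $S' \setminus V(T^{\dag}_{y,x})$.

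For the backward direction: if $\mathcal{I}_1$ is \textsc{Yes} with solution $S_1$, then $y^{\circ} \notin S_1$ because $\wt_1(y^{\circ}) = \ww+1$, hence no path of $\calP$ is contained in $P_{y,x}$, and $S_1$ lifts to a solution of $\mathcal{I}$. If $\mathcal{I}_2$ is \textsc{Yes}, take a minimal solution $S_2$; since $(y,z) \in \calP_2$ and $\wt_2$ along $P_{y,z}$ is large only because of $\Astar$-inflation (not $\ww+1$), one first argues $|S_2 \cap V(P_{y,z})| = 1$, say $\{z'\}$ — two vertices there would be redundant because no pair of $\calP_2$ lies inside $T^{\dag}_{y,x}$, so every path through the deeper one also passes through the shallower one. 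Let $S^{*}$ realize $\Astar(T^{\dag}_{z',x}, \calP|_{T^{\dag}_{z',x}}, \wt|_{V(T^{\dag}_{z',x})})$; then $\wt_2(z') = \wt(z') + \wt(S^{*})$ and $S = S_2 \cup S^{*}$ has $\wt(S) = \wt(S_2) + \wt(S^{*}) = \wt_2(S_2) \le \ww$. Finally $S$ is a $\calP$-multicut: paths avoiding $T^{\dag}_{y,x}$ are cut by $S_2$; paths inside $T^{\dag}_{z',x}$ by $S^{*}$; paths through $z'$ by $z'$ itself; paths through $y$ but avoiding $z'$ lie in $\calP_2$ and are cut by $S_2 \setminus \{z'\}$; and no path lies strictly inside $P_{y,z'}^{\dag} \setminus \{z'\}$ by the choice of $z$.

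The main obstacle, as in the $k$-parameterized case, is the bookkeeping in the \textbf{Case 2} / weight-update direction: making precise the inequality $\wt_2(z') \ge \wt(z') + \wt(S^{*})$ in the forward direction and the exact equality $\wt_2(z') = \wt(z') + \wt(S^{*})$ in the backward direction, which hinges on $\Astar$ returning the true minimum weight of a multicut in the relevant subtree (Corollary~\ref{cor:min-wt-mc-sub-star}) and on the monotonicity of this optimum under passing from $T^{\dag}_{z,x}$ to the smaller $T^{\dag}_{z',x}$. Everything else is a careful but routine case analysis on which part of the tree a given terminal pair path lives in.
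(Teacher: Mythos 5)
Your overall architecture matches the paper's proof: the same two-case split, the same lifting of a solution of $\mathcal{I}$ to $\mathcal{I}_1$ when $S\cap V(P_{y,x})=\emptyset$, and the same minimality argument giving $|S_2\cap V(P_{y,z})|=1$ in the backward direction. The genuine problem is in the forward direction of Case~2, exactly the step you flag as ``the main obstacle''. Your claim that the closest-to-$y$ vertex $z'$ of $S$ on $P_{y,x}$ is a \emph{descendant} of (or equal to) $z$ is backwards: since $P_{y,z}$ contains a terminal pair path by the choice of $z$ (and if $z=x$ the Case~2 hypothesis applies), $S$ must meet $V(P_{y,z})$, hence $z'$ lies on $P_{y,z}$, i.e.\ it is an \emph{ancestor} of or equal to $z$. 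This matters for two reasons. First, if $z'$ could really sit strictly below $z$, the construction would collapse: such a $z'$ is not even a vertex of $T_2=T\setminus T^{\dag}_z$, and $\wt_2(z')=\wt(z')$ carries no inflation, so $S'=S\setminus S^*$ would be neither a subset of $V(T_2)$ nor weight-bounded. Second, your ``monotonicity'' step is both misdirected and unnecessary: the inflation at a vertex $v\in V(P_{y,z})$ is, by definition of $\wt_2$, the value $\Astar(T^{\dag}_{v,x},\calP|_{T^{\dag}_{v,x}},\wt|_{V(T^{\dag}_{v,x})})$ for the subtree below $v$ itself, not for $T^{\dag}_{z,x}$; once $z'\in V(P_{y,z})$ is established, $\wt_2(z')-\wt(z')$ equals exactly the optimum for $T^{\dag}_{z',x}$, and the weight bound follows from $\wt(S^*)\ge \Astar(T^{\dag}_{z',x},\calP|_{T^{\dag}_{z',x}},\wt|_{V(T^{\dag}_{z',x})})$, that is, from $\wt_2(z')\le \wt(z')+\wt(S^*)$. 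Note also that you state the needed inequality with the opposite sign ($\wt_2(z')\ge\wt(z')+\wt(S^*)$); that direction would only give $\wt_2(S')\ge\wt(S)$ and is useless for establishing $\wt_2(S')\le\ww$. The paper sidesteps all of this by choosing $v$ directly as the closest vertex of $S$ on $P_{y,z}$, which in fact coincides with your $z'$.

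A second, smaller slip: in the backward direction you dismiss the remaining paths as those lying ``strictly inside $P^{\dag}_{y,z'}\setminus\{z'\}$'', but a path of $\calP$ contained in $T_{y,x}$ that avoids both $z'$ and $T^{\dag}_{z',x}$ may pass through $y$; the statement you need is that no path of $\calP$ is contained in $P_{y,z'}\setminus\{z'\}$ (endpoint $y$ included), which again contradicts the choice of $z$. With these corrections your argument becomes essentially the paper's proof.
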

\begin{proof}
  ($\Rightarrow$) Let $\mathcal{I}$ be a \yes-instance and let $S$ be a solution of $\mathcal{I}$.
  Suppose first that $S \cap V(P_{y,x}) = \emptyset$ and
  recall that $y^\circ$ is the vertex onto which the path $P_{y,x}$ is contracted in $\mathcal{I}_1$.
  Consider a path $P_{s,t}$ in $\mathcal{P}_1$.
  Then $(s,t) \neq (y^\circ,y^\circ)$ for otherwise,
  $S$ would not intersect the path in $\mathcal{P}$ corresponding to $(s,t)$.
  If $y^\circ \notin \{s,t\}$ then $(s,t) \in \mathcal{P}$ and so, $S$ intersects the path $P_{s,t}$.
  Otherwise, assume, without loss of generality, that $s =y^\circ$ and
  let $(z,t) \in \mathcal{P}$ where $z \in V(P_{y,x})$, be the terminal pair in $\mathcal{P}$ corresponding to $(s,t)$.
  Then since $P_{z,t}$ is intersected by $S \setminus V(P_{y,x})$, we conclude that $P_{s,t}$ is also intersected by  $S \setminus \{y^\circ\}$.

  Now suppose that $S \cap V(P_{y,x}) \neq \emptyset$.
  From the choice of $z$,
  $S \cap V(P_{y,z}) \neq \emptyset$.
  Let $v$ be the closest vertex of $P_{y,z}$ to $y$
  that belongs to $S$.
  Construct $S' = S \setminus V(T^{\dag}_{v,x})$.
  We claim that $S'$ is a $\mathcal{P}_2$-multicut in $T_2$ and
  $\wt_2(S') \leq \ww$.
  Since $v \in V(P_{y,z}) \cap S'$,
  $T_2 - S'$ does not contain the $(y,z)$-path.
  Consider now a path $P_{s,t}$ in $\mathcal{P}_2 \setminus \{(y,z)\}$.
  Then by definition of $\mathcal{P}_2$, $|\{s,t\} \cap V(T_{y,x})| \leq 1$.
  If $\{s,t\} \cap V(T_{y,x}) = \emptyset$ then $P_{s,t}$ is intersected by $S \setminus V(T_{y,x}) \subseteq S'$ since $S$ is a $\mathcal{P}$-multicut.
  Thus, suppose that $\{s,t\} \cap V(T_{y,x}) \neq \emptyset$, say $s \in V(T_{y,x})$ without loss of generality
  (note that then, $t \in V(T) \setminus V(T_{y,x})$).
  If $P_{s,t}$ contains $v$, then $P_{s,t}$ is intersected by $S'$ since $v \in S'$.
  Otherwise, $s \in V(P_{y,v}) \setminus \{v\}$ in which case, by the choice of $v$ and because $S$ is a $\mathcal{P}$-multicut, $P_{s,t}$ is intersected by $S \setminus V(T_{y,x}) \subseteq S'$.
  Thus, we conclude that $S'$ is a $\mathcal{P}_2$-multicut in $T_2$.
  From the construction of $S'$,
  observe that $S' \cap V(P_{y,z}) = \{v\}$.
  Thus, $\wt_2(S')=\wt(S) - \wt(S \cap V(T^{\dag}_{v,x})) - \wt(v) + \wt_2(v)$.
  Since $S \cap V(T^{\dag}_{v,x})$ is a $\mathcal{P}|_{T^{\dag}_{v,x}}$-multicut,
  $\wt(S \cap V(T^{\dag}_{v,x})) \geq \Astar(T^{\dag}_{v,x},\mathcal{P}_{T^{\dag}_{v,x}}, \wt|_{V(T^{\dag}_{v,x})})$. Therefore, $\wt_2(S') \leq \wt(S) \leq \ww$.

  ($\Leftarrow$)
  If $\mathcal{I}_1$ is a \yes-instance, then since $\wt_1(y^{\circ}) = \ww+1$, no solution of $\mathcal{I}_1$ contains $y^{\circ}$.
  Thus, every solution of $\mathcal{I}_1$ is also a solution for $\mathcal{I}$.
  If $\mathcal{I}_2$ is a \yes-instance and
  let $S'$ be a minimal solution of $\mathcal{I}_2$.
  Since $S'$ is a $\mathcal{P}_2$-multicut and
  $(y,z) \in \mathcal{P}_2$, $S' \cap V(P_{y,z}) \neq \emptyset$.
  In fact, since $S'$ is a minimal, $|S' \cap V(P_{y,z})| = 1$
  for otherwise, deleting the vertex of $S'$ on the $(y,z)$-path that is furthest from $y$ would result in a smaller solution
  (recall that $\mathcal{P}_2$ contains no terminal pair in $V(T_{y,x}) \times V(T_{y,x})$).
  Let $v \in S \cap V(P_{y,z})$.  If $v=x$, then $S'$ is also a solution for $\mathcal{I}$. Otherwise, let $v \neq x$.
  Let $S^*$ be the solution given by
  $\Astar(T^{\dag}_{v,x}, \mathcal{P}|_{T^{\dag}_{v,x}}, \wt|_{V(T^{\dag}_{v,x})})$ and
  let $S = S' \cup S^*$.
  We claim that $S$ is a $\mathcal{P}$-multicut in $T$ such that
  $\wt(S) \leq \ww$.
  Since $S' \subseteq S$,
  $S$ is a $\mathcal{P}_2$-multicut.
  Consider a path $P_{s,t}$ of $\mathcal{P}$ that is contained in $T_{y,x}$.
  Then either $P_{s,t}$ is contained in $T^{\dag}_{v,x}$ or $P_{s,t}$ contains $v$:
  indeed, if neither hold then $P_{s,t}$ is contained in $P_{y,v} \setminus \{v\}$, a contradiction to the choice of $z$.
  Now if $P_{s,t}$ is contained in $T^{\dag}_{v,x}$,
  then it is intersected by $S^*$;
  and if $P_{s,t}$ contains $v$, then it is intersected by $S$ since $v \in S$.
  Thus, $S$ is a $\mathcal{P}$-multicut.
  From the construction of $S$, $\wt(S)=\wt(S') + \wt(S^*)$.
  Also, $\wt(S')= \wt_2(S') - \wt_2(v)+\wt(v)$ because $S' \cap V(P_{y,z}) = \{v\}$.
  Since $\wt_2(v)=\wt(v)+\wt(S^*)$,
  we conclude that $\wt(S) = \wt_2(S') \leq \ww$.
\end{proof}

We now prove Theorem~\ref{thm:wt-mc-trees-l} formally.

\begin{proof}[Proof of Theorem~\ref{thm:wt-mc-trees-l}]
  Let $\mathcal{I}=(T,\mathcal{P},\wt,\ww)$ be an instance of \wmctrees.
  Lemma~\ref{lem:correctness-wt-mc-trees-l} shows that the algorithm described above correctly decides
  if $T$ has a $\mathcal{P}$-multicut $S$
  such that $\wt(S) \leq \ww$.
  Since the algorithm is a $2$-way branching algorithm,
  the measure of the algorithm,
  which drops by one in each branching step,
  is bounded by $2\ell$
  and the branching stops when the measure is at most $1$,
  the total number of nodes of the branching tree is at most $2^{2\ell +1}$.
  Now note that the worst running time at each branching node
  is during the construction of $\mathcal{I}_2$.
  However, since the construction of $\mathcal{I}_2$ requires making $\OO(n)$ calls
  to the algorithm of Corollary~\ref{cor:min-wt-mc-sub-star} ($\Astar$),
  $\mathcal{I}_2$ can be constructed in \timespecialdpc\ time.
  Thus, the algorithm runs in time \timedpcarcless.
\end{proof}

\section{Weighted Multicut on $(d,\ell)$-Light Instances}\label{sec:combines-dl}

\begin{figure}[t]
  \centering
  \includegraphics{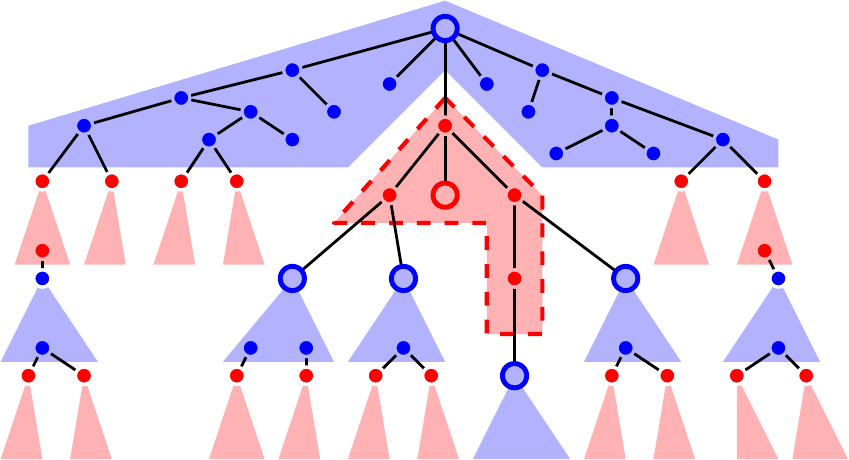}
  \caption{
  A $(d,6)$-light instance
  for some $d$.
  The $d$-light vertices vertices are shown in blue,
  and the $d$-heavy vertices in red.
  The closed neighborhood of the central component (marked with dashed boundary)
  containing $d$-heavy vertices
  has six leaves (marked with big circles),
  five of which are $d$-light vertices.
  }
  \label{fig:d-ell-light-example}
\end{figure}

In this section we prove Theorem~\ref{thm:wt-mc-trees-d-l}
by giving an \FPT-algorithm which solves \wmc on $(d,\ell)$-light instances
in time $3^d \cdot 2^{d\ell} \cdot 2^{\OO(\ell^2 \log \ell)} \cdot n^{\OO(1)}$.
For this, we first formally define the notion of $(d,\ell)$-light.

\begin{definition}
  \label{def:d-ell-light}
  Let $T$ be a tree
  and $\mathcal{P} \subseteq V(T) \times V(T)$ be a set of terminal pairs.
  A vertex $v \in V(T)$ is called a {\em $d$-light} vertex of $(T,\mathcal{P})$
  if at most $d$ terminal pair paths of $\mathcal{P}$ pass through $v$ in $T$.

  The set of $d$-light vertices of $(T,\mathcal{P})$
  is denoted by $\light(T,\mathcal{P},d)$.

  We say that $(T,\mathcal{P})$ is {\em $(d,\ell)$-light}
  if $T$ is a tree and for each connected component $C$
  of $T - \light(G,\mathcal{P},d)$,
  $N[C]$ has at most $\ell$ leaves.
\end{definition}
For ease of notation,
we say that a vertex $v \in V(T)\setminus \light(G, \calP, d)$
is \emph{$d$-heavy}.
See \cref{fig:d-ell-light-example} for an illustration of this definition.

\label{sec:appendix-extras}

In the definition of $(d,q)$-light instances, 
it is crucial to consider the number of leaves
in the tree induced by the \emph{closed neighborhood} of each component $C$ of $T-Y$ 
(i.e.\ $T[N[C]]$ must have at most $q$ leaves).
Assume, for a moment, that we just require that $T[C]$ has at most $q$ leaves,
then we do not expect the result as in Theorem~%
\ref{thm:wt-mc-trees-d-l}.

This is because with this new (and wrong) definition of $(d,q)$-light instances,
\wmctrees\ is \nph\ for $d=3$ and $q=2$.
Let $(G,\kk)$ be an instance of \textsc{Vertex Cover}
with $V(G) =\{v_1, \ldots, v_n\}$. 
Let $G'$ be a graph on $2n$ vertices
$x_1, \ldots, x_n, y_1, \ldots,y_n$
where $x_i$ is adjacent to $x_{i+1}$ for all $i \in [n-1]$
and each $y_i$ adjacent to $x_i$ for all $i\in [n]$.
Define the set of terminal pairs $\mathcal{P}$ as $(y_i,y_j) \in \mathcal{P}$
if and only if $(v_i,v_j) \in E(G)$.
The weight function $\wt : V(G') \to \{1,n+1\}$ is defined as
$\wt(x_i) = n+1$ for each $i \in [n]$ and $\wt(y_i)=1$ for each $i \in [n]$.
It is easy to observe that $\{v_{i_1}, \ldots, v_{i_\kk}\}$ is a vertex cover of $G$
if and only if
$\{y_{i_1}, \ldots, y_{i_\kk}\}$ is a $\mathcal{P}$-multicut in $G'$.
We know that \textsc{Vertex Cover} is \nph\ on graphs with maximum degree $3$
\cite{garey1974some}.
Hence, we can assume that
for each $y_i$ at most $3$ terminal pair paths of $\mathcal{P}$ pass through it. Thus $Y= \{y_1, \ldots,y_n\}$ is a set of $3$-light vertices and the removal of any superset of it leaves a collection of paths each of which has at most $2$ leaves.
Thus, with the {\em wrong} definition of $(d,q)$-light instances,
the resulting instance is a $(3,2)$-light instance.

\paragraph*{Notation.}
Let $\mathcal{I}=(T,\Pairs,\wt,\ww)$ be a \wmc instance.
Then $\calI$ is called a $(d,\ell)$-light instance
if $(T,\Pairs)$ is $(d,\ell)$-light.

Assume that $T$ is rooted.
For every vertex $v \in V(T)$,
we denote by $I[v] \subseteq \calP|_{T_v}$
the set of terminal pairs $(s,t) \in \calP|_{T_v}$
such that $v$ is contained in the $(s,t)$-path in $T$,
and by $O[v] \subseteq \mathcal{P}$
the set of terminal pairs $(s,t) \in \Pairs$
such that $\{s,t\} \cap V(T_v) \neq \emptyset$
and $\{s,t\} \cap V(T) \setminus V(T_v) \neq \emptyset$.
In other words, $I[v]$ denotes the set of terminal pairs going through $v$
and which are fully contained in the subtree rooted at $v$.
In constrast, the set $O[v]$ contains those terminal pairs
going through $v$ and leaving the subtree rooted at $v$.
Note that if $v$ is a $d$-light vertex then $|I[v]| + |O[v]| \leq d$ by definition.

\paragraph*{Intuition.}
The intuition of the algorithm is as follows.
For each $d$-light vertex $v \in V(T)$ and for all sets $O \subseteq O[v]$,
we compute the minimum weight of a partial solution $S_{O,v} \subseteq V(T_v)$
such that $S_{O,v}$ is a $\calP\vert_{T_v}$-multicut
and for every $(s,t) \in O$, $S_{O,v}$ intersects the $(s,t)$-path in $T$.
We store this minimum weight of a solution as $\Tab[v, O]$.
We use a dynamic program to compute the table entries for the $d$-light vertices
in a bottom-up transversal of $T$ (we assume that $T$ is rooted).
The crucial part of the algorithm is that
we do \emph{not compute these partial solutions} for \emph{every} $d$-heavy vertices.
Instead, one can think of partitioning the tree into (connected) components
corresponding to the status of being $d$-light or $d$-heavy.
For the components with the $d$-light vertices,
we compute the best solution by an exhaustive search.
This works because there are at most $d$ terminal pair paths going through a $d$-light vertex.
For the components consisting of $d$-heavy vertices,
we make use of the previous result in Theorem~\ref{thm:wt-mc-trees-l}
to compute a minimum solution.
Here, we exploit the fact that such a component has at most $\ell$ leaves.
We first design the main algorithm 
that computes the table entries $\Tab[v, \cdot]$
for every $v \in \light(T, \Pairs, d)$.
This algorithm uses as a subroutine the second algorithm $\Aheavyv$
to compute partial solutions for the $d$-heavy children of $v$.


First observe that if
$(T,\mathcal{P})$ contains no $d$-light vertex then, by definition,
$T$ has at most $\ell$ leaves and thus,
we may use the algorithm of Theorem~\ref{thm:wt-mc-trees-l}
to solve \wmc\ on instance $\mathcal{I}$ in time \timewmctrees.
Assume henceforth that $\light(T,\Pairs,d) \neq \emptyset$
and let us root $T$ at some $d$-light vertex $\rr$.
We define the table $\Tab$ formally as follows:
for every $v \in \light(T, \Pairs, d)$ and
for every set $O \subseteq O[v]$,
\begin{gather*}
  \label{eq:table}
  \Tab[v,O] \overset{\mathrm{def}}{=}
  \min_{S\subseteq V(T_v)} \wt(S)
  \text{ s.t. }
  \\
  \begin{align*}
      S \text{ is a } \calP|_{T_v}\text{-multicut}
      \land
      \forall (s,t)\in O: S \text{ intersects the } (s,t)\text{-path in } T
  \end{align*}
\end{gather*}

Initially, every entry of the table $\Tab$ is set to $+\infty$.
To update each entry of $\Tab[v,\cdot]$,
we assume that $\Aheavyv$ is given.
The output is \yes if $\Tab[\rr, \emptyset] \leq \ww$
and \no otherwise.
Note that this entry is defined as we assume that $\rr$ is $d$-light.
For every $d$-light vertex, we proceed as follows.


\paragraph*{\textbf{Updating $\bm d$-Light Leaves.}}
Let $v \in \light(T,\Pairs,d)$ be a leaf of $T$.
Then for every $O \subseteq O[v]$, we set
\[
  \Tab[v,O] =
    \begin{cases}
      \wt(v) & \text{if } O \neq \emptyset, \\
      0 & \text{otherwise}.
    \end{cases}
\]

\paragraph*{\textbf{Updating Internal $\bm d$-Light Vertices.}}
Let $v \in \light(T,\mathcal{P},d)$ be an internal vertex of $T$
and let $u_1,\ldots,u_p \in V(T)$ be the children of $v$.
Let $O \subseteq O[v]$ be fixed.
As mentioned above, we assume that there is a subroutine $\Aheavyv$
which takes as an input a child $u \notin \light(T,\mathcal{P},d)$ of $v$
and a set $Q \subseteq O[u]$ of terminal pairs,
and outputs the minimum weight of a set $S \subseteq V(T_u)$ such that
$S$ is a $\calP|_{T_u}$-multicut and for every $(s,t) \in Q$,
$S$ intersects the $(s,t)$-path in $T$
(we show below how to obtain $\Aheavyv$).
 For ease of notation, we define a function $\mathcal{A}_v^*$:
 for every child $u$ of $v$ and every set $Q \subseteq O[u]$,
 \[
   \mathcal{A}_v^*(u,Q) =
     \begin{cases}
         \Tab[u,Q] & \text{if } u \in \light(T,\mathcal{P},d), \\
         \Aheavyv(u,Q) & \mbox{otherwise.}
     \end{cases}
 \]

\paragraph*{Intuition.}
We first describe the intuition of the algorithm.
Note that it is always possible to delete $v$.
In this case, the solution is the disjoint union of optimal solutions for the children,
where we do not have to cut any of the outgoing terminal pairs.
Moreover, we have to delete $v$ if $(v,v)$ is a terminal pair, or there are terminal pairs in $O$
which start at $v$ and leave $T_v$ (these pairs are later denoted by $O_0$).

In the case where $v$ is not deleted, we proceed as follows.
We denote by $I_{i,j}$ the set of terminal pairs in $I[v]$
which use $u_i$ and $u_j$ where $i < j$.
Likewise,
the set $I_i \subseteq I[v]$ denotes the terminal pairs which use only $u_i$ and end at $v$,
i.e.\ they do not go into any other subtree.
We guess which of the paths in $I_{i,j}$ are cut by the solution for the subtree rooted at $u_i$.
We denote this set by $Q_{i,j}$.
Note that the pairs in $I_{i,j} \setminus Q_{i,j}$ must then be cut by the solution for the subtree rooted at $u_j$.
Besides these pairs, we also have to cut the terminal pairs
which leave $T_v$ and start/end in a subtree of some child $u_i$. 
We denote this set by $O_i$.
Thus,
for each child $u_i$ we have to cut the terminal pairs in
$O_i \cup I_i \cup \bigcup_{j>i} Q_{i,j} \cup \bigcup_{j<i} I_{j,i}\setminus Q_{j,i}$.\\

We now give the formal algorithm.
For every $i,j \in [p]$ where $i < j$,
denote by $I_{i,j} = I[v] \cap O[u_i] \cap O[u_j]$
and for every $i \in [p]$,
denote by $I_i = I[v] \cap O[u_i] \setminus \bigcup_{j>i} I_{i,j}$.
Further denote by $I_0 = \{(v,v)\} \cap I[v]$.
Note that for every $(s,t) \in I[v]$, one of three cases may arise:
\begin{itemize}
  \item
  $(s,t) \notin \bigcup_{i \in [p]} O[u_i]$, that is, $(s,t) = (v,v)$, or
  \item
  there exists a unique index $i \in [p]$ such that $(s,t) \in O[u_i]$, or
  \item
  there exist exactly two indices $i\in [p]$ such that $(s,t) \in O[u_i]$.
\end{itemize}
Indeed, the $(s,t)$-path would otherwise leave $T_v$
thereby contradicting the fact that $(s,t)\in I[v]$.
Therefore, $\{I_0,I_1,\ldots,I_p,I_{1,2},\ldots,I_{p-1,p}\}$ is a partition $I[v]$.

Denote by $O_0 \subseteq O$ the set of terminal pairs $(s,t) \in O$
such that $(s,t) \notin \bigcup_{i \in [p]} O[u_i]$,
and for every $i \in [p]$,
denote by $O_i \subseteq O$ the set of terminal pairs $(s,t) \in O$
such that $(s,t) \in O[u_i]$.
Note that for every $(s,t) \in O$, one of two cases may arise:
\begin{itemize}
  \item
  $(s,t) \in O[v] \setminus \bigcup_{i \in [p]} O[u_i]$, or
  \item
  there exists a unique index $i \in [p]$ such that $(s,t) \in O[u_i]$.
\end{itemize}
Indeed, the $(s,t)$-path would otherwise not leave $T_v$
thereby contradicting the fact that $(s,t)\in O[v]$.
Therefore, $\{O_0,\ldots,O_p\}$ is a partition $O$.

If $I_0 \cup O_0 \neq \emptyset$,
then the only way to separate the terminal pairs in $I_0 \cup O_0$
is by removing $v$.
Thus, in this case, we update $\Tab[v, O]$ as follows.
\[
  \Tab[v, O] =
  \wt(v) + \sum_{i \in [p]} \mathcal{A}_v^*(u_i,\emptyset).
\]
Otherwise, let a \emph{distribution} be a $p(p-1)/2$-tuple 
$\pi = (Q_{1,2},Q_{1,3},\ldots,Q_{p-1,p})$
where for every $i \in [p-1]$ and $j > i$,
$Q_{i,j}$ is a subset of $I_{i,j}$.
Then we update $\Tab[v, O]$ according to the following procedure.

\begin{enumerate}
  \item
  Set $\Tab[v, O] = \wt(v) + \sum_{i \in [p]} \mathcal{A}_v^*(u_i,\emptyset)$.
  \item
  \label{step:algoLight:distributionsLoop}
  For every distribution $\pi=(Q_{1,2},Q_{1,3},\ldots,Q_{p-1,p})$ do:
  \begin{itemize}
    \item
    For every $i \in [p]$, define
    $
      \Pairs^\pi_i = O_i \cup I_i
        \cup \bigcup_{i < j} Q_{i,j}
        \cup \bigcup_{i > j} I_{j,i} \setminus Q_{j,i}.
    $
    \item
    \(
      \Tab[v, O] = \min\{\Tab[v, O],
        \sum_{i \in [p]} \mathcal{A}_v^*(
          u_i, \Pairs^\pi_i
          )\}
    \)
  \end{itemize}
\end{enumerate}

\begin{lemma}
  For every internal $d$-light vertex $v$,
  if $\mathcal{A}_v^*$ is correct then
  the table entries $\Tab[v, \cdot]$ are updated correctly.
  Furthermore, $\Tab[v, \cdot]$ can be updated in
  $3^d \cdot \OO(\mathsf{T}(\Aheavyv))$-time
  where $\mathsf{T}(\Aheavyv)$ is the running time
  of the subroutine $\Aheavyv$.
\end{lemma}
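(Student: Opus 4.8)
The plan is to fix an internal $d$-light vertex $v$ with children $u_1,\dots,u_p$ and a set $O\subseteq O[v]$, and to prove that the value the procedure writes into $\Tab[v,O]$ equals $\min\{\wt(S)\}$ over all $S\subseteq V(T_v)$ that are $\calP|_{T_v}$-multicuts and intersect the $(s,t)$-path in $T$ for every $(s,t)\in O$, granting that $\mathcal{A}_v^*$ already returns the analogous optimum for every child. First I would record the elementary structural facts everything rests on. Since $T$ is a tree rooted at $\rr$, $V(T_v)$ is the disjoint union of $\{v\}$ and the sets $V(T_{u_i})$; hence a pair of $\calP|_{T_v}$ with both endpoints in a single $T_{u_i}$ is exactly a pair of $\calP|_{T_{u_i}}$, while every remaining pair of $\calP|_{T_v}$, and every pair of $O$, passes through $v$. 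Moreover, if a pair has its $T_v$-endpoint in $T_{u_i}$, the portion of its $T$-path lying in $V(T_v)\setminus\{v\}$ is contained in $V(T_{u_i})$, so a set $S\subseteq V(T_v)$ with $v\notin S$ hits that path if and only if $S\cap V(T_{u_i})$ does; this localization is the crucial observation. Finally, $I_0\cup O_0\ne\emptyset$ forces $v\in S$ for every feasible $S$, since $(v,v)$ and each pair of $O_0$ (shown earlier to be $(v,t)$ with $t\notin V(T_v)$) meet $V(T_v)$ only in $v$. The case $I_0\cup O_0\ne\emptyset$ then follows immediately: every feasible $S$ contains $v$, hence cuts $O$ and every pair of $\calP|_{T_v}$ not inside one $T_{u_i}$, so feasibility reduces to ``$S\cap V(T_{u_i})$ is a $\calP|_{T_{u_i}}$-multicut for all $i$'', and the optimum decomposes to $\wt(v)+\sum_i\mathcal{A}_v^*(u_i,\emptyset)$, exactly the assigned value. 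The same reasoning shows that, in the case $I_0\cup O_0=\emptyset$, the best feasible $S$ \emph{containing} $v$ has weight $\wt(v)+\sum_i\mathcal{A}_v^*(u_i,\emptyset)$, the value set in step~1.

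It then remains to analyse, in the case $I_0\cup O_0=\emptyset$, feasible $S$ with $v\notin S$, and the plan is to show their minimum weight equals $\min_\pi\sum_i\mathcal{A}_v^*(u_i,\Pairs^\pi_i)$, where $\pi=(Q_{1,2},\dots,Q_{p-1,p})$ ranges over distributions. For ``$\le$'', given $\pi$ I would take optimal child solutions $S_i$ realizing $\mathcal{A}_v^*(u_i,\Pairs^\pi_i)$ and verify that $S=\bigcup_i S_i$ is feasible with $v\notin S$: pairs inside some $T_{u_i}$ are cut by $S_i$; a cross-child pair of $I_{i,j}$ is cut by $S_i$ when it lies in $Q_{i,j}$ and by $S_j$ otherwise (since $I_{i,j}\setminus Q_{i,j}\subseteq\Pairs^\pi_j$); pairs of $O_i$ and $I_i$ are cut because $\Pairs^\pi_i\supseteq O_i\cup I_i$; and $I_0=O_0=\emptyset$ leaves nothing else, so $\wt(S)=\sum_i\wt(S_i)$ witnesses the inequality. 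For ``$\ge$'', given an optimal feasible $S$ with $v\notin S$ I would set $Q_{i,j}=\{(s,t)\in I_{i,j}:S\cap V(T_{u_i})\text{ meets the }(s,t)\text{-path}\}$ and use the localization fact to check that $S\cap V(T_{u_i})$ is feasible for $\mathcal{A}_v^*(u_i,\Pairs^\pi_i)$; the only delicate membership is a pair of $I_{j,i}\setminus Q_{j,i}$ with $j<i$, which $S$ must cut (it lies in $\calP|_{T_v}$) and, by the choice of $Q_{j,i}$, not on the $T_{u_j}$-side, hence — as $v\notin S$ — on the $T_{u_i}$-side. Combining the $v\in S$ and $v\notin S$ subcases yields that the procedure's value equals the target minimum, completing correctness given correctness of $\mathcal{A}_v^*$. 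I expect the main obstacle to be precisely this $v\notin S$ analysis: making the bookkeeping of ``which side of $v$ cuts each cross-child pair'' precise enough that it is faithfully encoded by a distribution $\pi$, and confirming that the restrictions $S\cap V(T_{u_i})$ are exactly feasible solutions of the child subproblems with request sets $\Pairs^\pi_i$.

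For the running time, for a fixed $O$ the number of distributions is $\prod_{i<j}2^{|I_{i,j}|}=2^{\sum_{i<j}|I_{i,j}|}\le 2^{|I[v]|}$, since the sets $I_{i,j}$ are pairwise disjoint subsets of $I[v]$. Handling one distribution amounts to forming the $p$ sets $\Pairs^\pi_i$ and summing $\mathcal{A}_v^*(u_i,\Pairs^\pi_i)$ over $i\in[p]$, i.e.\ $\OO(p)$ evaluations of $\mathcal{A}_v^*$, each being a constant-time table look-up for a $d$-light child and a single call to $\Aheavyv$ for a $d$-heavy child. Summing over the at most $2^{|O[v]|}$ sets $O\subseteq O[v]$ and using $|I[v]|+|O[v]|\le d$ (which holds because $v$ is $d$-light), the total time to fill $\Tab[v,\cdot]$ is $2^{|I[v]|+|O[v]|}\cdot\OO(p)\cdot\OO(\mathsf{T}(\Aheavyv))$, which, as $p$ is polynomially bounded and $|I[v]|+|O[v]|\le d$, is $3^d\cdot\OO(\mathsf{T}(\Aheavyv))$.
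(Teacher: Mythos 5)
Your proposal is correct and takes essentially the same route as the paper's proof: the same case split on $I_0\cup O_0$ and on whether $v$ is deleted, the same use of the distributions $Q_{i,j}$ to record which child's solution cuts each cross-child pair of $I_{i,j}$, and the same two-inequality argument relying on the fact that for $v\notin S$ the relevant portion of each terminal path localizes to a single $T_{u_i}$. The only (harmless) difference is the running-time count, where you bound the number of pairs $(O,\pi)$ by $2^{|O[v]|}\cdot 2^{|I[v]|}\le 2^d$ instead of the paper's $\sum_{i=0}^{d}\binom{d}{i}2^{d-i}=3^d$; both yield the stated bound.
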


\begin{proof}
  Let $v \in \light(T,\mathcal{P},d)$ be an internal vertex of $T$
  and let $u_1, \ldots,u_p \in V(T)$ be the children of $v$.
  Assume that $\mathcal{A}^*_v$ is correct
  (in particular, for every child $u \in \light(T,\mathcal{P},d)$ of $v$,
  the table $\Tab[u,\cdot]$ is correctly filled).

  Let us first show that for any set $O \subseteq O[v]$,
  there exists a set $S \subseteq V(T_v)$ of weight $\Tab[v, O]$ such that
  $S$ is a $\calP|_{T_v}$-multicut and for every $(s,t) \in O$,
  $S$ intersects the $(s,t)$-path in $T$.

  Consider a set $O \subseteq O[v]$.
  Observe that the update step sets $\Tab[v, O]$ to a finite value.
  Suppose first that $I_0 \cup O_0 \neq \emptyset$.
  Since $\mathcal{A}^*_v$ is correct,
  it follows from the update step that there exists for every $i \in [p]$,
  a $\calP|_{T_{u_i}}$-multicut $S_i$ such that
  $\Tab[v, O] = \wt(v) + \sum_{i \in [p]} \wt(S_i)$.
  Then the set $\{v\} \cup \bigcup_{i \in [p]} S_i$ is the desired $S$.
  Second, suppose that $I_0 \cup O_0 = \emptyset$.
  Since $\mathcal{A}^*_v$ is correct,
  it follows from the update step that either
  \begin{enumerate}[(i)]
    \item
    \label{step:algoLight:proof:first}
    there exists for every $i \in [p]$,
    a $\calP|_{T_{u_i}}$-multicut $S_i$ such that
    $\Tab[v, O] = \wt(v) + \sum_{i \in [p]} \wt(S_i)$, or
    \item
    \label{step:algoLight:proof:second}
    there is a distribution $\pi=(Q_{1,2},Q_{1,3},\ldots,Q_{p-1,p})$ such that
    for every $i \in [p]$, there exists a set $S_i$ where
    \begin{itemize}
      \item
      $S_i$ is a $\calP|_{T_{u_i}}$-multicut and
      \item
      for every $(s,t) \in \Pairs^\pi_i$,
      $S_i$ intersects the $(s,t)$-path in $T$,
    \end{itemize}
    and $\Tab[v, O] = \sum_{i \in [p]} \wt(S_i)$.
  \end{enumerate}
  If \ref{step:algoLight:proof:first} holds
  then we conclude, as previously,
  that $\{v\} \cup \bigcup_{i \in [p]} S_i$ is the desired $S$.
  Thus, assume that~\ref{step:algoLight:proof:second} holds and let us show that
  $\bigcup_{i \in [p]} S_i$ is the desired $S$.
  Note that since for every $i \in [p]$,
  $S_i$ is a $\calP|_{T_{u_i}}$-multicut,
  it suffices to show that for every $(s,t) \in I[v] \cup O$,
  $\bigcup_{i \in [p]} S_i$ intersects the $(s,t)$-path in $T$.
  Consider, therefore, a terminal pair $(s,t) \in I[v] \cup O$.
  If $(s,t) \in O_i \cup I_i$ for some $i \in [p]$,
  then $S_i$ intersects the $(s,t)$-path in $T$ by definition.
  Thus, assume that $(s,t) \in I_{i,j}$ for some $i,j \in [p]$ where $i < j$.
  Then either $(s,t) \in Q_{i,j}$
  in which case $S_i$ intersects the $(s,t)$-path in $T$ by definition;
  or $(s,t) \in I_{i,j} \setminus Q_{i,j}$ in which case
  $S_j$ intersects the $(s,t)$-path in $T$ by definition.

  Second, let us show that for any set $O \subseteq O[v]$,
  if $S$ is a set of minimum weight such that $S$ is a $\calP|_{T_v}$-multicut
  and for every $(s,t) \in O$, $S$ intersects the $(s,t)$-path in $T$,
  then $\wt(S) \geq \Tab[v, O]$.

  Let $S$ be a set of minimum weight such that
  $S$ is a $\calP|_{T_v}$-multicut and for every $(s,t) \in O$,
  $S$ intersects the $(s,t)$-path in $T$.
  For every $i \in [p]$, denote by $S_i = S \cap V(T_{u_i})$.
  Suppose first that $v \in S$.
  Then for every $i \in [p]$,
  $S_i$ is a $\calP|_{T_{u_i}}$-multicut which implies that
  $\wt(S_i) \geq  \mathcal{A}_v^*(u_i,\emptyset)$
  as $\mathcal{A}^*_v$ is correct.
  But $\Tab[v,O] \leq \wt(v) + \sum_{i \in [p]} \mathcal{A}_v^*(u_i,\emptyset)$
  by the update step and thus, the claim holds true.
  Second, suppose that $v \notin S$
  (note that $I_0 \cup O_0 = \emptyset$ in this case).
  Then for every $i,j \in [p]$ where $i < j$,
  and every terminal pair $(s,t) \in I_{i,j}$,
  at least one of $S_i$ and $S_j$ must intersect the $(s,t)$-path in $T$:
  let us denote by $Q_{i,j} \subseteq I_{i,j}$
  the set of terminal pairs $(s,t) \in I_{i,j}$
  such that $S_i$ intersects the $(s,t)$-path in $T$
  (note that then, for every $(s,t) \in I_{i,j} \setminus Q_{i,j}$,
  $S_j$ intersects the $(s,t)$-path in $T$).
  Since the update procedure loops over every distribution,
  it considers at some point the distribution
  $(Q_{1,2},Q_{1,3},\ldots, Q_{p-1,p})$ and thus,
  \[
    \Tab[v, O] \leq \sum_{i \in [p]}
      \mathcal{A}_v^*(u_i,
        \Pairs^\pi_i
      )
  \]
  where we set
  $\Pairs^\pi_i = O_i \cup I_i
    \cup \bigcup_{i < j} Q_{i,j}
    \cup \bigcup_{i > j} I_{j,i} \setminus Q_{j,i}$
  as above.

  Now observe that for every $i \in [p]$
  and every terminal pair $(s,t) \in \calP|_{T_{u_i}} \cup O_i \cup I_i$,
  $S_i$ must intersect the $(s,t)$-path in $T$, as $v \notin S$,
  and so
  \[
    \wt(S_i) \geq
      \mathcal{A}_v^*(u_i,
        \Pairs^\pi_i
      )
  \]
  as $\mathcal{A}^*_v$ is correct.
  Therefore, $\Tab[v, O] \leq \wt(S)$ as claimed.

  By the above, we conclude that for every set $O \subseteq O[v]$,
  $\Tab[v, O]$ is filled correctly.
  Finally, observe that for every $O \subseteq O[v]$,
  \[
    |O| + \sum_{i \in [p-1]} \sum_{j>i} |I_{i,j}| \leq d
  \]
  since $v$ is $d$-light.
  Thus, for a fixed $O \subseteq O[v]$,
  there are
  \[
      \prod_{ i \in [p-1]} \prod_{j>i} 2^{|I_{i,j}|}
    = 2^{\sum_{i \in [p-1]} \sum_{j>i} |I_{i,j}|} \leq 2^{d - |O|}
  \]
  distributions to consider in Step~\ref{step:algoLight:distributionsLoop}.
  It follows that $\Tab[v, O]$ can updated in time at most
  $2^{d-|O|} \cdot \OO(\mathsf{T}(\Aheavyv))$
  and thus,
  it takes at most
  \[
    \sum_{i=0}^d {d \choose i} 2^{d-i}
      \cdot \OO(\mathsf{T}(\Aheavyv))
    = 3^d \cdot \OO(\mathsf{T}(\Aheavyv))
  \]
  time to update $\Tab[v, \cdot]$.
\end{proof}

\paragraph*{\textbf{Heavy Vertices with $\bm d$-Light Parents.}}
Let us now describe the subroutine $\Aheavyv$ 
for a $d$-light vertex $v$ with at least one $d$-heavy child.
Let $u$ be a fixed $d$-heavy child of $v$.
Denote by $C_u$ the connected component of $T - \light(T,\mathcal{P},d)$ containing $u$
and by $N_u = N(C_u) \cap \light(T,\mathcal{P},d) \setminus \{v\}$.
If $N_u \neq \emptyset$, then we denote by $N_u = \{u_1,\ldots,u_\lambda\}$ for some $\lambda\le \ell$, and for every $i \in [\lambda]$, we let
$p_i$ be the parent of $u_i$ in $T$ (note that, by definition, $p_i \in C_u$ for every $i \in [\lambda]$).

Given a set $Q \subseteq O[u]$,
the basic idea is to ``guess''
for each $u_i$,
a set $O_i$ of pairs in $O[u_i]$ which are already cut
by an optimal solution for the subtree $T_{u_i}$.
We are then only interested in separating terminal pairs which intersect $T_u$
and are not already separated by a solution in some $T_{u_i}$. 
By definition of the problem,
we also do not have to separate the pairs in $O[u]\setminus Q$.
By these observations, it suffices to only consider the subtree $T'_u$
obtained from $T_u$ after deleting all subtrees $T_{u_i}$.
Because of this deletion, there might be pairs which do not start or end in $T'_u$
but must be separated in $T'_u$.
We take care of those by constructing a projection $\tau$
which maps the start and end point of the terminal pairs
to the first vertex of the path which lies inside $T'_u$.

\begin{remark}\label{rem:minimum-sol-for-uncon-multicut}
  The algorithm of \cref{thm:wt-mc-trees-l} can,
  with an additional $\OO(\log \ww)$-factor in the runtime,
  find the minimum weight of a $\mathcal{P}$-multicut
  in a tree with $\ell$ leaves,
  in time \timewmctrees.
  We denote this algorithm by $\Aunmc$.
\end{remark}


A \emph{distribution} is a $\lambda$-tuple $(O_1,\ldots,O_\lambda)$
where for every $i \in  [\lambda]$, $O_i \subseteq O[u_i]$.
Given a set $Q \subseteq O[u]$,
the algorithm $\Aheavyv$ proceeds as follows.

\begin{enumerate}[1.]
\item
\label{step:algoHeavy:Nuempty}
If $N_u = \emptyset$ then return the weight of the set
\[
  \Aunmc(T_u,\calP|_{T_u} \cup \{(s,u)~|~(s,t) \in Q \text{ and } s \in V(T_u)\},\wt|_{V(T_u)}).
\]
\item
Initialize $\mathsf{OPT}=\infty$.
\item 
Set $T'_u = T[V(T_u) \setminus \bigcup_{i \in [\lambda]} V(T_{u_i})]$.
\item
Define the projection $\tau_u:V(T) \to V(T'_u)$
  where for all $v \in V(T)$
  \[
    \tau_u(v) = \begin{cases}
        p_i & \text{if } v \in V(T_{u_i}), \\
        u & \text{if } v \in V(T) \setminus V(T_u), \\
        v & \text{otherwise.}
      \end{cases}
  \]
\item
\label{step:algoHeavy:loop}
For every distribution $\pi = (O_1,\ldots,O_p)$ do:
\begin{enumerate}[label*=\arabic*.]
  \item
  Let
  \(
    \Pairs_{u,\pi} = (Q \cup \Pairs\vert_{T_u})
      \setminus \bigcup_{i\in[\lambda]} (\Pairs\vert_{T_{u_i}} \cup O_i)
  \).
\item
  Set
  \(
    \Pairs'_{u,\pi} = \{ (\tau_u(s), \tau_u(t))
      \mid (s,t) \in \Pairs_{u,\pi}
      \}
    .
  \)
  \item
  \label{step:algoHeavy:computingSolution}
  Compute
  \(
    M=\Aunmc(T'_u, \calP'_{u,\pi}, \wt|_{V(T'_u)})
  \)
  \item
  Set
  $\mathsf{OPT} = \min \{\mathsf{OPT}, \wt(M) + \sum_{i \in [\lambda]} \Tab[u_i, O_i]\}$.
\end{enumerate}
\item
Return $\mathsf{OPT}$. 
\end{enumerate}

\begin{lemma}
  For every $d$-light vertex $v$ with at least one $d$-heavy child,
  $\Aheavyv$ is correct and runs in time
  $2^{d\ell} \cdot 2^{\OO(\ell^2\log \ell)} \cdot n^{\OO(1)}$.
\end{lemma}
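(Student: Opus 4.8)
\noindent\emph{Overview.} The plan is to prove three things: (i) every call to $\Aunmc$ is made on a tree with at most $\ell$ leaves, so that the invocations are legitimate and run in the time of \cref{thm:wt-mc-trees-l} (via \cref{rem:minimum-sol-for-uncon-multicut}); (ii) $\Aheavyv(u,Q)$ equals the minimum weight $\mathrm{opt}$ of a set $S\subseteq V(T_u)$ that is a $\calP|_{T_u}$-multicut and intersects the $(s,t)$-path of $T$ for every $(s,t)\in Q$ --- assuming, as in the bottom-up induction, that $\Tab[u_i,\cdot]$ is already correct for every $u_i\in N_u$ (each $u_i$ is a descendant of $u$); and (iii) the running time. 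For (i) I would record: $C_u$ is convex in $T$ (the unique $T$-path between two of its vertices stays inside $C_u$, as $C_u$ is a connected component of $T-\light(T,\calP,d)$), and $v\notin C_u$ since $v$ is $d$-light, so $C_u\subseteq V(T_u)$; any vertex of $V(T_u)\setminus C_u$ is separated from $u$ by a $d$-light vertex, the topmost one lying in $N(C_u)\setminus\{v\}=N_u$, while every vertex of $N_u$ is a descendant of $u$. This yields the partition $V(T_u)=C_u\uplus\biguplus_{i\in[\lambda]}V(T_{u_i})$, hence $T'_u=T[C_u]$, and $T_u=T[C_u]$ when $N_u=\varnothing$. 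Finally $T[C_u]$ is obtained from $T[N[C_u]]$ by deleting the vertices of $N(C_u)$, which are all leaves of $T[N[C_u]]$ (each joined to $C_u$ by exactly one edge), and deleting leaves never increases the leaf count; as $T[N[C_u]]$ has at most $\ell$ leaves, so do $T'_u$ and $T_u$, and $\lambda=|N_u|\le|N(C_u)|\le\ell$.

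\noindent\emph{Correctness: the output is at least $\mathrm{opt}$.} If $N_u=\varnothing$, a set $S\subseteq V(T_u)$ intersects the $(s,t)$-path of $T$, for $(s,t)\in Q$ with $s\in V(T_u)$, if and only if it intersects the $(s,u)$-path, which is that path's portion inside $T_u$ (it exits through the root $u$); so the instance passed to $\Aunmc$ is precisely the one defining $\Aheavyv(u,Q)$. If $N_u\neq\varnothing$, fix a distribution $\pi=(O_1,\dots,O_\lambda)$, let $S_0\subseteq V(T'_u)$ be a minimum-weight $\Pairs'_{u,\pi}$-multicut of $T'_u$, and for each $i$ let $S_i\subseteq V(T_{u_i})$ realise $\Tab[u_i,O_i]$. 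Then $S:=S_0\cup\bigcup_i S_i$ has weight $\wt(S_0)+\sum_i\Tab[u_i,O_i]$ by the partition above, which is exactly the value the algorithm considers at $\pi$. One checks $S$ is a $\calP|_{T_u}$-multicut intersecting every pair of $Q$: a pair contained in some $T_{u_i}$ is intersected by $S_i$; a pair in some $O_i$ is intersected by $S_i$; and a remaining pair lies in $(Q\cup\calP|_{T_u})\setminus\bigcup_i(\calP|_{T_{u_i}}\cup O_i)=\Pairs_{u,\pi}$, so by the projection lemma below $S_0$ --- which intersects the $(\tau_u(s),\tau_u(t))$-path of $T'_u$ and is contained in $C_u$ --- also intersects the $(s,t)$-path of $T$. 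Hence every value the algorithm considers is at least $\mathrm{opt}$, so $\Aheavyv(u,Q)\ge\mathrm{opt}$.

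\noindent\emph{Correctness: the output is at most $\mathrm{opt}$, and the main obstacle.} Take an optimal $S$, set $S_i=S\cap V(T_{u_i})$, $S_0=S\cap C_u$, and $O_i=\{(s,t)\in O[u_i]\;:\;S_i\text{ intersects the }(s,t)\text{-path of }T\}$; this $\pi=(O_1,\dots,O_\lambda)$ is one of the enumerated distributions. Each $S_i$ is a $\calP|_{T_{u_i}}$-multicut that intersects every pair of $O_i$, so $\wt(S_i)\ge\Tab[u_i,O_i]$ by correctness of $\Tab$; and $S_0\subseteq V(T'_u)$ is a $\Pairs'_{u,\pi}$-multicut --- the crucial point, argued via the projection lemma --- so $\wt(S_0)$ is at least the weight $\Aunmc$ returns for $\Pairs'_{u,\pi}$. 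Summing, $\mathrm{opt}=\wt(S)\ge\Aheavyv(u,Q)$. The projection lemma I rely on states: for every $(s,t)\in\Pairs_{u,\pi}$, the portion of the $(s,t)$-path of $T$ lying in $C_u$ is exactly the (nonempty) $(\tau_u(s),\tau_u(t))$-path of $T'_u=T[C_u]$, and $S$ intersects the $(s,t)$-path of $T$. Proving it is the main obstacle and rests on the case analysis that a $T$-path meets each $T_{u_i}$ in at most one sub-path, entered and left only through $u_i$; consequently, if the $(s,t)$-path met $T_{u_i}$ with both endpoints outside it would traverse the edge above $u_i$ twice, which is impossible, so a meeting forces $(s,t)\in O[u_i]$, and then $(s,t)\notin O_i$ (since $(s,t)\in\Pairs_{u,\pi}$) forces $S_i$ not to intersect the path; hence all intersections of the $(s,t)$-path with $S$ lie in $C_u$, and the fact that the $C_u$-portion of the path is precisely the projected path is what ties $S_0$ to the $\Aunmc$ call in both directions.

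\noindent\emph{Running time.} If $N_u=\varnothing$, the algorithm makes one call to $\Aunmc$ on $T_u$ (at most $\ell$ leaves), costing $2^{\OO(\ell^2\log\ell)}\cdot n^{\OO(1)}$ by \cref{rem:minimum-sol-for-uncon-multicut}. If $N_u\neq\varnothing$, each $u_i$ is $d$-light so $|O[u_i]|\le d$, giving at most $2^d$ choices for $O_i$ and, with $\lambda\le\ell$, at most $2^{d\ell}$ distributions; for each the algorithm does polynomial bookkeeping to form $\Pairs_{u,\pi}$, $\Pairs'_{u,\pi}$ and $\tau_u$, one $\Aunmc$ call on $T'_u$ (at most $\ell$ leaves, cost $2^{\OO(\ell^2\log\ell)}\cdot n^{\OO(1)}$), and $\lambda$ look-ups in $\Tab$. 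Altogether the running time is $2^{d\ell}\cdot 2^{\OO(\ell^2\log\ell)}\cdot n^{\OO(1)}$, as claimed.
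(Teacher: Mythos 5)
Your proposal is correct and follows essentially the same route as the paper's proof: the same split into the $N_u=\emptyset$ case and the distribution enumeration, the same two inequalities (optimal $S$ induces a distribution whose value is at most $\wt(S)$, and every enumerated value is the weight of a feasible solution glued from $M^*$ and the sets realising $\Tab[u_i,O_i]$), and the same $2^{d\ell}\cdot 2^{\OO(\ell^2\log\ell)}\cdot n^{\OO(1)}$ accounting. The only differences are presentational: you isolate the paper's inline case analysis as an explicit ``projection lemma'' and spell out the structural facts ($V(T_u)=C_u\uplus\bigcup_i V(T_{u_i})$, the leaf bounds on $T_u$ and $T'_u$, $\lambda\le\ell$) that the paper uses implicitly, which if anything makes the argument slightly tighter.
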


\begin{proof}
  Let $v$ be a $d$-light vertex in $T$ 
  with at least one $d$-heavy child and let $u$ be one such child of $v$.
  Consider a set $Q \subseteq O[u]$.
  If $N_u = \emptyset$ then, 
  denoting by $M = \Aunmc(T_u,\calP|_{T_u} \cup \{(s,u)~|~(s,t) \in Q \text{ and } s \in V(T_u)\},\wt|_{V(T_u)})$, 
  it is clear that $M$ is a subset of $V(T_u)$ of minimum weight such that
  $M$ is a $\calP|_{T_u}$-multicut and for every $(s,t) \in Q$,
  $M$ intersects the $(s,t)$-path in $T$. 
  Thus, $\Aheavyv$ indeed outputs the correct answer in this case.
  
  Suppose next that $N_u = \{u_1,\ldots,u_\lambda\}$ for some $\lambda \in [\ell]$,
  and assume that for every $i\in [\lambda]$,
  $\Tab[u_i, \cdot]$ is correctly filled.
  Let $S \subseteq V(T_u)$ be a set of minimum weight such that
  $S$ is a $\calP|_{T_u}$-multicut and for every $(s,t) \in Q$,
  $S$ intersects the $(s,t)$-path in $T$.
  For every $i \in [\lambda]$, let $S_i = S \cap V(T_{u_i})$
  and let $O_i \subseteq O[u_i]$ be the set of terminal pairs $(s,t)$
  such that $S_i$ intersects the $(s,t)$-path in $T$.

  Since the algorithm loops over every distribution,
  it considers at some point the distribution $\pi = (O_1,\ldots,O_\lambda)$:
  let $M^*= \Aunmc(T'_u, \Pairs'_{u,\pi}, \wt|_{V(T'_u)})$
  where $\Pairs'_{u,\pi}$
  and $T'_u$ are as defined in the algorithm.
  We aim to show that $\wt(S) = \wt(M^*) + \sum_{i \in [\lambda]} \Tab[u_i, O_i]$.

  To this end, let us show that $S^* = S \setminus \bigcup_{i \in [\lambda]} S_i$
  is a $\Pairs'_{u,\pi}$-multicut in $T'_u$.
  Consider a terminal pair
  $(s,t) \in \Pairs'_{u,\pi}$.
  If $(s,t) \in \Pairs_{u,\pi} \cap \Pairs'_{u,\pi}$,
  then $S^*$ intersects the $(s,t)$-path in $T'_u$
  since $S$ is a $\calP|_{T_u}$-multicut
  and $s,t \in V(T_u) \setminus \bigcup_{i \in [\lambda]} V(T_{u_i})$.

  Suppose next that $(s,t) \in \Pairs'_{u,\pi} \setminus \Pairs_{u,\pi}$.
  Then $(s,t)$ corresponds to a terminal pair
  $(a,b) \in \Pairs_{u,\pi} \setminus \Pairs'_{u,\pi}$
  such that $\tau_u(a)=s$ and $\tau_u(b)=t$. 
  Since by construction, $S$ intersects the $(a,b)$-path in $T$ and
  for every $i \in [\lambda]$, $(a,b) \notin O_i$,
  it follows that $S^*$ intersects the $(a,b)$-path in $T$
  and, a fortiori, the $(s,t)$-path in $T'_u$.

  Therefore, $S^*$ is a $\Pairs'_{u,\pi}$-multicut in $T'_u$ as claimed;
  in particular, $\wt(M^*) \leq \wt(S^*)$ by minimality of $M^*$.
  Now observe that for every $i \in [\lambda]$,
  $S_i$ is a $\calP|_{T_{u_i}}$-multicut such that for every $(s,t) \in O_i$,
  $S_i$ intersects the $(s,t)$-path in $T$,
  and thus, $\wt(S_i) \geq \Tab[u_i, O_i]$
  as $\Tab[u_i, \cdot]$ is correctly filled by assumption.
  It follows that $\wt(M^*) + \sum_{i \in [\lambda]} \Tab[u_i, O_i] \leq \wt(S)$:
  indeed, since $\wt(M^*) \leq \wt(S^*)$ and
  $\wt(S^*) = \wt(S) - \sum_{i \in [\lambda]} \wt(S_i)$,
  \[
    \wt(M^*) \leq \wt(S) - \sum_{i \in [\lambda]} \wt(S_i)
      \leq \wt(S) - \sum_{i \in [\lambda]} \Tab[u_i, O_i].
  \]

  To prove the converse inequality, for every $i \in [\lambda]$,
  let $S^*_i$ be a set of minimum weight such that
  $S^*_i$ is a $\calP|_{T_{u_i}}$-multicut
  and for every $(s,t) \in O_i$, $S^*_i$ intersects the $(s,t)$-path in $T$.
  We contend that $M = M^* \cup \bigcup_{i \in [\lambda]} S^*_i$
  is a $\calP|_{T_u}$-multicut such that for every $(s,t) \in Q$,
  $M$ intersects the $(s,t)$-path in $T$.

  Indeed, consider a terminal pair $(s,t) \in \calP|_{T_u} \cup Q$.
  If $(s,t) \in \calP|_{T_{u_i}}$ for some $i \in [\lambda]$,
  then $M$ intersects the $(s,t)$-path in $T$
  since $S^*_i$ is a $\calP|_{T_{u_i}}$-multicut by definition.
  Similarly, if $(s,t) \in O_i$ for some $i \in [\lambda]$,
  then $M$ intersects the $(s,t)$-path in $T$
  since $S^*_i$ intersects this path by definition.
  Thus, let us assume that
  $(s,t) \in \Pairs_{u,\pi}$.
  Suppose first that $\{s,t\} \cap V(T_{u_i}) \neq \emptyset$ for some $i \in [\lambda]$,
  say $s \in V(T_{u_i})$ without loss of generality.
  If $t \in V(T_{u_j})$ for some $j \in [\lambda]$, then $(p_i,p_j) \in \Pairs'_{u,\pi}$
  and so, $M$ intersects the $(s,t)$-path in $T$
  since $M^*$ intersects the $(p_i,p_j)$-path in $T'_u$ by definition.
  If $t \in V(T_u) \setminus \bigcup_{j \in [\lambda]} V(T_{u_j})$,
  then $(p_i,t) \in \Pairs'_{u,\pi}$
  and so, $M$ intersects the $(s,t)$-path in $T$
  since $M^*$ intersects the $(p_i,t)$-path in $T'_u$ by definition.
  Finally, if $t \in V(T) \setminus V(T_u)$,
  then $(p_i,u) \in \Pairs'_{u,\pi}$
  and so, $M$ intersects the $(s,t)$-path in $T$
  since $M^*$ intersects the $(p_i,u)$-path in $T'_u$ by definition.

  Second, suppose that $\{s,t\} \cap V(T_{u_i}) = \emptyset$ for every $i \in [\lambda]$.
  If $s,t \in V(T_u)$, then $M$ intersects the $(s,t)$-path in $T$
  since $M^*$ intersects this path by definition.
  Otherwise, exactly one of $s$ and $t$ belongs to $V(T_u)$,
  say $s \in V(T_u)$ without loss of generality,
  in which case $(s,u) \in \Pairs'_{u,\pi}$
  and so, $M$ intersects the $(s,t)$-path in $T$
  since $M^*$ intersects the $(s,u)$-path in $T'_u$ by definition.
  Thus, $M$ is as claimed.

  By minimality of $S$, it follows that $\wt(M) \geq \wt(S)$;
  but $\wt(M) = \wt(M^*) + \sum_{i \in [\lambda]} \Tab[u_i, O_i]$,
  as for every $i \in [\lambda]$, $\Tab[u_i, \cdot]$ is correctly filled,
  and thus, the converse inequality holds as well.
  Combined with the above, we conclude that the algorithm $\Aheavyv$ is correct.

  Finally, let us argue that the algorithm $\Aheavyv$ runs in
  $2^{d\ell} \cdot 2^{\OO(\ell^2\log \ell)} \cdot n^{\OO(1)}$-time.
  First, Step \ref{step:algoHeavy:Nuempty} can be done in \timewmctrees-time by Theorem~\ref{thm:wt-mc-trees-l} 
  since, in this case, $T_u$ has at most $\ell$ leaves.
  Otherwise, observe that for a fixed distribution,
  the most computationally demanding step is the call to $\Aunmc$
  in Step~\ref{step:algoHeavy:computingSolution},
  which takes \timewmctrees-time by Theorem~\ref{thm:wt-mc-trees-l},
  as $T'_u$ has at most $\ell$ leaves.
  Now note that there are at most $2^{d\ell}$ distributions to consider:
  indeed, $|N_u| \leq \ell$ and for every $i \in [\lambda]$,
  $|2^{O[u_i]}| \leq 2^d$ as $u_i$ is $d$-light.
  Thus, Step~\ref{step:algoHeavy:loop} takes $2^{d\ell} \cdot 2^{\OO(\ell^2\log \ell)} \cdot n^{\OO(1)}$-time
  in total and so, the algorithm indeed runs in the stated time.
\end{proof}

\section{Future Directions}\label{sec:conclusion}

The natural question to ask is, whether the running times of our algorithms can be improved.
Faster algorithms for the arcless instances 
in~\cite{kim2021directed},
directly yield faster algorithms for \wmctrees\ parameterized by the number of leaves.
Another interesting question is to determine the parameterized complexity of the (bi-objective) \wMCshort\ problem
with respect to structural parameters such as the number of leaves. 
There it seems difficult to use the flow augmentation technique from~\cite{kim2021directed},
since such a step takes exponential time in the solution size,
but the number of the leaves in the input may be much smaller than the solution size.

Another interesting follow up question is to determine if one can use the directed flow augmentation to resolve the parameterized complexity of
{\sc Weighted Steiner Multicut} on trees,
where given a tree $T$, sets $P_1, \ldots, P_r \subseteq V(T)$ each of size $p \geq 1$,
a weight function $\wt : V(T) \to \mathbb{N}$ and
positive integers $\ww,\kk$,
the goal is to determine if there exists a set $S \subseteq V(T)$ such that
$|S| \leq \kk$, $\wt(S) \leq \ww$ and for each $i \in [q]$
there exists $u_i,v_i \in P_i$ such that $T-S$ has no $(u_i,v_i)$-path.
Observe that \bimctrees\ is a special case of this problem when $p=2$.


\subsubsection*{Acknowledgements.}
Research supported by the European Research Council (ERC) consolidator grant No.~725978 SYSTEMATICGRAPH.
Philipp Schepper is part of Saarbrücken Graduate School of Computer Science, Germany.

\bibliographystyle{plainnat}
\bibliography{references}

\end{document}